\DeclareFontFamily{U}{mathb}{}
\DeclareFontShape{U}{mathb}{m}{n}{
      <-6> mathb5
      <6-7> mathb6
      <7-8> mathb7
      <8-9> mathb8
      <9-10> mathb9
      <10-12> mathb10
      <12-> mathb12
}{}
\DeclareSymbolFont{mathb}{U}{mathb}{m}{n}
\DeclareMathSymbol{\drsh}{3}{mathb}{"EB}
\theoremstyle{plain}
\newtheorem{theorem}{Theorem}[section]
\newtheorem{proposition}[theorem]{Proposition}
\newtheorem{lemma}[theorem]{Lemma}
\newtheorem{example}[theorem]{Example}
\theoremstyle{definition}
\newtheorem{definition}[theorem]{Definition}
\newtheorem{assumption}[theorem]{Assumption}
\def\1{\bm{1}}
\def\vzero{{\bm{0}}}
\def\vone{{\bm{1}}}
\def\vbeta{{\bm{\beta}}}
\def\va{{\bm{a}}}
\def\vb{{\bm{b}}}
\def\vc{{\bm{c}}}
\def\vd{{\bm{d}}}
\def\ve{{\bm{e}}}
\def\vf{{\bm{f}}}
\def\vp{{\bm{p}}}
\def\vq{{\bm{q}}}
\def\vs{{\bm{s}}}
\def\vu{{\bm{u}}}
\def\vv{{\bm{v}}}
\def\vw{{\bm{w}}}
\def\vx{{\bm{x}}}
\def\vy{{\bm{y}}}
\def\vz{{\bm{z}}}
\def\evbeta{{\beta}}
\def\eva{{a}}
\def\evc{{c}}
\def\evd{{d}}
\def\evf{{f}}
\def\evp{{p}}
\def\evq{{q}}
\def\evu{{u}}
\def\evv{{v}}
\def\evx{{x}}
\def\evy{{y}}
\def\evz{{z}}
\def\mA{{\bm{A}}}
\def\mB{{\bm{B}}}
\def\mF{{\bm{F}}}
\def\mLambda{{\bm{\Lambda}}}
\def\mSigma{{\bm{\Sigma}}}
\DeclareMathAlphabet{\mathsfit}{\encodingdefault}{\sfdefault}{m}{sl}
\SetMathAlphabet{\mathsfit}{bold}{\encodingdefault}{\sfdefault}{bx}{n}
\def\gA{{\mathcal{A}}}
\def\gG{{\mathcal{G}}}
\def\gK{{\mathcal{K}}}
\def\gM{{\mathcal{M}}}
\def\gN{{\mathcal{N}}}
\def\gO{{\mathcal{O}}}
\def\gP{{\mathcal{P}}}
\def\gQ{{\mathcal{Q}}}
\def\gW{{\mathcal{W}}}
\def\gX{{\mathcal{X}}}
\def\gY{{\mathcal{Y}}}
\def\gZ{{\mathcal{Z}}}
\def\sR{{\mathbb{R}}}
\def\emLambda{{\Lambda}}
\def\emSigma{{\Sigma}}
\DeclareMathOperator*{\argmax}{arg\,max}
\DeclareMathOperator*{\argmin}{arg\,min}
\newcommand{\pluseq}{\mathrel{{+}{=}}}
\newcommand*{\T}{\mathsf{T}}
\DeclareMathOperator\supp{supp}
\newcommand{\interior}[1]{%
  {\kern0pt#1}^{\mathrm{o}}%
}
\newcommand\extrafootertext[1]{%
    \bgroup
    \renewcommand\thefootnote{\fnsymbol{footnote}}%
    \renewcommand\thempfootnote{\fnsymbol{mpfootnote}}%
    \footnotetext[0]{#1}%
    \egroup
}
\titlespacing*{\section}{0pt}{5pt plus 1pt minus 1pt}{2pt plus 1pt minus 1pt}
\titlespacing*{\subsection}{0pt}{4pt plus 1pt minus 1pt}{2pt plus 1pt minus 1pt}
\titlespacing*{\subsubsection}{0pt}{4pt plus 1pt minus 1pt}{1pt plus 1pt minus 1pt}
\def\@listi{\leftmargin\leftmargini}
\def\@listii{\leftmargin\leftmarginii
   \labelwidth\leftmarginii\advance\labelwidth-\labelsep
   \topsep 2pt plus 1pt minus 0.5pt
   \parsep 1pt plus 0.5pt minus 0.5pt
   \itemsep \parsep}
\def\@listiii{\leftmargin\leftmarginiii
    \labelwidth\leftmarginiii\advance\labelwidth-\labelsep
    \topsep 1pt plus 0.5pt minus 0.5pt
    \parsep \z@ \partopsep 0.5pt plus 0pt minus 0.5pt
    \itemsep \topsep}
\def\@listiv{\leftmargin\leftmarginiv
     \labelwidth\leftmarginiv\advance\labelwidth-\labelsep}
\def\@listv{\leftmargin\leftmarginv
     \labelwidth\leftmarginv\advance\labelwidth-\labelsep}
\def\@listvi{\leftmargin\leftmarginvi
     \labelwidth\leftmarginvi\advance\labelwidth-\labelsep}
\title{\textbf{\Large{
Differentiable Bilevel Programming for Stackelberg Congestion Games}}}
\author
{\normalsize
Jiayang Li$^{1}$
\qquad Jing Yu$^{1}$
\qquad Qianni Wang$^{1}$
\qquad Boyi Liu$^{2}$\\
\normalsize
\qquad Zhaoran Wang$^{2}$
\qquad Yu (Marco) Nie$^{1,*}$
}
\date{}
\begin{document}

\extrafootertext{$^1$Department of Civil and Environmental Engineering, Northwestern University, Evanston, IL 60208, USA. $^2$Department of Industrial Engineering and Management Science, Northwestern University, Evanston, IL 60208, USA. $^*$Corresponding author; email: \texttt{y-nie@northwestern.edu}.
}

\maketitle

\begin{abstract}

In a Stackelberg congestion game (SCG), a leader aims to maximize their own gain by anticipating and manipulating the equilibrium state at which the followers settle by playing a congestion game. Often formulated as bilevel programs, large-scale SCGs are well known for their intractability and complexity. Here, we attempt to tackle this computational challenge by marrying traditional methodologies with the latest differentiable programming techniques in machine learning. The core idea centers on replacing the lower-level equilibrium problem with a smooth evolution trajectory defined by the imitative logit dynamic (ILD), which we prove converges to the equilibrium of the congestion game under mild conditions.  Building upon this theoretical foundation, we propose two new local search algorithms for SCGs. 
The first is a gradient descent algorithm that obtains the derivatives by unrolling  ILD  via differentiable programming. Thanks to the smoothness of ILD, the algorithm promises both efficiency and scalability. The second algorithm adds a heuristic twist by cutting short the followers’ evolution trajectory. Behaviorally, this means that, instead of anticipating the followers' best response at equilibrium, the leader seeks to approximate that response by only looking ahead a limited number of steps.  Our numerical experiments are carried out over various instances of classic SCG applications, ranging from toy benchmarks to large-scale real-world examples.  The results show the proposed algorithms are reliable and scalable local solvers that deliver high-quality solutions with greater regularity and significantly less computational effort compared to the many incumbents included in our study.
\end{abstract}

\section{Introduction}
\label{sec:introduction}

In a Stackelberg game, a leader aims to maximize their own gain by manipulating the self-interested followers of the game \citep{von1952theory, sherali1983stackelberg}. Such a game has a natural \textit{bilevel} hierarchy. At the upper level, the leader's decision is contingent upon the response of the followers; at the lower level, the response of the followers is regulated by that very decision. Many real-world applications arising from transportation systems can be framed as a Stackelberg game in which a leader aims to improve the performance of a transportation system used by travelers (followers).
These applications range from network pricing \citep[e.g.,][]{dafermos1973toll,smith1979marginal,brotcorne2001bilevel, verhoef2002second,friesz2004dynamic,roch2005approximation,simoni2019congestion,lazar2019optimal,li2021traffic,delle2021pricing} and network design \citep[e.g.,][]{leblanc1975algorithm,smith1979traffic,marcotte1992efficient, yang1998models,meng2001equivalent, zhang2009active,li2012global,chen2016optimal,chen2017optimal} to traffic control \citep[e.g.,][]{gartner1985demand,improta1987mathematical,cantarella1987methods, smith1993traffic,yang1994traffic,peeta1995multiple, yang2007stackelberg,levin2016cell,lu2019trajectory}. 
In these applications, the followers' best response is usually characterized as a Wardrop equilibrium (WE) of a nonatomic congestion game, which dictates no traveler can reduce their travel time by adjusting travel choices unilaterally  \citep{wardrop1952road}. Hence, a \emph{Stackelberg congestion game} (SCG), the focus of the present study, is equivalent to a bilevel program constrained by a Wardrop equilibrium, a special type of \emph{Mathematical Program with Equilibrium Constraints} (MPEC) \citep{luo1996mathematical}.

\subsection{Challenges}

{Bilevel programs are notorious for their complexity. \citet{jeroslow1985polynomial} showed that bilevel linear programs (BLP) are NP-hard.  This result was later refined by \citet{vicente1994descent}, who proved that just verifying whether a given solution to a BLP is a local minimum is NP-hard. As the WE problem can be formulated as a general convex program under standard assumptions, SCGs would be equally hard, if not harder, compared to a BLP.  Indeed, previous studies have confirmed the strong NP-hardness of many variants of SCGs \citep[see, e.g.,][]{roch2005approximation,10.1007/978-3-540-92185-1_35, bhaskar2013, gairing2017complexity}. Given this inherent complexity, finding an ``exact" solution for SCGs is computationally prohibitive except for small instances or problems with special structures \citep[see, e.g.,][for some SCGs that can be solved in polynomial time]{marcotte1986network, gairing2017complexity}.

To solve an SCG, a standard approach is to convert the lower-level WE problem into {explicit} constraints. For example, the so-called KKT method \citep{bard1982explicit} replaces the lower-level problem with its equivalent KKT conditions, which can be transformed into linear or nonlinear constraints by introducing auxiliary integer variables. Another popular method \citep[e.g.,][]{marcotte1983network}  reformulates the lower level as a variational inequality problem (VIP). This turns the SCG into a semi-infinite program (SIP) since a VIP may be viewed as containing an {infinite} number of inequalities.  Under appropriate conditions, the resulting SIP's feasible set can be approximated by a {finite} number of ``cuts," which may be iteratively generated \citep{lawphongpanich2004mpec}. Thus,  one way or the other, the basic idea is to recast the SCG as a single-level program,  thereby opening the door to a broader array of solution techniques. However,  these single-level equivalent problems often face their own challenges. For example, the KKT method has to deal with many integer variables. Likewise, the performance of the cutting plane method is often severely compromised by the fact that the cuts needed to approximate the VIP are not only numerous but also highly nonlinear. 

For SCGs of considerable sizes,   a good stationary point is often the best that one could hope to achieve.
Yet, even this modest goal can be difficult to attain. Standard gradient-based algorithms, for example, require repeatedly computing the derivative of the leader's objective --- which depends on the best response of the followers, or the WE of the congestion game ---  with respect to the decisions. In literature, this task is often done by performing a sensitivity analysis of the lower-level WE problem \citep{friesz1990sensitivity, yang1998models}.  Despite its popularity, the sensitivity analysis-based (SAB) method is not sufficiently effective for large SCGs.  
The primary reason is that such analysis is typically carried out {implicitly} on the equilibrium conditions, which requires storing and inverting matrices \citep{tobin1988sensitivity, dafermos1988sensitivity, yang2007sensitivity}.  These matrices can have hundreds of thousands of rows and/or columns, if not more, in SCGs that arise from real-world applications.  }

\subsection{Motivation}
Recently, the machine learning (ML) community has discovered several new applications of bilevel programming.  A common feature in these bilevel programs is that their lower level is a deep learning problem \citep{goodfellow2016deep}. In other words, they can be interpreted as a Stackelberg game in which the follower trains a deep neural network (DNN) {according to the instruction of a leader}. { 
This training process, encoded as a numeric computer program, can be differentiated using \textit{differentiable programming}, a computational paradigm that finds the gradient of an algorithmic structure based on automatic differentiation (AD).} AD exploits the fact that a computer program, however complicated it may be, can be decomposed into a sequence of elementary arithmetic operations and functions \citep{baydin2018automatic}. By applying the chain rule recursively to these operations and functions, the derivative of the program's outputs with respect to any inputs can be computed automatically and efficiently
\citep{griewank1989automatic}. The concept of differentiable programming is general and flexible. For example, DNN itself is a differentiable program. In fact, the ability to quickly ``unroll" --- i.e., calculate the gradient for --- extremely complex DNN using AD was a key to the success of modern ML enterprises \citep{rumelhart1986learning, lecun1996effiicient}. However, one may also view an algorithm that trains a DNN as a differentiable program and then unrolls it with AD.

{Having witnessed the power of AD to tackle ML-inspired bilevel programs, we set out here to investigate whether that power can be harnessed for solving SCGs. Our hypothesis is that it may be much more efficient to obtain a descent direction for SCGs with an AD-based method --- \emph{by representing the solution to its lower-level WE problem as the output of a differentiable program (DiP)} --- than conventional sensitivity analysis-based (SAB) methods. Intuitively, this DiP may be coded according to an algorithm devised to solve the WE problem. 
However, not every algorithm can fulfill this function, as DiP requires smoothness. For example, the algorithms relying on the shortest path routine \citep[e.g., the algorithm by][]{frank1956algorithm}  evidently violate this condition. \citet{li2020end} proposed to cast the Euclidean projection algorithm \citep{bertsekas1982projection} as a DiP.  Yet, AD is not directly applicable in this case either, because the resulting DiP consists of a series of quadratic programs for which no analytic solutions are available. \citet{li2020end}  bypassed this obstacle by employing a recent DiP toolbox \citep{agrawal2019differentiable} that can handle convex programs. However, since the toolbox relies on implicit differentiation to compute the derivatives of convex programs, it encounters the same scalability issue that has haunted the SAB methods.  Therefore, while \citet{li2020end} offered a conceptual innovation, their method was not a breakthrough in terms of computation. 
}

\subsection{Main contributions}

{
In view of the computational challenges that have stalled the real-world application of SCGs, we aim to boost the scalability of local search algorithms for SCGs by leveraging the power of AD.

To this end, we first develop a DiP formulation of WE that can be efficiently unrolled with AD. 
Specifically, we identify a class of zeroth-order algorithms originating from evolutionary game theory \citep{weibull1997evolutionary, sandholm2015population} as promising candidates. In transportation literature, such algorithms are typically used to explain whether and how WE can be reached by travelers with realistic behaviors.  
In particular, we show that the imitative logit dynamical (ILD) model \citep{bjornerstedt1994nash} is a good fit because it  (i) guarantees convergence toward WE under mild conditions, and (ii) scales well in  AD-based unrolling.  

We propose two new local search algorithms for solving SCGs. 
The first is a gradient descent algorithm implemented using the ILD-based approach. In each iteration, it first runs the ILD process until reaching a sufficiently precise WE, before calculating the gradient of the leader's cost through AD. The leader's decision is then updated using the gradient information and sent back to the lower level to start another iteration. Although the algorithm has a structure similar to that of SAB methods, it promises better scalability and greater efficiency because the ILD-based differentiable program enables the use of AD.  However, the algorithm's commitment to reaching WE in each iteration presents a potential challenge: on large-scale and/or highly congested problems for which equilibration tends to take many iterations, it could create computational graphs too ``deep" to unroll quickly even with AD. The second algorithm is proposed to address this challenge by adding a heuristic twist. It takes an evolutionary rather than an equilibrium approach to interpreting the followers’ behaviors. That is, instead of anticipating the followers' response at equilibrium, the leader may only look ahead along the followers' evolution path for a few steps. With such  \emph{limited anticipation}, the leader need not wait until the followers settle at a new equilibrium to correct course. Instead, the two parties can \emph{co-evolve}, meaning they each update decisions simultaneously in every step of a shared evolution process.  Short-term anticipation limits the depth of the computational graph through which the gradient of the leader's objective is evaluated. Co-evolution, on the other hand, breaks the bilevel hierarchy and turns the solution process into a single loop.

Against numerous benchmark methods from the literature, we test the proposed algorithms on two classic applications, the continuous network design problem (CNDP) and the second-best congestion tolling problem (SCTP).  A wide range of networks, including a real-world network with up to nearly 100,000 origin-destination (OD) pairs, are used in these experiments.
We find that the first algorithm consistently delivers solutions as good as any benchmarks included in the experiments. The SAB method can match it in solution quality but falls far behind in scalability and numerical
stability, especially on large networks. The second algorithm closely and reliably tracks the performance of the first in solution quality despite the heuristic nature. Importantly, with a computation effort comparable to those of most mainstream heuristics, it is capable of providing solutions of high quality with greater regularity.
}

\subsection{Organization}

{The rest of the paper is organized as follows. Section \ref{sec:related} discusses related studies. Section \ref{sec:formulation} provides a general formulation for SCGs in the form of bilevel programs. Section \ref{sec:day-to-day} describes ILD and establishes conditions under which it converges to a WE. Section \ref{sec:ad-gradient} studies how to leverage AD to unroll ILD. The two proposed algorithms are presented in Sections \ref{sec:double} and \ref{sec:single-loop}. Results of numerical experiments are reported in Section \ref{sec:experiments}. Eventually, Section \ref{sec:conclusion} concludes the paper with a summary of the main results and future directions.}

\subsection{Notation}

We use $\sR$ and $\sR_+$ to denote the set of real numbers and non-negative real numbers.
For a vector $\va \in \sR^n$, we denote its $\ell_1$, $\ell_2$ and $\ell_{\infty}$ norms as $\|a\|_1$, $\|a\|_2$ and $\|a\|_{\infty}$ and its support as $\supp{(\va)} = \{i: \eva_i > 0\}$.
For two vectors $\va, \vb \in \sR^n$, their inner product is denoted as $\langle \va, \vb \rangle$. 
For a matrix $\mA \in \sR^{n \times m}$, we write $\text{nnz}(\mA)$ as the number of nonzero elements in matrix $\mA$. For a finite set $\gA$, we write $|\gA|$ as the number of elements in $\gA$ and $2^{\gA}$ as the set of all subsets of $\gA$. 

\section{Related Work}
\label{sec:related}

{ 

SCG has many applications across various disciplines. We motivate our study with two classic examples in transportation: (i) the continuous network design problem (CDNP) \citep{abdulaal1979continuous}, which seeks a socially optimal road expansion policy, and (ii) the second-best congestion tolling problem (SCTP), which aims to set tolls at selected locations in a network to minimize total travel delay \citep{verhoef2002second}.
The reader is referred to \citet{migdalas1995bilevel} and \citet{farahani2013review} for other common applications. %

In what follows, we confine our focus on related algorithms, starting with sensitivity analysis-based (SAB) methods (Section \ref{sec:sab}) and heuristic methods (Section \ref{sec:heuristics}) for SCGs, as the former's conceptual similarities with our algorithms and the latter's strong scalability make them natural benchmarks for comparison.  In Section \ref{sec:ml-bilevel}, we introduce the recent effort in developing AD-based bilevel programming algorithms.  }

\subsection{SAB algorithms for SCGs}
\label{sec:sab}

{

An SAB algorithm calculates the gradient of an SCG by examining the lower-level problem's sensitivity. With the ability to compute the gradient at any given feasible point, the algorithm iteratively descends towards a local stationary point along the direction of the negative gradient. It is worth noting that a stationary point is not necessarily a local optimum --- verifying that this is indeed the case is NP-hard  \citep{vicente1994descent}. Despite the theoretical difficulty in guaranteeing solution quality, SAB algorithms are widely considered a competitive SCG solution approach capable of delivering satisfactory solutions in practical applications \citep[see, e.g.,][]{yang1994traffic,chiou2005bilevel,zhang2018mitigating,zhang2019pool}.

There exist a variety of sensitivity analysis methods for the WE problem \citep[see, e.g.,][Chapter 4, for an overview]{yang2005mathematical}. The most popular method operates on the route-based variational inequality problem (VIP) formulation of the WE problem. When the WE problem has a unique route flow solution,   sensitivity may be obtained by performing {implicit differentiation} on the KKT conditions \citep{tobin1986sensitivity} or the fixed-point condition \citep{dafermos1988sensitivity}. One way or the other, a square matrix must be inverted whose size grows \textit{quadratically} with the size of the decision variable of VIP (controlled by the number of used routes at WE).

In networks of practical sizes, the WE problem usually has infinitely many route flow solutions \citep{sheffi1985urban}. The lack of uniqueness complicates the analysis because a ``representative" solution must be selected to properly define sensitivity.  
However, from the point of view of solving SCGs, it matters little which route flow solution is chosen as the representative provided the link flow solution is unique, which can be guaranteed under mild conditions. 
Building on \citet{tobin1986sensitivity}'s approach, \citet{tobin1988sensitivity} and \citet{yang2007sensitivity} each proposed a criterion for choosing a representative WE route flow.
While these approaches overcame the difficulty of non-unique route flow solutions,  their reliance on implicit differentiation remains a limiting factor for large-scale applications. 
Alternatively, \citet{patriksson2004sensitivity} suggested the curse of non-uniqueness in route flows can be bypassed by directly working on a link-based formulation. They demonstrate that the directional derivative of a (unique) WE link flow can be formulated as a distinct WE problem. Consequently, this derivative can be computed by solving the newly formulated WE problem. There is a caveat, however.  While the cost of computing a directional derivative at a specific direction is lower, getting the gradient typically requires repeating the calculation many times, one for each Cartesian coordinate in the feasible space. Moreover, computing a directional derivative still requires solving a WE problem, a computationally demanding task for large-scale applications.  This observation casts doubts on the overall performance of the method, especially when the upper-level problem has a large number of decision variables  \citep{yang2005mathematical}.

}

\subsection{Heuristic methods for SCGs}
\label{sec:heuristics}

{ 
Another class of methods for SCGs, broadly classified as ``heuristics" and popular for their scalability, tries to solve an easier problem whose solution is thought to approximate that of the SCG reasonably well. These heuristics are usually problem-specific, meaning their applicability may be restricted to problems with certain structures, and their solution quality may vary widely with problem settings. Given the vast literature on the topic, we confine our focus to a few popular heuristic algorithms for solving the problems that chiefly concern this study, namely CNDPs and SCTPs. 

For CNDPs, the iterative optimization-assignment (IOA) algorithm \citep{tan1979hybrid}, which iterates between finding the best response of the leader and that of the followers while fixing the decisions of the other player(s), is a practical solver \citep{universite1981design,friesz1985properties, marcotte1986network, marcotte1992efficient}. As it eliminates the dependency of the lower-level solution on the upper-level decision variables, IOA bypasses the need to ``differentiate through" the lower-level problem.
However, IOA does not solve the original Stackelberg game. Instead, as pointed out by \citet{fisk1984game} and \citet{friesz1985properties}, its underlying model is actually a Cournot game between the leader and the followers since the fixed point reached by IOA is a Nash-Cournot equilibrium from which neither player is incentivized to deviate. \citet{dantzig1979formulating} simplified a CNDP by assuming the leader and the followers work cooperatively to achieve the same objective. This leads to a single-level system optimal (SO) problem whose solution is taken as an approximate solution to the original CNDP.
There are two related schemes based on similar ideas, though their applicability requires a ``start-from-scratch" provision, which dictates the capacities of all links be allowed to be set to any number between 0 and infinity  \citep{marcotte1986network}. The first variant, proposed by \citet{marcotte1986network} and later dubbed the ``bring-to-equilibrium" (BTE) algorithm \citep{gairing2017complexity},  seeks to find the network capacity that induces the network flow pattern obtained from solving the SO problem. The other scheme, known as the uniform scaling algorithm \citep{gairing2017complexity}, scales the link capacities corresponding to the SO solution with a properly selected factor.
For CNDPs that satisfy the start-from-scratch requirement, \citet{marcotte1986network} proved the worst-case performance of IOA, SO, and BTE is linked to the price of anarchy (PoA) of the lower-level congestion game: the smaller the PoA, the better the approximation. \citet{gairing2017complexity} further showed that, for the same setting, the uniform-scaling algorithm and BTE achieve the same worst-case approximation guarantee, though a strictly better approximation guarantee can be achieved by consistently choosing the better of the solutions obtained by the two algorithms. 

Of the many heuristics developed for SCTPs, we focus on three recent additions \citep{harks2015computing}: the marginal cost tolling (MCT) algorithm, the exponential marginal cost difference tolling (EMCDT) algorithm, and the combinatorial tolling (CT) algorithm. MCT and EMCDT initialize tolls according to the {first-best} toll (or the marginal cost toll) and subsequently adjust them by comparing the WE solution induced by the current toll with the SO solution. CT starts from a no-toll solution and gradually increases link tolls, guided by the first-best tolls.   At each iteration of all three algorithms, the most costly task is computing the WE solution, for which highly efficient procedures exist.

}

\subsection{AD-based bilevel programming methods}
\label{sec:ml-bilevel}

Bilevel programming has found interesting applications in ML lately, ranging from hyperparameter optimization \citep{franceschi2018bilevel} and model-agnostic meta-learning \citep{finn2017model} to adversarial learning and neural architecture search \citep{liu2018darts}.
These ML-inspired bilevel programs typically have in their lower level a deep learning problem and are often solved by gradient descent algorithms, in which the gradient is evaluated either through implicit differentiation or AD \citep[see, e.g.,][for a survey]{liu2021investigating}.

{
The AD-based approach calculates the gradient of such bilevel programs by unrolling the deep neural network (DNN) training process, which technically solves the lower-level problem.}
In doing so, AD may be carried out fully  \citep[e.g.,][]{franceschi2017forward} or partially \citep[e.g.,][]{shaban2019truncated}. In both cases, the lower-level learning problem is solved exactly at first.  The difference between the full and the partial unrolling lies in the ``backward propagation" phase. Whereas full AD always unrolls the entire training process, partial AD ``truncates" it, i.e.,  unrolls only the tail of the process.   Truncation improves efficiency but may sacrifice accuracy. 
Pushing the idea of truncation to extreme yields a method called one-stage AD, which solves upper and lower levels in a single loop \citep{liu2018darts}. Specifically, whenever the trainable parameters in the learning problem are adjusted in one descent step, the algorithm unrolls it to obtain a gradient, which is then employed to update the upper-level decisions. Although the method performs well on many tasks \citep[e.g.,][]{luketina2016scalable, metz2016unrolled, finn2017model, liu2018darts}, its analytical properties have not been fully understood.

Inspired by the AD-based approach in ML, \citet{li2020end}   explored its application in SCGs.  { 
They represented the solution process for the lower-level WE problem by the Euclidean projection method \citep{bertsekas1982projection}, which equals solving a series of quadratic programs (QPs).  However, a QP cannot be directly unrolled by AD as it cannot be solved in closed form.  To overcome this hurdle, they turned to the CVXPYLayers package \citep{agrawal2019differentiable}. Instead of directly unrolling the QP, the package first solves the QP via CVX and then invokes implicit differentiation to obtain the gradient of its solution.  From the perspective of a user, CVXPYLayers casts a QP as a DiP and appears to automatically unroll the QP. Behind the scenes, however, the calculation done by implicit differentiation still suffers from the same scalability issue encountered by traditional SAB approaches such as those proposed by \citet{tobin1988sensitivity} and \citet{yang2007sensitivity} (see Section \ref{sec:sab}). Indeed, the computation effort needed to differentiate a single layer (iteration) of the Euclidean projection method is about the same as that of implicitly differentiating the lower-level WE.  As a result,  the algorithm proposed by \citet{li2020end} may be much less efficient than conventional SAB algorithms.

}

\section{Problem Formulation}
\label{sec:formulation}

We are now ready to describe the bilevel formulation for SCG.  
Since we are motivated mostly by transportation applications, we set the congestion game in a transportation network modeled as a directed graph $\gG(\gN, \gA)$, where $\gN$ and $\gA$ are the set of nodes and links, respectively. Accordingly, the followers in the SCG will be referred to as travelers hereafter. In the network, let $\gW \subseteq \gN \times \gN$ be the set of origin-destination (OD) pairs and $\gK \subseteq 2^{\gA}$ be the set of all routes. We use $\gK_w \subseteq \gK$ to denote the set of routes connecting $w\in\gW$ and $\gA_k \subseteq \gA$ the set of all links on the route $k \in \gK$. Also, $\emSigma_{w,k}$ denotes the OD-route incidence, with $\emSigma_{w,k} = 1$ if the route $k \in \gK_w$ and 0 otherwise, and $\emLambda_{a,k}$ denotes the link-route incidence, with $\emLambda_{a,k} = 1$ if $a \in \gA_k$ and 0 otherwise. For notational convenience, we write $\mLambda = (\emLambda_{a,k})_{a \in \gA, k \in \gK}$ and $\mSigma = (\emSigma_{w,k})_{w \in \gW, k \in \gK}$.

\subsection{Lower level: congestion game}
\label{sec:lower}

Let $\vd = (\evd_w)_{w \in \gW}$ be a vector with $\evd_w$ denoting the number of travelers between $w \in \gW$. The travelers' route choice is represented by a vector $\vp = (\evp_k)_{k \in \gK}$, where $\evp_k$ equals the \textit{proportion} of travelers selecting $k\in \gK_w$. The feasible region for $\vp$ can  be written as
$\gP = \{\vp \geq \vzero: \mSigma \vp = \vone\}$. Let $\vq = (\evq_k)_{k \in \gK}$ be a vector with $\evq_k = \evd_w$ if $k \in \gK_w$, and $\vf = (\evf_k)_{k \in \gK}$ and $\vx = (\evx_a)_{a \in \gA}$, with $\evf_k$ and $\evx_a$  be the number of travelers using route $k$ and link $a$, respectively. We have $\vq = \mSigma^{\T} \vd$, $\vf = \diag(\vq) \vp$, $\mSigma \vf = \vd$ and $\mLambda \vf = \vx$. The feasible region for $\vx$ can then be written as
$\gX = \{\vx \in \sR_+^{|\gA|}: \vx = \bar \mLambda \vp, \ \vp \in \gP\}$, where $\bar \mLambda = \mLambda \diag(\vq)$. { We further define $\vu = (\evu_a)_{a \in \gA}$ as a vector of link costs determined by a function $u: \gX \to \sR^{|\gA|}$.} Then, the vector of route cost  $\vc = \mLambda^{\T} \vu$. To summarize, the route cost function $c: \gP \to \sR^{|\gK|}$ can be defined as
\begin{equation}
    c(\vp) = \mLambda^{\T} \vu = \mLambda^{\T} u(\mLambda \vf) = \mLambda^{\T} u(\bar \mLambda\vp).
    \label{eq:calculate-c}
\end{equation}

A WE strategy can then be defined as follows \citep{wardrop1952road}.

{
\begin{definition}
\label{def:we}
A route choice strategy $\vp^* \in \gP$ is a WE strategy if $c_k(\vp^*) > b_w^* \Rightarrow \evp_k^* = 0$, where $b_w^* = \min_{k'  \in \gK_w} c_{k'} (\vp^*)$, for all $w \in \gW$ and $k \in \gK_w$ (i.e., a non-optimal route is used by no one).
\end{definition}
}

A WE strategy coincides with the solution to a variational inequality problem (VIP) \citep{dafermos1980traffic}. 

\begin{proposition}
\label{prop:vi-formulation}
A route choice $\vp^* \in \gP$ is a WE strategy if and only if $$\langle c(\vp^*), \vp - \vp^* \rangle \geq \vzero, \quad \vp \in \gP.$$
\end{proposition}

Denote $\gP^*$ as the set of WE strategies and $\gX^* = \{\vx \in \gX: \vx^* = \bar \mLambda \vp^*,  \vp^* \in \gP^*\}$ as the set of WE link flow.
The following two propositions \citep{dafermos1980traffic} characterize the geometry of $\gP^*$ and $\gX^*$. 
\begin{proposition}
\label{prop:solution-set-2}
If $c(\vp)$ is  strongly monotone { on $\gP$}, then $\gP^*$ is a singleton.
\end{proposition}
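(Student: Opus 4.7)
The plan is a standard two-part argument: existence followed by uniqueness via strong monotonicity.

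For \emph{existence}, I would appeal to the classical Hartman--Stampacchia theorem for VIPs. The feasible set $\sP = \{\vp \geq \vzero : \mSigma \vp = \vone\}$ is nonempty, convex, and compact (it is the product of probability simplices, one for each OD pair in $\sW$). The cost map $c: \sP \to \sR^{|\sK|}$ is continuous, since by \eqref{eq:calculate-c} it is a composition of linear maps with the continuous link cost functions $u_a$. Hence $\sP^* \neq \emptyset$, so $\sP^*$ contains at least one element.

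For \emph{uniqueness}, I would argue by contradiction. Suppose there exist two distinct solutions $\vp_1^*, \vp_2^* \in \sP^*$. Applying the VIP inequality \eqref{eq:ue-vi} at $\vp_1^*$ with test vector $\vp = \vp_2^*$, and then at $\vp_2^*$ with test vector $\vp = \vp_1^*$, yields
\begin{equation*}
    \langle c(\vp_1^*), \vp_2^* - \vp_1^* \rangle \geq 0
    \qquad\text{and}\qquad
    \langle c(\vp_2^*), \vp_1^* - \vp_2^* \rangle \geq 0.
\end{equation*}
Adding these two inequalities gives $\langle c(\vp_1^*) - c(\vp_2^*), \vp_1^* - \vp_2^* \rangle \leq 0$. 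On the other hand, strong monotonicity of $c$ provides a constant $\mu > 0$ with $\langle c(\vp_1^*) - c(\vp_2^*), \vp_1^* - \vp_2^* \rangle \geq \mu \|\vp_1^* - \vp_2^*\|_2^2$. Combining the two bounds forces $\|\vp_1^* - \vp_2^*\|_2 = 0$, a contradiction. Therefore $\sP^*$ contains exactly one element.

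I do not anticipate a hard step here; the entire argument is textbook VIP theory, and the only mild subtlety is ensuring the hypotheses for an existence theorem are met (compactness and convexity of $\sP$, continuity of $c$), all of which are immediate from the problem setup in Section \ref{sec:lower}. If the paper treats existence as already known (it is standard for monotone VIPs on compact convex sets), the proof can be shortened to just the two-line uniqueness argument above.
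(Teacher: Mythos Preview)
Your proposal is correct and is precisely the standard VIP argument. The paper does not prove this proposition at all; it simply attributes Propositions \ref{prop:solution-set-2} and \ref{prop:solution-set} to \citet{dafermos1980traffic} and states them without proof, so your two-step existence-plus-uniqueness argument is exactly the classical reasoning the citation is pointing to.
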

\begin{proposition}
\label{prop:solution-set}
If $u(\vx)$ is  strongly monotone { on $\gX$}, then $\gX^*$ is a singleton. Moreover, $\gP^*$ can be written as a polyhedron $\{\vp^* \in \gP: \bar \mLambda \vp^* = \vx^*\}$, where $\vx^*$ is the unique element in $\gX^*$.
\end{proposition}

{ Given that $\gX$ is a compact set, $u(\vx)$ is strongly monotone on $\gX$  as long as $\nabla u(\vx)$ is positive definite for all $\vx \in \gX$. When the link cost function is separable, i.e., the link cost $\evu_a = u_a(\vx)$ relies \emph{only} on $\evx_a$, this condition is satisfied when $u_a$ is strictly increasing with $\evx_a$, an assumption widely accepted in the transportation network modeling community. However, the strong monotonicity of $c(\vp)$, i.e., the positive definiteness of $\nabla c(\vp)$, does not hold in general networks even when $\nabla u(\vx)$ is positive definite.  Note that $\nabla c(\vp) =  \mLambda^{\T} \nabla u(\vx) \mLambda \diag(\vq)$. Thus, even when $\nabla u(\vx)$ is positively definite, $\nabla c(\vp)$ is positively definite if and only if $\mLambda$ has full column rank, which is rarely satisfied in real transportation networks \citep{sheffi1985urban}.
}

\subsection{Upper level: optimization problem}
\label{sec:upper}

The leader's decision is denoted as a vector $\vz \in \gZ \subseteq \sR^n$.  Since $\vz$ may influence link costs, the link cost function is parameterized as $u(\vx; \vz)$. The path cost function $c(\vp; \vz)$ and the set of WE strategies $\gP^*(\vz)$ are similarly parameterized.  { We assume the leader's cost be determined by the link flow $\vx \in \gX$ and its decision $\vz \in \gZ$, and represented by a \textit{continuously differentiable} function $l: \gX \times \gZ \to \sR$.  %
The leader's decision problem reads
\begin{equation}
\begin{split}
    \min_{\vz \in \gZ}~~&l(\vx^*; \vz), \\
    \text{s.t.}~~&\vx^* = \bar \mLambda \vp^*, \quad \vp^* \in \gP^*(\vz).
\end{split} \label{eq:bilevel}
\end{equation}

To simplify the analysis, we impose the following regulatory conditions.

\begin{assumption}
\label{ass:-1}
    The feasible region $\gZ$  is a convex set. 
\end{assumption}

\begin{assumption}
\label{ass:1}
The function $u(\vx; \vz)$ is twice continuously differentiable on $\gX \times \gZ$.
\end{assumption}

\begin{assumption}
\label{ass:1.5}
For all $\vz \in \gZ$, the function $u(\cdot; \vz)$ is strongly monotone on $\gX$.
\end{assumption}

By Proposition \ref{prop:solution-set}, Assumption \ref{ass:1.5} ensures that, for any $\vz \in \gZ$, all $\vp^* \in \gP^*(\vz)$ correspond to a unique $\vx^* \in \gX^*(\vz)$. This means there exists a well-defined implicit function mapping $\vz$ to the corresponding WE link flow $\vx^*$ and another mapping $\vz$ to the leader's cost $l(\vx^*; \vz)$.  Let us write these two implicit functions as $x^*(\vz)$ and $l^*(\vz)$. %
The following result, given by \citet{yang2007sensitivity}, establishes the differentiability of $x^*(\vz)$.

\begin{proposition}
\label{prop:diff}
    Under Assumptions \ref{ass:1}--\ref{ass:1.5}, for any $\vz \in \gZ$, if there exists $\vp^* \in \gP^*(\vz)$ that satisfies the strict complementary condition, i.e., $c_k(\vp^*) = \min_{k' \in \gK_w} c_{k'} (\vp^*) \Rightarrow \evp_k^* > 0$ for all $w \in \gK$ and $k \in \gK_w$ (no minimum-cost route is left unused),  $x^*(\vz)$ is differentiable at $\vz$. 
\end{proposition}
The next result, also due to \citet{yang2007sensitivity}, shows how to examine the sensitivity of WE, i.e., calculating $\nabla x^*(\vz)$, through an implicit differentiation-based method. 
\begin{proposition}
\label{prop:diff-formula}
Under Assumptions \ref{ass:1}--\ref{ass:1.5}, given $\vz \in \gZ$, suppose that $\bar \vp^* \in \gP^*(\vz)$ is a WE strategy that satisfies the strict complementary condition. 
Let $\gK^* \subseteq \gK$ correspond to a \textit{maximal set of independent columns} (MSIC) of the combined incidence matrix $[\mLambda_+^{\T}, \mSigma_+^{\T}]^{\T}$, where $\mLambda_+$ and $\mSigma_+$ are respective sub-matrices of $\mLambda$ and $\mSigma$ that contain only columns in $\supp(\bar \vp^*)$. Similarly, define $\mLambda_*$ and $\mSigma_*$ be respective sub-matrices of $\mLambda$ and $\mSigma$ that contain only columns in $\gK^*$. Let $\vf^* = \diag(\vq) \bar \vp^*$ and $\vx^* = \mLambda\vf^*$.  Then $\nabla x^*(\vz) = \mF^*_\vz \cdot \mLambda_*^{\T}$, where $\mF^*_\vz$ solves
    \begin{equation}
    \begin{bmatrix}
    \mLambda_*^{\T} \nabla_{\vx} u(\vx^*; \vz) \mLambda_* & -\mSigma_*^{\T} \\[3pt]
    \mSigma_* & \vzero
    \end{bmatrix}
    \cdot 
    \begin{bmatrix}
    \displaystyle 
    \mF^*_\vz \\[3pt]
    \mB^*_\vz
    \end{bmatrix}
    =
    \begin{bmatrix}
    -\mLambda_*^{\T} \nabla_{\vz} u(\vx^*; \vz)  \\[3pt]
    \vzero
    \end{bmatrix}.
    \label{eq:yang}
\end{equation}
\end{proposition}
Once $\nabla x^*(\vz)$ is obtained, the leader's decision can be improved through gradient descent, where the gradient $\nabla l^*(\vz) = \nabla_{\vz} l(\vx^*; \vz) + \nabla x^*(\vz) \cdot \nabla_{\vx} l(\vx^*; \vz)$. This is the basic logic of the SAB algorithm for SCGs.

{ Central to the sensitivity analysis method outlined above are two computationally intensive tasks: (i) finding an MSIC of the combined incidence matrix and (ii) solving the linear system given by Equation \eqref{eq:yang}. A common approach to Task (i) is to get a QR decomposition of the matrix and then extract MSIC from the non-zero diagonal elements of the resulting upper triangular matrix. To analyze the complexity of these tasks, define
$\textstyle \tau = \supp_{\vz \in \gZ}  \max_{\vp^* \in \gP^*(\vz)} |\supp(\vp^*)|/ |\gW|$ as the upper bound on the average number of equilibrium routes.  Thus, the total number of equilibrium routes is no more than $\tau \cdot |\gW|$. The dimension of the combined incidence matrix, as well as that of the linear system \eqref{eq:yang}, is then bounded below by $\mathcal{O}(\tau \cdot |\gW|)$. It follows that Tasks (i) and (ii)  both have a complexity of $\mathcal{O}(\tau^3 \cdot|\gW|^3)$.
On large regional transportation networks, $|\gW|$ is easily in the order of tens of thousands. At such a scale, these tasks can become prohibitively expensive.}

}

\section{Reformulation of WE Through ILD}
\label{sec:day-to-day}

At the core of our approach to SCG is the reformulation of WE as a DiP. In this section, we propose a formulation based on the (discrete-time) imitative logic dynamics (ILD) \citep{bjornerstedt1994nash}.  %
To describe the model, let us first define a parameterized function $h: \gP \times \gZ \to \gP$ such that
\begin{equation}
    h_k(\vp; \vz) = \frac{\evp_k \cdot \exp(-r \cdot c_k(\vp; \vz))}{\sum_{k' \in \gK_w} \evp_{k'} \cdot \exp(-r \cdot c_{k'}(\vp; \vz))}, \quad \forall k \in \gK_w, \quad \forall w \in \gW,
    \label{eq:def-h}
\end{equation}
{
where $r$ is a parameter (the dependency of $h_k$ on $r$ is omitted for simplicity). Starting from $h^{(0)}(\vp; \vz) = \vp$, we iteratively define $h^{(t + 1)}(\vp; \vz) = h(h^{(t)}(\vp; \vz); \vz)$ ($t = 0, 1, \ldots$). Given travelers' initial strategy $\vp^0 \in \gP$ at  $t = 0$ (interpreted as day 0), their strategy on day $t$ can be written as $\vp^{t} = h^{(t)}(\vp^0; \vz)$ according to ILD. We shall show that, with a proper $\vp^0$, the sequence $\vp^t$ defined above always converges to WE. This global convergence implies that we can obtain the derivative of the leader's cost at WE by differentiating the ILD, which is amenable to automatic differentiation thanks to the closed form.

In the following, we first interpret ILD as the outcome of all travelers simultaneously minimizing their expected costs via a special mirror descent method (Section \ref{sec:interpretation}). Based on this interpretation, Section \ref{sec:convergence-analysis} then establishes the conditions under which ILD converges to WE.
}

\subsection{Interpretation by mirror descent}
\label{sec:interpretation}

The mirror descent (MD) method was proposed by \citet{nemirovskij1983problem}. Similar to the projected gradient descent method, MD also seeks to diminish the ``distance" between two successive iterations.  The difference is that MD defines the distance using  \emph{Bregman divergence}, which is more general than the Euclidean distance used by the projected gradient descent method. 
\begin{definition}[Bregman divergence]
Let $\gY \subseteq \sR^{n}$ be a closed convex set and $\phi: \gY \to \sR \cup \{\infty\}$ be a { strictly} convex and continuously differentiable function. Then for all $\vy \in \gY$ and $\vy' \in \gY$, the Bregman divergence $D_{\phi}(\vy, \vy')$ induced by $\phi$ between $\vy$ and $\vy'$ is defined as
$D_{\phi}(\vy, \vy') = \phi(\vy) - \phi(\vy') - \left<\nabla \phi(\vy'), \vy - \vy'\right>.$
\end{definition}
Consider the problem of minimizing a continuous differentiable function $f: \gY \to \sR$ over a convex set $\gY \subseteq \sR^n$. At the current iteration $\vx^t \in \gY$, the MD method finds the next iterate $\vy^{t + 1}$ by solving
\begin{equation}
    \vy^{t + 1} = \argmin_{\vy \in \gY}~r \cdot  \langle \nabla f(\vy^t), \vy \rangle+ D_{\phi}(\vy, \vy^t).
    \label{eq:sub-mirror-descent}
\end{equation}
In a special case, the above problem admits an analytic solution \citep{beck2003mirror}. 
\begin{lemma}
\label{lm:ep}
     When $\gY = \{\vy \in \sR_+^{n}: \vone^{\T} \vy = 1\}$ is a probability simplex, specifying $\phi(\vy) = \langle \vy, \log \vy \rangle$ as the negative entropy function will lead to $D_{\phi}(\vy, \vy')=\langle \vy, \log \vy - \log \vy' \rangle$, the Kullback–Leibler (KL) divergence. In this case, the sub-problem \eqref{eq:sub-mirror-descent} of the MD method admits an analytic solution, which reads
    \begin{equation}
        \evy_k^{t + 1} = \frac{\evy_k^t \cdot \exp(- r \cdot \nabla f(\vy^t))}{\sum_{k = 1}^n \evy_k^t \cdot \exp(- r \cdot \nabla f(\vy^t))}.
        \label{eq:entropic-descent}
    \end{equation}
\end{lemma}

We are now ready to discuss how travelers' behavior underlying ILD can be interpreted through MD. We first note $\vp_w^t = (\evp_k^t)_{k \in \gK_w}$ can be viewed as the mixed strategy of a representative traveler from OD pair $w \in \gW$ on day $t$ constrained in $\gP_w = \{\vp_w \in \sR_{+}^{|\gK_w|}: \vone^{\T} \vp_w = 1\}$. Assume each representative traveler aims to minimize their expected cost $\langle \vc_w^t, \vp_w^t\rangle$, where $\vc_w^t = (c_k(\vp^t; \vz))_{k \in \gK_w}$ (note that  $\vp_w^t$ is the mixed strategy adopted by the representative traveler while  $\vp^t$ is a vector of ``stable" route choice probabilities).
As the contribution of any traveler's strategy to $\vp^t$ is { infinitesimal} in a nonatomic game,  $\vc_w^t$ is not affected by $\vp_w^t$. Thus, the derivative of the expected cost with respect to $\vp_w^t$ is simply $\vc_w^t$. Applying the MD method yields
\begin{equation}
    \min_{\vp_w \in \gP_w}~r \cdot \langle \vc_w^t, \vp_w \rangle + D_{\phi_w}(\vp_w, \vp_w^t), \quad \forall w \in \gW,
    \label{eq:mirror-descent-w}
\end{equation}
where $D_{\phi_w}(\vp_w, \vp_w^t)$ is the Bregman divergence induced by $\phi_w: \mathbb \gP_w \to \sR$. 
{
By applying Lemma \ref{lm:ep}, we can then directly obtain the following proposition. 

\begin{proposition}
\label{prop:ed-ild}
Letting $\phi_w(\vp_w) =  \langle \vp_w, \log \vp_w  \rangle$, given any  $\vp^t \in \gP$ and $\vp^{t + 1} = h(\vp^t; \vz)$,  $\vp_w^{t + 1} = (\evp_k^{t + 1})_{k \in \gK_w}$ solves the optimization problem  \eqref{eq:mirror-descent-w}.
\end{proposition}
}

Proposition \ref{prop:ed-ild} implies that ILD can be explained as the outcome of all travelers simultaneously minimizing their expected costs by a special MD method. To the best of our knowledge, this is a new result.

\subsection{Convergence analysis}
\label{sec:convergence-analysis}

{ We next show that ILD always converges to WE 
if the route cost function $c(\cdot; \vz)$ is cocoercive on $\gP$. Following \citet{marcotte1995convergence} (Proposition 2.1), we first give sufficient conditions for cocoercivity. 
\begin{proposition}
\label{prop:check-cocoercive}
    Given any $\vz \in \gZ$, suppose that (i) the function $u(\cdot; \vz)$ is monotone and twice continuously differentiable on $\gX$ and (ii) the matrix $\nabla_{\vx} u(\vx; \vz)^2 + (\nabla_{\vx} u(\vx; \vz)^2)^{\T}$ is positive semi-definite (p.s.d.). Then, there always exists $L_{\vz} \geq 0$ such that the route cost function $c(\cdot; \vz)$ is $1/4L_{\vz}$-cocoercive on $\gP$, i.e., 
    \begin{equation}
        \langle c(\vp'; \vz) - c(\vp; \vz), \vp' - \vp \rangle \geq 1/4L_{\vz} \cdot \|c(\vp'; \vz) - c(\vp; \vz) \|_2^2, \quad \text{for all}~\vp, \vp' \in \gP.
    \end{equation}
\end{proposition}
\begin{proof}
    The reader is referred to Appendix \ref{app:check-cocoercive}.
\end{proof}

The two conditions specified in Proposition \ref{prop:check-cocoercive} can be easily satisfied in transportation applications. The first  (monotonicity and twice continuous differentiability of the link cost function) is standard.  The second condition follows directly from the first when the matrix $\nabla_{\vx} u(\vx; \vz)$ is symmetric. In other words, when the link cost function is (i) separable or (ii) non-separable but symmetric, the first condition alone is sufficient to ensure the cocoercivity of $c(\cdot; \vz)$. Even for a link cost function that is neither symmetric nor separable, the second condition may still be fulfilled when, for instance,  the cost on any link is affected by its own flow more than the flows on all other links combined. 
}

We are now ready to present the main convergence result.  Let us first define the \emph{cover} of any  $\vp_w \in \gP_w$, denoted by $\gQ_w(\vp_w) \subseteq \gP_w$, as $\gQ_w(\vp_w) = \{\vp_w' \in \gP_w: \supp{(\vp_w')} \subseteq \supp{(\vp_w)} \}$.
Accordingly, the cover of $\vp \in \gP$ is defined as $\gQ(\vp)= \prod_{w \in \gW} \gQ_w(\vp_w)$.  If $\vp'\in \gQ(\vp)$, we say $\vp'$ is \emph{covered} by $\vp$, which means any route with positive choice probability in $\vp'$ must also have positive choice probability in $\vp$. The following lemma \citep{kullback1997information}  characterizes the property of the KL divergence.
\begin{lemma}
\label{lm:kl-finite}
For any $\vp_w, \vp'_w \in \gP_w$,  $D_{\phi_w}(\vp_w, \vp_w') < \infty$ if and only if $\vp_w'$ is covered by $\vp_w$.%
\end{lemma}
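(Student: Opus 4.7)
The plan is to compute $D_{\phi_w}(\vp_w,\vp_w')$ in closed form for $\phi_w(\vp_w) = \langle \vp_w, \log\vp_w\rangle = \sum_{k\in\sK_w}\evp_k\log\evp_k$, reduce the result to the Kullback--Leibler form using the simplex normalization, and then read off when the sum is finite by inspecting each summand.

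First I substitute $\nabla_k\phi_w(\vp_w')=\log\evp_k'+1$ into the Bregman divergence from Section~\ref{sec:mirror-descent} to obtain
\begin{equation*}
D_{\phi_w}(\vp_w,\vp_w')=\sum_k\evp_k\log\evp_k-\sum_k\evp_k'\log\evp_k'-\sum_k(\log\evp_k'+1)(\evp_k-\evp_k').
\end{equation*}
Because $\vp_w,\vp_w'\in\sP_w$ both satisfy $\sum_k\evp_k=\sum_k\evp_k'=1$, the stray linear terms $-\sum_k\evp_k+\sum_k\evp_k'$ cancel, and after collecting the logarithmic pieces one arrives at the standard KL form
\begin{equation*}
D_{\phi_w}(\vp_w,\vp_w')=\sum_{k\in\sK_w}\evp_k\log\!\left(\frac{\evp_k}{\evp_k'}\right),
\end{equation*}
under the convention $0\log 0 = 0$, justified by the continuous extension $\lim_{x\to 0^+}x\log x = 0$. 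This reduction is essentially the only calculation.

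From the KL form the finiteness criterion is a term-by-term check. For each $k\in\sK_w$: (i) if $\evp_k=0$, the summand vanishes irrespective of $\evp_k'$; (ii) if $\evp_k>0$ and $\evp_k'>0$, the summand is a finite real number; (iii) if $\evp_k>0$ and $\evp_k'=0$, the summand is $+\infty$. Because no summand can equal $-\infty$, the total is finite precisely when case (iii) never arises, i.e., when $\evp_k'=0\Rightarrow\evp_k=0$. This is exactly the support-containment condition that the lemma identifies with the ``cover'' relation.

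I do not expect any real obstacle — the argument is bookkeeping for an entropy-type Bregman divergence. The only subtlety is handling the convention $0\log 0 = 0$ and verifying that the apparent singularities arising from $\log\evp_k'$ when $\evp_k'=0$ either cancel against the simplex normalization or show up as a single $+\infty$ summand; once the expression collapses to the KL form, both directions of the equivalence follow by direct inspection.
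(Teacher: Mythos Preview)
Your computation is correct and complete: reducing the Bregman divergence to the KL form $\sum_k \evp_k\log(\evp_k/\evp_k')$ and checking term-by-term is the standard argument. The paper does not give its own proof --- it simply cites \citep{kullback1997information} --- so there is no alternative approach to compare against.

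One point worth flagging: the support condition you derive, $\evp_k'=0\Rightarrow\evp_k=0$, is equivalent to $\supp(\vp_w)\subseteq\supp(\vp_w')$, which in the paper's terminology means $\vp_w$ is covered by $\vp_w'$, not the other way around. The lemma as literally stated in the paper has the roles of $\vp_w$ and $\vp_w'$ reversed in the ``covered by'' clause; this appears to be a typo, since every downstream use of the lemma (e.g., in the sketch of Lemma~\ref{lm:vi-md}, where $D_{\phi_w}(\vp_w,\vp_w^t)<\infty$ is said to require $\vp_w$ be covered by $\vp_w^t$) matches your version. Your final sentence glosses over this by saying only ``the support-containment condition that the lemma identifies with the cover relation''; it would be worth stating the direction explicitly and noting the discrepancy.
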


Lemma \ref{lm:kl-finite} implies that $\vp^t \in \gQ(\vp^0)$ for all $t \geq 0$, i.e., travelers will consider \emph{only}  the routes initially included in the choice set as they update their strategies.  It follows that $\vp^t$ would never converge to a WE if the intersection of $\gQ(\vp^0)$ and $\gP^*(\vz)$ is empty. Thus, to ensure convergence, $\gQ(\vp^0) \cap \gP^*(\vz)$ must be nonempty. Our main result below verifies that this necessary condition is also sufficient when other appropriate conditions are imposed.

\begin{theorem}
\label{thm:convergence}
 Suppose that $c(\cdot; \vz)$ is $c_{\vz}$-cocoercive on $\gP$ and choose $r < 2c_{\vz}$. { Given any $\vp^0 \in \gP$ such that $\gQ(\vp^0) \cap \gP^*(\vz) \neq \emptyset$, the sequence $\{\vp^t\}$ defined by  ILD iterate \eqref{eq:def-h} converges to  a fixed point $\hat \vp \in \gP^*(\vz)$.}  %
\end{theorem}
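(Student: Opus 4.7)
The plan is to use the aggregate KL divergence $D(\vp^*, \vp^t) := \sum_{w \in \sW} D_{\phi_w}(\vp_w^*, \vp_w^t)$ against a fixed reference equilibrium $\vp^* \in \sQ(\vp^0) \cap \sP^*(\vz)$ as a Lyapunov function. The first observation is that the ILD update \eqref{eq:def-h} is a strictly positive multiplicative rescaling on the active support, so an easy induction gives $\supp{(\vp^t)} = \supp{(\vp^0)}$ for every $t$; hence $\vp^* \in \sQ(\vp^t)$ and, by Lemma \ref{lm:kl-finite}, $D(\vp^*, \vp^t)$ stays finite along the entire trajectory. This is what licenses using Lemma \ref{lm:vi-md} with the admissible test vector $\vp_w = \vp_w^*$.

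Combining Lemma \ref{lm:vi-md} (instantiated at $\vp_w = \vp_w^*$) with the three-point identity for Bregman divergences and summing over $w \in \sW$ yields
\[
D(\vp^*, \vp^{t+1}) \leq D(\vp^*, \vp^t) - D(\vp^{t+1}, \vp^t) + r\langle c(\vp^t;\vz), \vp^* - \vp^{t+1}\rangle.
\]
I then decompose the residual inner product as
\[
r\langle c(\vp^t;\vz), \vp^* - \vp^{t+1}\rangle = r\langle c(\vp^*;\vz), \vp^* - \vp^{t+1}\rangle + r\langle c(\vp^t;\vz) - c(\vp^*;\vz), \vp^* - \vp^t\rangle + r\langle c(\vp^t;\vz) - c(\vp^*;\vz), \vp^t - \vp^{t+1}\rangle.
\]
The first term is $\leq 0$ by the VIP characterization in Proposition \ref{prop:vi-formulation}; the second is bounded by $-rc_{\vz}\|c(\vp^t;\vz) - c(\vp^*;\vz)\|_2^2$ via the cocoercivity in Lemma \ref{lm:cocoercive}; and the third I handle with Cauchy--Schwarz and Young's inequality at rate $\alpha$, then absorb the resulting $\|\vp^t - \vp^{t+1}\|_1^2$ term using Pinsker's inequality $D_{\phi_w}(\vp_w^{t+1}, \vp_w^t) \geq \tfrac{1}{2}\|\vp_w^{t+1} - \vp_w^t\|_1^2$. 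Any $\alpha \in \bigl(\tfrac{1}{2c_{\vz}}, \tfrac{1}{r}\bigr)$ --- a nonempty interval precisely when $r < 2c_{\vz}$ --- then delivers the one-step descent
\[
D(\vp^*, \vp^{t+1}) \leq D(\vp^*, \vp^t) - \delta_1\|c(\vp^t;\vz) - c(\vp^*;\vz)\|_2^2 - \delta_2 \|\vp^{t+1} - \vp^t\|_1^2,
\]
for constants $\delta_1, \delta_2 > 0$ depending only on $r$ and $c_{\vz}$.

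Telescoping this descent and using $D(\vp^*, \vp^t) \geq 0$ gives $\sum_t \|c(\vp^t;\vz) - c(\vp^*;\vz)\|_2^2 < \infty$ and $\sum_t \|\vp^{t+1} - \vp^t\|_1^2 < \infty$, so both $c(\vp^t;\vz) \to c(\vp^*;\vz)$ and $\vp^{t+1} - \vp^t \to 0$. By compactness of $\sP$, every subsequence of $\{\vp^t\}$ admits a further convergent subsequence $\vp^{t_k} \to \hat\vp$; continuity of $h(\cdot;\vz)$ on $\sP$ (its denominator stays strictly positive there) gives $h(\hat\vp;\vz) = \lim_k h(\vp^{t_k};\vz) = \lim_k \vp^{t_k+1} = \hat\vp$, so $\hat\vp$ is an ILD fixed point. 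Monotone convergence of $\{D(\vp^*, \vp^t)\}$ together with the joint lower semi-continuity of the KL divergence in its second argument gives $D(\vp^*, \hat\vp) \leq \lim_k D(\vp^*, \vp^{t_k}) < \infty$, hence $\vp^* \in \sQ(\hat\vp)$ by Lemma \ref{lm:kl-finite}. Lemma \ref{prop:fixed-point} then promotes $\hat\vp$ to an element of $\sP^*(\vz)$, and a standard contradiction argument (compactness of $\sP$ once more) elevates this subsequential statement to $d(\vp^t, \sP^*(\vz)) \to 0$.

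The main obstacle is the Young calibration: I need $\alpha$ to simultaneously make the cocoercivity coefficient $rc_{\vz} - r/(2\alpha)$ strictly positive and keep the increment coefficient $\tfrac{1}{2} - r\alpha/2$ non-negative, and this double constraint translates \emph{exactly} into the hypothesis $r < 2c_{\vz}$. A second, more conceptual hurdle is promoting a subsequential limit $\hat\vp$ to a true WE: the cocoercive bookkeeping only pins down the equilibrium \emph{cost vector} $c(\vp^*;\vz)$, not the strategy itself, so Lemma \ref{prop:fixed-point} must do the final step. That lemma requires $\vp^* \in \sQ(\hat\vp)$, which is handed to us for free by the boundedness of the KL Lyapunov function via Lemma \ref{lm:kl-finite} --- explaining why the KL divergence, rather than, say, the squared Euclidean distance, is the right potential for this problem.
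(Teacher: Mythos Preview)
Your argument is correct and follows essentially the same skeleton as the paper's proof: the aggregate KL divergence as Lyapunov function, the three-point identity combined with Lemma~\ref{lm:vi-md}, the VIP bound on $\langle c(\vp^*;\vz),\vp^*-\vp^{t+1}\rangle$, cocoercivity, Pinsker's inequality, a Bolzano--Weierstrass subsequence, continuity of $h$ to certify a fixed point, and finally Lemma~\ref{prop:fixed-point}. The only cosmetic difference is that where you run Cauchy--Schwarz plus Young with a free parameter $\alpha\in(1/2c_{\vz},1/r)$, the paper applies the fixed quadratic inequality $\langle \vx,\vx'\rangle-\|\vx'\|_2^2\le \tfrac{1}{4}\|\vx\|_2^2$ (their Lemma~\ref{lm:bound}, which is your Young step at the optimal $\alpha$) to obtain the single clean residual $\tfrac{2c_{\vz}-r}{4c_{\vz}}\|\vp^t-\vp^{t+1}\|_2^2$; your extra summable term $\|c(\vp^t;\vz)-c(\vp^*;\vz)\|_2^2$ is a harmless byproduct you never use. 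One small point worth tightening: the paper actually closes with the stronger conclusion of \emph{pointwise} convergence $\vp^t\to\hat\vp$, obtained by observing that once $\hat\vp\in\sP^*(\vz)$ it can replace $\vp^*$ as the reference in the descent inequality, so the monotone sequence $\tilde D_\phi(\hat\vp,\vp^t)$ hits zero along the subsequence and hence everywhere; your ``standard contradiction argument'' only yields $d(\vp^t,\sP^*(\vz))\to 0$, which suffices for the theorem as literally stated but is slightly weaker than what the paper proves.
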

\begin{proof}
The reader is referred to Appendix \ref{app:main-proof}. 
\end{proof}

{
Theorem \ref{thm:convergence} thus establishes the convergence of ILD under relatively mild conditions, which lays the foundation for our proposed algorithms. %
As both ILD and MD have been extensively studied in the literature, one naturally wonders if Theorem \ref{thm:convergence} is a new result. The reader interested in this question is referred to Appendix \ref{app:explanation}, which explains the novelty of the result and clarifies its difference with a few closely related works \citep[e.g.,][]{marcotte1995convergence,krichene2015convergence, mertikopoulos2019learning}. 
}

\section{ILD as a Differentiable Program}
\label{sec:ad-gradient}

Having reformulated the WE of a congestion game as the limit of ILD, we proceed in this section to show ILD can be coded as a differentiable program (DiP), which will subsequently allow us to evaluate the gradient of the leader's cost using automatic differentiation (AD).

\begin{figure}[ht]
\vspace{-0.1in}
\begin{center}
\centerline{\includegraphics[width=0.8\columnwidth]{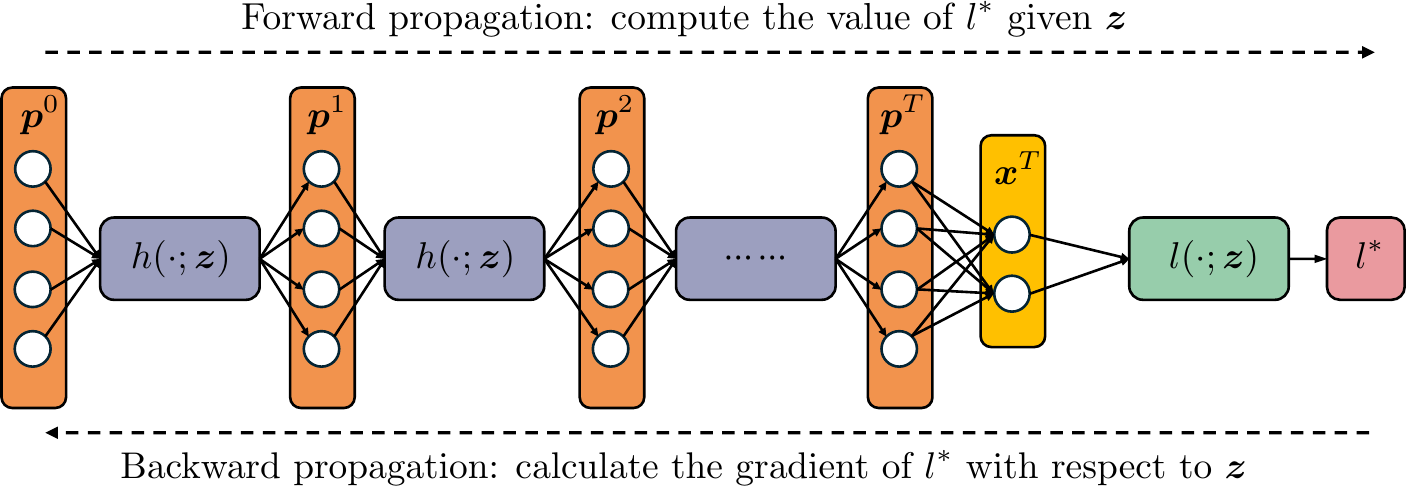}}
\caption{A DiP representation of the leader's cost in an SCG.}
\label{fig:framework}
\vspace{-0.1in}
\end{center}
\end{figure}

Recall that with ILD, we define travelers' strategy on day $t$ ($t = 0, 1, \ldots$) as $\vp^t = h^{(t)}(\vp^0; \vz)$.
Thus, the leader's cost on a given day, say $T$,  can be \emph{explicitly} expressed as $l^T = l(\vx^T; \vz)$, where $\vx^T = \bar \mLambda \vp^T$. It can be viewed as the output of a DiP structured as in Figure \ref{fig:framework}.
In the graph,  $h(\cdot; \vz)$ represents the hidden layers, and $l(\cdot; \vz)$  represents the output layer, to borrow the terminology from the deep learning community. In this ``DNN," the leader's decision $\vz$ is embedded as weights shared by all these layers. Then, calculating $\partial l^T / \partial \vz$ is reduced to computing the gradient of a DNN's output with respect to its weights. In deep learning, this ``routine" task is typically solved via the AD tools provided by deep learning frameworks, e.g., PyTorch \citep{paszke2019pytorch} and TensorFlow \citep{tensorflow2015-whitepaper}.

{ Similarly, AD tools can be leveraged to calculate $\partial l^T / \partial \vz$ in two phases. In the first --- or the forward propagation (FP) --- phase, $l^T$ is evaluated (see Algorithm \ref{alg:forward}), while all intermediate variables and their inter-dependencies are stored. Then, the second --- or the backward propagation (BP) --- phase unrolls all algorithmic operations in Algorithm \ref{alg:forward} in a reverse order, based on the chain rule defined by the inter-dependencies (see Algorithm \ref{alg:backward}).  Note that in Algorithm \ref{alg:backward}, a variable with a dot accent represents the gradient of $l^T$ with respect to the corresponding variable without the accent. To help the reader understand the code, take line \ref{line:e} in Algorithm \ref{alg:forward} for example. By applying the chain rule to this line, we then have
\begin{equation}
    \dot{\vc}^t = \frac{\partial l^T}{\partial \vc^t} = \frac{\partial \ve^t}{\partial \vc^t} \cdot \frac{\partial l^T}{\partial \ve^t} = - r\cdot \diag(\ve^t)  \dot \ve^t, \quad \text{where}~\ve^t = \exp(-r \cdot \vc^t),
\end{equation}
which corresponds to line \ref{line:e-c} in Algorithm \ref{alg:backward}. After all operations in Algorithm \ref{alg:forward} are unrolled, the gradients of $l^T$ with respect \textit{all} variables, including $\dot \vz = \partial l^T / \partial \vz$, are then derived. 

\begin{figure}[ht]
\begin{multicols}{2}
\begin{algorithm}[H]
\renewcommand{\algorithmicrequire}{\textbf{Input:}}
\renewcommand{\algorithmicensure}{\textbf{Output:}}
\caption{\small Forward propagation for $l^T$.}
\label{alg:forward}
\footnotesize
\begin{algorithmic}[1]
\For{$t = 0, \ldots, T - 1$}

    \vspace{1.5pt}

    \State {\color{black} $\vf^t = \diag(\vq)  \vp^t$} \Comment{$|\gK|$}
    \label{line:v-f-1}

    \vspace{1.5pt}
    \State {\color{black} $\vx^t = \mLambda \vf^t$} \Comment{$\text{nnz}(\mLambda)$}
    \label{line:m-f-1}

    \vspace{1.5pt}
    \State {\color{black} $\vu^t = u(\vx^t; \vz)$}
    \label{line:u}

    \vspace{1.5pt}
    \State {\color{black} $\vc^t = \mLambda^{\T} \vu^t$} \Comment{$\text{nnz}(\mLambda)$}
    \label{line:m-f-2}

    \vspace{1.5pt}
    \State {\color{black} $\ve^t = \exp(-r \cdot \vc^t)$} \Comment{$2|\gK|$}
    \label{line:e}

    \vspace{1.5pt}
    \State {\color{black} $\vq^t = \diag(\ve^t) \vp^t$} \Comment{$|\gK|$}
    \label{line:v-f-3}

    \vspace{1.5pt}
    \State {\color{black} $\vs^t = \mSigma^{\T}  \mSigma  \vq^t$} \Comment{$2 \text{nnz}(\mSigma)$}
    \label{line:m-f-3}

    \vspace{1.5pt}
    \State {\color{black}$\vp^{t + 1} = \vq^t / \vs^t$} \Comment{$|\gK|$}
    \label{line:v-f-4}
    \vspace{1.5pt}
\EndFor
\vspace{1.5pt}
\State {\color{black} $\vf^T = \diag(\vq) \vp^T$} \Comment{$|\gK|$}
\vspace{1.5pt}
\State {\color{black} $\vx^T = \mLambda \vf^T$} \Comment{$\text{nnz}(\mLambda)$}
\vspace{1.5pt}
\State $l^T = l(\vx^T; \vz)$
\label{line:l}
\end{algorithmic}
\end{algorithm}

\begin{algorithm}[H]
\renewcommand{\algorithmicrequire}{\textbf{Input:}}
\renewcommand{\algorithmicensure}{\textbf{Output:}}
\caption{\small Backward propagation for $\dot \vz = \partial l^{T}/\partial \vz$.}
\label{alg:backward}
\footnotesize

\begin{algorithmic}[1]
\State $\dot \vz = \nabla_{\vz} l(\vx^T; \vz)$ and $\dot \vx^T = \nabla_{\vx} l(\vx^T; \vz)$
\label{line:z-l}
\vspace{1.5pt}
\State {\color{black} $\overbar \vf^T = \mLambda^{\T}  \overbar \vx^T$} \Comment{$\text{nnz}(\mLambda)$}
\vspace{1.5pt}
\State {\color{black} $\overbar \vp^T = \diag(\vq) \overbar \vf^T$} \Comment{$|\gK|$}
\vspace{1.5pt}
\For{$t = T - 1, \ldots, 0$}
    \vspace{1.5pt}
    \State {\color{black}$\dot \vq^t = \vp^{t + 1}/ \vs^t$ and $\dot \vs^t =  -\dot \vp^t / (\vs^t)^2$} \Comment{$3|\gK|$}
    \label{line:v-b-1}
    \vspace{1.5pt}
    \State {\color{black}$\dot \vq^t = \mSigma^{\T} \, \mSigma  \, \dot \vs^t$} \Comment{$2\text{nnz}(\mSigma)$}
    \label{line:m-b-1}

    \vspace{1.5pt}
    \State {\color{black}$\dot \vp^t = \diag(\ve^t)  \dot \vq^t$} and {\color{black}$\dot
    \ve^t = \diag(\vp^t) \dot \vq^t$} \Comment{$2|\gK|$}
    \label{line:v-b-4}

    \vspace{1.5pt}
    \State {\color{black}$\dot \vc^t = -r \cdot \diag(\ve^t)  \dot \ve^t$}
    \Comment{$2|\gK|$}
    \label{line:e-c}

    \vspace{1.5pt}
    \State {\color{black} $\dot \vu^t = \mLambda  \dot \vc^t$} \Comment{$\text{nnz}(\mLambda)$}
    \label{line:m-b-2}

    \vspace{1.5pt}
    \State {\color{black} $\dot \vz \pluseq \nabla_{\vz} u(\vx^t; \vz)  \dot \vu^t$} and  {\color{black} $\dot \vx^t = \nabla_{\vx} u(\vx^t; \vz) \dot \vu^t$}
    \label{line:x-u}

    \vspace{1.5pt}
    \State {\color{black} $\dot \vf^t = \mLambda^{\T}  \dot \vx^t$} \Comment{$\text{nnz}(\mLambda)$}
    \label{line:m-b-3}

    \vspace{1.5pt}
    \State {\color{black} $\dot \vp^t \pluseq \diag(\vq) \dot \vf^t$} \Comment{$|\gK|$}
    \label{line:v-b-6}
\vspace{1.5pt}
\EndFor
\end{algorithmic}
\end{algorithm}
\end{multicols}
\vspace{-0.3in}
\end{figure}
Importantly, the implementation of the BP phase is effortless with the modern AD tools since Algorithm \ref{alg:backward} can be programmed and executed ``automatically" once the FP phase (Algorithm \ref{alg:forward}) is implemented.

}

{
We proceed to analyze the complexity of Algorithms \ref{alg:forward} and \ref{alg:backward}, which involve
two types of operations.  Type I operations, including Lines \ref{line:u} and \ref{line:l} in Algorithm \ref{alg:forward} and  Lines \ref{line:z-l} and \ref{line:x-u} in Algorithm \ref{alg:backward}, depend on the specification of $\gZ$ and $u(\vx; \vz)$.
In most SCG applications considered herein, $\vz$ contains design parameters on links, e.g., capacity added in CNDPs and toll levied in  SCTPs. Accordingly,  the complexity of a Type I operation is largely proportional to  %
$|\gA|$.  Below, we shall show the complexity of Type II operations scales with $|\gK|$, i.e., the number of routes.  As $|\gA| \ll |\gK|$ in practical networks, Type I operations may be ignored when analyzing the overall complexity.

Type II operations may take one of two basic forms.} (i) Element-wise vector operations (e.g., multiplication, division, and exponentiation): the complexity of these operations is determined by the size of the vector. (ii) Multiplication between a 0-1 sparse matrix and a vector: its complexity is determined by the number of nonzero elements in the matrix.  For example, the OD-route incidence matrix $\mSigma$ is a sparse matrix with only one nonzero element in each column.  Thus, $\text{nnz}(\mSigma) = |\gK|$.  As for the link-route incidence matrix $\mLambda$,  the number of nonzero elements $\text{nnz}(\mLambda) = \overbar{N}_{\text{link}} \cdot |\gK|$, where $\overbar{N}_{\text{link}}$ is the average number of links in a route. In Algorithms \ref{alg:forward} and \ref{alg:backward}, the estimated number of arithmetic operations is provided at the end of each line. Adding these numbers up, the overall number of arithmetic operations of the FP and BP algorithms, denoted respectively as $N_{\text{f}}$ and $N_{\text{b}}$, is  given by
\begin{align*}
        N_{\text{f}} &= (7T + 1) \cdot |\gK| + (2T + 1) \cdot \overbar{N}_{\text{link}} \cdot |\gK| \quad \text{and} \quad N_{\text{b}} &= (10T + 1) \cdot |\gK| + (2T + 1) \cdot  \overbar{N}_{\text{link}} \cdot |\gK|.
\end{align*}
As $\overbar{N}_{\text{link}} \ll |\gA|$,  it is usually a relatively small number, likely well below  50, even in regional-scale networks \citep[see, e.g.,][]{xie2019new}. By assuming $\overbar{N}$ to be well-bounded for simplicity, we then have $N_{\text{f}} = N_{\text{b}} = \gO(T \cdot |\gK|)$. Moreover, it is easy to verify that $N_{\text{b}}/N_{\text{f}}$ is bounded from above by $(2 + 10)/(2 + 7) = 4/3$. 
This result echos \citet{griewank1989automatic}, who showed that in the practical application of AD, the computational cost of BP is usually no more than 1.5 times that of FP in practice.

To summarize, when differentiating the leader's cost on day $T$ by treating ILD as a DiP,  (i)  the computational cost basically scales with both $T$ and $|\gK|$ at a linear rate and (ii) the computational cost required in backward propagation is tightly bounded by that in forward propagation. %

\section{A Double-Loop Algorithm}
\label{sec:double}

In this section, we devise our first algorithm for SCGs.  By replacing the equilibrium constraint $\vp^* \in \gP^*(\vz)$ in the original formulation \eqref{eq:bilevel} with its DiP counterpart,  we arrive at the following formulation:
\begin{equation}
\begin{split}
    \min_{\vz \in \gZ}~~&l(\vx^*; \vz), \\
    \text{s.t.}~~&\vx^* = \bar \mLambda \vp^*, \quad \vp^* = \lim_{T \to \infty} h^{(T)}(\vp^0; \vz),
    \label{eq:reformulation}
\end{split}
\end{equation}
where $\vp^0 \in \gP$ is travelers' initial strategy. {

\begin{assumption}
    \label{ass:5-1}
    The initial strategy $\vp^0$ is chosen such that $\gQ(\vp^0) \cap \gP^*(\vz) \neq \emptyset$ for all $\vz \in \gZ$.
\end{assumption}

\begin{assumption}
    \label{ass:5-2}
    There exists $c > 0$ such that $c(\cdot; \vz)$ is $c$-cocoercive on $\gP$ for all $\vz \in \gZ$.
\end{assumption}

With the above two assumptions, Theorem \ref{thm:convergence} guarantees $\vp^* = \lim_{T \to \infty} h^{(T)}(\vp^0; \vz) \in \gP^*(\vz)$ if $r$ is set to be a sufficiently small constant such that $r \leq 2c$. Moreover, per Assumption \ref{ass:1.5} imposed earlier, the value of $l(\vx^*; \vz)$, where $\vx^* = \bar \mLambda \vp^*$, is the same for all $\vp^* \in \gP^*(\vz)$. Taken together, choosing $r \leq 2c$ ensures the equivalence between Problems \eqref{eq:reformulation}  and \eqref{eq:bilevel}.

The new, ILD-based formulation \eqref{eq:reformulation}  has two advantages over the original one. {First}, differentiating the objective function with respect to $\vz$ is easier.  Since $\vx^*$ is now defined explicitly through ILD rather than implicitly by the WE conditions, expensive implicit differentiation, as customary in SAB methods, can be avoided.  Instead, the explicit ILD formulation enables direct differentiation using AD, which, as we shall see, significantly improves scalability. {Second}, when the uniqueness of WE cannot be secured, the equilibrium ``chosen" by the new formulation is the limit of an evolutionary dynamical process, which can be interpreted as the most likely outcome corresponding to a specific choice behavior. This feature is especially convenient when different WE strategies do not always correspond to the same cost for the leader. In such a case, the new formulation obtains a unique solution (i.e., the limit of the ILD process) without having to impose additional criteria (see Appendix \ref{app:extension} for more discussions).}

We propose a double-loop mirror descent algorithm  (DolMD) (see Algorithm \ref{alg:stackelberg}) for solving Problem \eqref{eq:reformulation}:  the inner loop over $t$ solves the lower-level equilibrium problem, whereas the outer loop over $i$ optimizes the leader's decision.
\begin{algorithm}[ht]
   \caption{A Double-loop MD (DolMD) algorithm for solving Problem \eqref{eq:reformulation}.}
   \label{alg:stackelberg}
\begin{algorithmic}[1]
{\footnotesize
   \State {\bfseries Input:} $\vp^0 \in \gP$, $\vz^0 \in \gZ$, upper-level step size $\rho > 0$, lower-level step size $r > 0$, equilibrium threshold value $\varepsilon > 0$.
   \vspace{1.5pt}
   \For{$i = 0, 1, \ldots$}
   \vspace{1.5pt}
    \State {\bfseries FP} (running Algorithm \ref{alg:forward} until convergence):
    \vspace{1.5pt}
    \For{$t = 0, 1, \ldots$}
        \State Run $\vp^{t + 1} = h(\vp^t; \vz^i)$. If $\delta(\vp^t; \vz^i) \leq \varepsilon$, break and set $T = t$ and $l^T = l(\bar \mLambda \vp^T; \vz)$.
        \vspace{1.5pt}
    \EndFor
    \vspace{1.5pt}
    \State {\bfseries BP}: Calculate $l_{\vz} = \partial l^T / \partial \vz_i$ by unrolling FP via AD (AD tools automatically programmed and executed Algorithm \ref{alg:backward}).
    \vspace{1.5pt}
    \State {\bfseries Update the leader's decision}: set $\vz^{i + 1} = \argmin_{\vz \in \gZ}~\rho \cdot \langle l_{\vz}, \vz - \vz^{i} \rangle + D_{\psi}(\vz, \vz^i)$.
    \vspace{1.5pt}
    \State If $\vz^{i}$ converges, break and set $\vz^* = \vz^i$.
   \EndFor
}
\end{algorithmic}
\end{algorithm}
In each inner loop, the leader first anticipates the travelers' best response by iterating $\vp^{t + 1} = h(\vp^t; \vz^i)$ until a sufficiently precise WE is achieved. The termination condition is described by a gap function $\delta: \gP \times \gZ \to \sR$, defined as
\begin{equation}
    \delta(\vp; \vz) = -\frac{\langle c(\vp; \vz), \vp' - \vp \rangle}{\langle c(\vp; \vz), \vp \rangle}, \quad \text{where}~\vp' \in \argmin_{\vp'' \in \gP}~\langle c(\vp; \vz), \vp''  \rangle.
    \label{eq:eq-gap}
\end{equation}
Here, the route choice $\vp'$ given by Equation \eqref{eq:eq-gap} is often known as the all-or-nothing assignment in the traffic assignment literature. When $\delta(\vp^t; \vz^i) \leq \varepsilon$ (i.e., $\vp^t$ is sufficiently accurate as a WE),  we terminate the inner loop and set $T = t$. The route choice solution $\vp^T$ is then accepted as the WE strategy for evaluating the leader's cost $l^T = l(\vx^T; \vz)$, where $\vx^T = \bar \mLambda \vp^T$. The gradient $l_{\vz} = \partial l^T / \partial \vz_i$ is directly computed via AD, which automatically programs and executes Algorithm \ref{alg:backward}. Then, $l_{\vz}$ is fed back to the outer loop to update $\vz^i$ to $\vz^{i + 1}$ via one mirror descent (MD) step. Specifically,
$$
    \vz^{i + 1} = \argmin_{\vz \in \gZ}~\rho \cdot \langle l_{\vz}, \vz - \vz^{i} \rangle + D_{\psi}(\vz, \vz^i),
$$
where the choice of the Bregman divergence $D_{\psi}: \gZ \times \gZ \to \sR$ is application specific. The resulting algorithm (Algorithm \ref{alg:stackelberg}) is dubbed the double-loop MD (DolMD) algorithm as it has a double-loop structure, and both its inner and outer loops are related to the MD method.

{
Assumption \ref{ass:5-1} requires $\vp^0$ is selected such that for all $\vz \in \gZ$, there exists $\vp^* \in \gP^*(\vz)$ whose support is covered by $\supp(\vp^0)$. To meet this requirement, one may simply enumerate all available routes in the network and then set $\vp^0$ by assigning a non-zero initial flow to each route.  Such a brute force approach is, of course, infeasible on networks of practice size. Nor can one know in advance which set can cover all equilibrium routes. This challenge may be tackled through a standard \textit{route generation} routine that gradually builds and maintains a set of routes identified as having the potential to be used at equilibrium. In the appendix, we provide the pseudocode of DolMD with such a routine (Algorithm \ref{alg:stackelberg-rg} in Appendix \ref{app:DolMD}). 

}

{

{ Both FP and BP phases in Algorithm \ref{alg:stackelberg} have a complexity of $\gO(T \cdot |\gK|)$ per our analysis in Section \ref{sec:ad-gradient}, where $T$ represents the number of ILD iterations required to converge to a sufficiently precise WE, determined by the threshold $\varepsilon$. This, however, is an overestimation, as the practical implementation of Algorithm \ref{alg:stackelberg} does not need to involve \textit{all} available routes.  Recall that the total number of equilibrium routes is no more than $\tau \cdot |\gW|$.
If we generate WE routes iteratively, the complexity can be reduced to $\gO(T \cdot \tau \cdot |\gW|)$. Our numerical experiments indicate that for most practical purposes, a satisfactory WE solution can be reached with no more than a few hundred iterations ($T<1000$).
Therefore, for large-scale instances (where $|\gW|$ is in the order of thousands or more), DolMD would hold a significant computational advantage over SAB methods, which has a complexity of $\mathcal{O}(\tau^3 \cdot |\gW|^3)$ (see Section \ref{sec:upper}).  Our numerical results will confirm this analysis (see Section \ref{sec:experiments}).}

}

\section{A Single-Loop Algorithm}
\label{sec:single-loop}

As mentioned in Section \ref{sec:related}, the significant computational challenges involved in solving SCGs exactly have led many to develop approximation schemes. Of these, the most widely tested is to approximate the Stackelberg game with a Cournot game \citep{fisk1984game}, 
{ 
\begin{equation}
\begin{cases}
    \displaystyle \tilde \vz \in \argmin_{\vz \in \gZ}~l(\tilde \vx; \vz), \quad \text{where}~\tilde \vx =   \bar \mLambda \tilde \vp, \\[10pt]
    \tilde \vp \in \gP^*(\tilde \vz).
\end{cases}
\label{eq:cournot-o}
\end{equation}
In essence, this scheme takes away from the leader the power of anticipating the traveler's best response, and forces it to make decisions without this advantage. Thus, the leader's competitive edge is weaker in the Cournot game than in the Stackelberg game. Nevertheless,  previous studies have found the scheme can provide good-quality approximation for some SCGs, especially CNDPs.

Here, we set out to take this idea one step further by introducing a new feature called \emph{limited anticipation}, which gives back to the leader a limited ability to anticipate the travelers' response.    Corresponding to limited anticipation is the following model
\begin{equation}
\begin{cases}
    \displaystyle \tilde \vz \in \argmin_{\vz \in \gZ}~l(\tilde \vx; \vz), \quad \text{where}~\tilde \vx = \bar \mLambda h^{(T)}(\tilde \vp; \vz), \\[10pt]
    \tilde \vp \in \gP^*(\tilde \vz),
\end{cases}
\label{eq:cournot}
\end{equation} 
where $T$ is a positive integer.
In the new model, when $\tilde \vp$ is given, the leader no longer directly minimizes their cost dictated by the travelers' strategy $\tilde \vp$ as in Model \eqref{eq:cournot-o}. Instead, it makes decisions while anticipating that the travelers would move $T$ steps along their evolutionary path described by ILD, hence changing their strategy to $h^{(T)}(\tilde \vp; \vz)$.

There are a couple of reasons why  Model \eqref{eq:cournot}, with its limited anticipation, is expected to outperform the classic Cournot approximation.
\begin{itemize}
\item First, the new model better approximates the original SCG given by Problem \eqref{eq:bilevel} than the Cournot model.      Whereas the leader minimizes $l^*(\vz) = l(x^*(\vz); \vz)$ in the SCG, it minimizes in the Cournot model a surrogate $g(\tilde \vp; \vz) := l(\bar \mLambda \tilde \vp; \vz)$, where $\tilde \vp$ is fixed.  In Model \eqref{eq:cournot}, the leader adopts a new surrogate
    \begin{equation}
        g^{(T)}(\tilde \vp; \vz) := l(\bar \mLambda h^{(T)}(\tilde \vp; \vz); \vz),
    \end{equation}
for a fixed $\tilde \vp$.  The new surrogate $g^{(T)}(\tilde \vp; \vz)$ is a better approximation of  $l^*(\vz)$ than $g(\tilde \vp; \vz)$, since ILD, destined to converge to WE, can bring $\bar \mLambda h^{(T)}(\tilde \vp; \vz)$ closer to $x^*(\vz)$ than $\bar \mLambda \tilde \vp$. %

\item Second,  Model \eqref{eq:cournot} has greater applicability than the Cournot model. In many SCG applications, the leader's cost function $l(\vx; \vz)$ may not explicitly rely on $\vz$. In SCTP, for example, the cost to be minimized is the total travel time, which only depends on link flows. While link tolls affect link flows  (i.e., the decision vector $\vz$)  in the lower-level problem, they do not directly contribute to the leader's cost.  In this situation, the leader's problem per the Cournot approximation becomes minimizing $l(\tilde \vx)$ ($\tilde \vx = \bar \mLambda \tilde \vp$) over $\vz \in \gZ$ --- an idle problem since $\vz$ does not even affect $l(\tilde \vx)$.
 Model \eqref{eq:cournot} is not subject to this peculiar limitation, because the leader's cost is  $l(\bar \mLambda h^{(T)}(\tilde \vp; \vz))$, which depends on $\vz$ as long as $T >0$.
 
\end{itemize}

 }

{ { %
To solve Problem \eqref{eq:cournot}, we note that it may also be viewed as a Cournot game played by the leader and the travelers, in which the leader's cost function is changed to $g^{(T)}(\vp; \vz)$.  The idea leads to a ``single-loop" heuristic for solving the original problem, as described in Algorithm \ref{alg:cournot}.  In each iteration $i$, the leader and the travelers move forward --- locally update their decisions to reduce their costs --- simultaneously.} On the one hand, the travelers advance along the  ILD trajectory from $\vp^i$ to $h^{(T)}(\vp^i; \vz^i)$. On the other hand,  the leader updates its current decision $\vz^i$ by one MD step, minimizing $g^{(T)}(\vp^i; \vz)$ while fixing $\vp^i$, leading to
$$
    \vz^{i + 1} = \argmin_{\vz \in \gZ}~\rho \cdot \langle l_{\vz}, \vz - \vz^{i} \rangle + D_{\psi}(\vz, \vz^t),
$$
where $l_{\vz} = \partial g^{(T)}(\vp^i; \vz^i) / \partial \vz^i$. AD is again invoked to calculate $l_{\vz}$.  Specifically, in the forward pass, Algorithm \ref{alg:forward} is called to evaluate $g^{(T)}(\vp^i; \vz^i)$, with $\vp^0$ being  replaced by $\vp^i$. Then, $l_{\vz}$ can be obtained by automatically executing Algorithm \ref{alg:backward} (the backward pass).
}

\begin{algorithm}[H]
   \caption{A single-loop MD (SilMD) algorithm for solving Problem \eqref{eq:cournot}}
   \label{alg:cournot}
\begin{algorithmic}[1]
{\footnotesize
   \State {\bfseries Input:} $\vp^0 \in \gP$, $\vz^0 \in \gZ$, upper-level learning rate $\rho$, lower-level learning rate $r$
   \For{$i = 0, 1, \ldots$}
    \vspace{1.5pt}
    \State {\bfseries FP}: Calculate $l^T = g^{(T)}(\vp^i; \vz^i)$ by calling Algorithm \ref{alg:forward}.
    \vspace{1.5pt}
    \State {\bfseries BP}: Calculate $l_{\vz} = \partial g^T / \partial \vz^i$ by calling Algorithm \ref{alg:backward} (implemented via AD).
    \vspace{1.5pt}
    \State {\bfseries Update both the leader's decision and the travelers' strategies}:
    \vspace{1.5pt}
    \State Set $\displaystyle \vz^{i + 1} = \argmin_{\vz \in \gZ}~\rho \cdot \langle l_{\vz}, \vz - \vz^{i} \rangle + D_{\psi}(\vz, \vz^t)$ and $\vp^{i + 1} = h^{(T)}(\vp^i; \vz^i)$.
    \vspace{1.5pt}
    \State If $\delta(\vp^i; \vz^i) \leq \varepsilon$  and $\vz^{i}$ converges, break and set $\tilde \vz = \vz^i$ and $\tilde \vp= \vp^i$.
   \EndFor
}
\end{algorithmic}
\end{algorithm}

Algorithm \ref{alg:cournot} promises significant computational advantages over Algorithm \ref{alg:stackelberg} thanks to co-evolution and limited anticipation.   {   In Algorithm \ref{alg:stackelberg}, the outer loop guides the gradient descent of the leader's decision, whereas the inner loop determines the travelers' best response. The leader does not update its decision until the lower-level congestion game reaches (a sufficiently precise) equilibrium.  Also, after the decision is updated, the congestion game starts over from scratch. This structure means the congestion game will be solved multiple times, and each time, the number of iterations required to reach a desirable precision is unknown ex-ante. As discussed earlier, the complexity of FP and BP operations (Algorithms \ref{alg:forward} and \ref{alg:backward}) is proportional to this number of iterations. Thus, if the congestion game takes more iterations to equilibriate, FP and BP operations will consume more computation time.  The problem goes beyond computation time. There is also a concern for storage because the number of iterations also determines the depth of the computational graph constructed for FP and BP operations.  Co-evolution in Algorithm \ref{alg:cournot}  means the leader and the travelers are allowed to adjust their actions simultaneously, according to their counterpart's previous response.  As in the Cournot model, there is no need to reach equilibrium in an inner loop.   Unlike in the Cournot model, however, the leader can foresee $T$ steps into the future based on the current strategy taken by the travelers.  This gives it a competitive edge expected to better approximate the position of the leader in a Stackelberg game. At the same time,  by capping the number of forward anticipation steps at a predetermined small $T$, Algorithm \ref{alg:cournot} also assuages the aforementioned concern on computational inefficiency.

{ It is worth reiterating that Algorithm \ref{alg:cournot} is a heuristic designed to find an approximate local solution. Its heuristic nature is underscored by the following observations:  (i) the algorithm solves an approximate model for the original SCG;  (ii) it aims to find a local solution to that model, and (iii) its convergence to a location solution has yet to be established rigorously (e.g., what combination of $\rho$ and $r$ can ensure convergence). In this study, we choose to demonstrate its performance through a comprehensive numerical study while leaving the theoretical analysis of convergence behavior and solution quality to future work.}

We close by noting Algorithm \ref{alg:cournot} can also be equipped with a standard {route generation} routine to gradually build a cover of WE routes. In Appendix \ref{app:DolMD}, we provide the pseudocode of SilMD with such a routine (see Algorithm \ref{alg:cournot-rg}). 
}

\section{Applications}
\label{sec:experiments}

{
We test the proposed algorithms in two classic SCG applications: the continuous network design problem (CDNP, Section \ref{sec:exp-cndp}) and the second-best congestion tolling problem (SCTP, Section \ref{sec:exp-sctp}).
The experiments are conducted on five networks, referred to for simplicity as Braess, Hearn, Sioux Falls, Barcelona, and Chicago Sketch. All have been used frequently in transportation literature. The topology of Braess and Hearn are shown in Figures \ref{fig:n1} and \ref{fig:n2}, respectively. For the other three,  Sioux-Falls has 528 OD pairs, Barcelona has 7865 OD pairs, and Chicago Sketch has 93512 OD pairs. The reader may consult the Transportation Networks GitHub Repository \citep{TransportationNetworks} for details. In all experiments, the travel time function takes the Bureau of Public Roads (BPR) form, i.e.,  $$u_{\text{time}, a}(\evx_a) = \evu_{a, 0} \cdot (1 + 0.15 \cdot (\evx_a / \evv_{a, 0})^{4}),$$ where $\evu_{a, 0}$ is free-flow travel time and $\evv_{a, 0}$ represents link capacity. %
\noindent
\begin{figure}[ht]
    \begin{minipage}[t]{0.45\textwidth}
    \centering
    \includegraphics[height=0.288\textwidth]{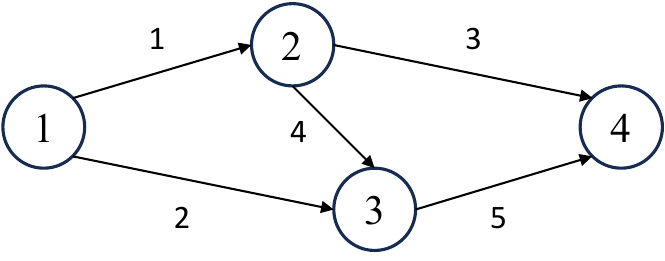}
    \vspace{0.025in}
    \captionof{figure}{Braess network.}
    \label{fig:n1}
    \end{minipage}
    \begin{minipage}[t]{0.54\textwidth}
    \centering
    \includegraphics[height=0.23\textwidth]{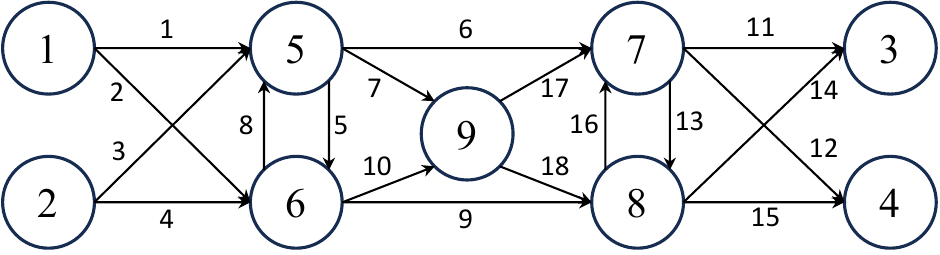}
    \vspace{0.025in}
    \captionof{figure}{Hearn network.}
    \label{fig:n2}
    \end{minipage}
    
    \vspace{-0.05in}
\end{figure}

We shall compare the proposed algorithms (DolMD and SilMD) with a wide range of alternatives from the literature, all implemented, coded, and tested in the same computation environment. { Both DolMD and SilMD are equipped with a route generation routine to build a cover of WE routes from scratch.} Appendix \ref{app:comparing-alg} provides the implementation details of all tested algorithms. 
Below, we outline the steps taken to ensure a fair comparison.

\begin{itemize}
    \item \textbf{WE calculation}. Many alternative algorithms require finding WE (i.e., solving a traffic assignment problem). Rather than ILD,  they may choose a more efficient traffic assignment algorithm. To this end, the improved gradient projection (iGP) algorithm, a highly efficient, route-based traffic assignment algorithm developed based on \citet{jayakrishnan1994faster} and \citet{xie2018greedy}, is employed. When the traffic assignment problem must be solved repeatedly, the iGP algorithm is initialized by a ``warm-start," i.e., starting from the WE solution obtained from the previous iteration. 
    \item \textbf{Equilibrium threshold}.
    Our algorithms need a threshold (i.e., the parameter $\varepsilon$ in Algorithms \ref{alg:stackelberg} and \ref{alg:cournot}) to determine whether the quality of the lower-level solution is adequate. The same requirement applies to any alternative algorithm for which finding WE is a subroutine. To ensure consistency, the same threshold value is used across all algorithms for the same test instance. %
    \item { \textbf{Convergence threshold}. In most cases, the solution process is terminated when $\|\vz^i - \vz^{i - 1} \|_{\infty} < \xi$, i.e., the difference between two successive solutions becomes sufficiently small.  For some algorithms, including SilMD, additional convergence criteria may be needed. 
    }

    \item \textbf{Objective function value.} 
    The leader's objective function value depends on  
    the decision solution vector $\vz^*$ and the corresponding WE link flow.  For any $\vz^*$, the WE link flow is always obtained with the same equilibrium threshold ($\varepsilon \le 10^{-6}$) across all algorithms.  %
\end{itemize}

Numerical results were produced on a Ubuntu 20.04.4 LTS workstation with 80 $\times$ Intel(R) Xeon(R) Gold 6242R CPUs and 10 $\times$ NVIDIA RTX A6000 iGPUs (CUDA version 12.4). %

\subsection{Continuous network design problems (CNDPs)}
\label{sec:exp-cndp}

In a CNDP, the leader is a network designer aiming to expand road capacity, whose decision $\vz \in \sR_+^{|\gA|}$ represents the capacity to be added to the current capacity $\vv_0 \in \sR_+^{|\gA|}$. After expansion, the cost for using link $a \in \gA$ becomes $u_a(\vx; \vz) = \evu_{0, a} \cdot (1 + 0.15 \cdot (\evx_a / (\evv_{0, a} + \evz_a))^4)$. We assume that expansion could only take place on a predetermined set of links $\tilde \gA \subseteq \gA$. Thus, $\evz_a = 0$ for all $a \in \gA \setminus \tilde \gA$. The cost of expansion is assumed to be $m(\vz) = \langle \vw, \vz^2 \rangle$ for some $\vw \in \sR_+^{|\gA|}$.
Accordingly, the leader's cost function can be written as $l(\vx; \vz) =  \langle \vx, u(\vx; \vz) \rangle + \beta \cdot m(\vz)$ for some $\beta > 0$, which is a weighted sum of total travel time (user cost) and expansion cost.

We test CNDPs on four networks: Braess, Sioux Falls, Barcelona, and Chicago Sketch.  For Sioux Falls, which is perhaps the most popular CNDP test case, ten out of 76 links can be expanded \citep{suwansirikul1987equilibrium}.  For Braess, expansion is allowed on all links.   On Barcelona and Chicago Sketch,  the number of links that can be expanded is set to 50. 

\vspace{0.05in}
\noindent
\textbf{Benchmark solutions.} For Braess, the cutting plane (CP) algorithm \citep{marcotte1983network} can solve the problem exactly (see Appendix \ref{app:cp} for more details). On Sioux-Falls, the best solution recorded in literature was given by the simulated annealing algorithm proposed by \citet{friesz1992simulated}. After 3900 objective evaluations, it reached an objective function value of $80.29$. We note that this value is obtained using our equilibrium threshold ($\varepsilon = 10^{-6}$); the value reported in their paper is $80.87$, which presumably corresponds to a less precise WE solution. 
We implemented the dual annealing (DA) algorithm  \citep{xiang1997generalized}, an algorithm that combines the advantages of classic simulated annealing and fast simulated annealing at a higher computation cost of 20,000 objective evaluations, which reached an objective function value of $79.90 < 80.29$. This solution is treated as the benchmark. For Barcelona and Chicago-Sketch, the benchmark solutions are also generated by our DA implementation, using 20,000 and 10,000 objective evaluations, respectively. %

\vspace{0.05in}
\noindent
\textbf{Alternative algorithms.} We consider the following algorithms: (1) \citet{yang2007stackelberg}'s SAB algorithm; see Appendix \ref{app:sab}. (2) Two classic CNDP heuristics: the iterative optimization-assignment (IOA) algorithm, which solves the approximation problem \eqref{eq:cournot-o} (Algorithm \ref{alg:ioa} in Appendix \ref{app:heuristics-cndp}); and the system-optimization (SO) algorithm, which solves the following convex program
\begin{equation}
    (\hat \vz, \hat \vp) \in \argmin_{\vz \in \gZ, \ \vp \in \gP} l(\vx; \vz), \quad \text{s.t.}~\vx = \bar \mLambda \vp,
\label{eq:so-main}
\end{equation}
and takes $\hat \vz$ as an approximate solution.  Problem \eqref{eq:so-main} is solved via a specialized algorithm (Algorithm \ref{alg:so} in Appendix \ref{app:heuristics-cndp}), which is significantly more efficient than using off-the-shelf solvers built in Python. %

\subsubsection{Braess}
It is well known expanding the capacity on Link 4 (the ``bridge" link in Figure \ref{fig:cndp-braess}) in Braess is counterproductive since it is bound to increase the total travel time at WE \citep{braess1968paradoxon}. Therefore, we expect the optimal capacity expansion scheme would forbid expanding that link. The benchmark solution given by the CP algorithm (the exact solution; see Table \ref{tab:cndp-braess}) confirms this intuition.  
}

{
In total, five algorithms, namely, IOA, SO, SAB, DolMD, and SilMD, are tested on Braess. For DolMD and SAB, we fix the convergence threshold $\xi = 10^{-5}$ and vary the equilibrium threshold $\varepsilon$ from $10^{-6}$ (high precision) to $10^{-3}$ (low precision). For SilMD, we vary $T$ from 1 to 5 while fixing $\xi = 10^{-5}$ and $\varepsilon = 10^{-7}$. 

Figure \ref{fig:cndp-braess} reports the optimality gaps of the solutions obtained from each algorithm, defined as the \textit{relative difference} between their corresponding objective values and the objective function value of the exact solution. Note that the curves representing IOA and SO are flat because, per design, their performance is not affected by $\varepsilon$ or $T$.   Table \ref{tab:cndp-braess} compares the objective function value, as well as the optimal values of new capacity ($\vz$), obtained by all algorithms. For DolMD and SilMD, the table reports only their worst solutions, obtained with the lowest equilibrium threshold for DolMD and with the smallest $T$ for SilMD.

\begin{figure}[ht]
\centering
\noindent
\vspace{0.1in}
\begin{minipage}[t]{0.54\textwidth}
    \centering
    \vspace{-1.15in}
    \captionof{table}{Solutions obtained by CP, IOA, SO, DolMD (with $\varepsilon = 10^{-3}$), and SilMD (with $T = 1$) together with their objective values in solving CNDP on Braess.}
    \scriptsize
    \vspace{0.1in}
    \begin{tabular}{cccccccc}
    \toprule
    Method & objective &       & $\evz_1$ & $\evz_2$  &$\evz_3$  & $\evz_4$  & $\evz_5$  \\
    \midrule
    CP    & 28.919 &       & 0.930 & 0.017 & 0.017 & 0     & 0.930 \\
    SAB   & 28.919 &       & 0.930 & 0.017 & 0.017 & 0     & 0.930 \\
    IOA   & 38.786 &       & 2.075 & 0     & 0     & 2.830 & 2.075 \\
    SO   & 29.275 &       & 0.825 & 0.030 & 0.030 & 0.113 & 0.825 \\
    DolMD-$10^{-3}$ & 28.921 &       & 0.944 & 0.016 & 0.016 & 0     & 0.944 \\
    SilMD-1 & 28.925 &       & 0.966 & 0.015 & 0.015 & 0     & 0.966 \\
    \bottomrule
    \end{tabular}%
  \label{tab:cndp-braess}%
    \end{minipage}
    \hfill
\noindent
    \begin{minipage}[t]{0.43\textwidth}
    \centering        \includegraphics[width=0.95\textwidth]{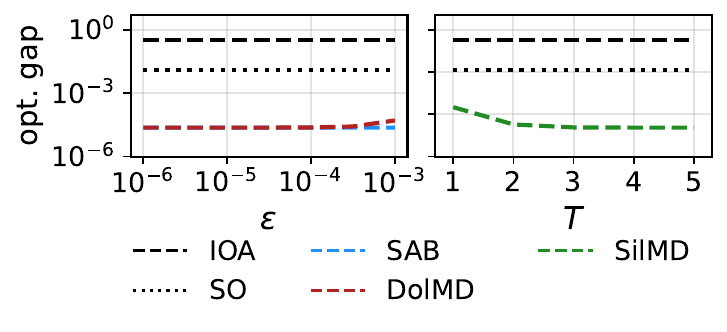}
    \captionof{figure}{Optimality gaps of the solutions obtained by IOA, SO, SAB, DolMD, and SilMD in solving the CNDP problem on Braess.}
    \label{fig:cndp-braess}
    \end{minipage}
\end{figure}

The SAB solution is almost identical to the exact solution and,  interestingly,  not affected much by $\varepsilon$ in this network. The IOA solution stands out in that it advises investing heavily on the wrong link, Link 4. Not surprisingly, it produces a total system travel time nearly 35\% worse than the optimal value. The SO solution is much better, though the investment on Link 4 remains noticeably above zero.  DolMD performs equally well as SAB when $\varepsilon < 10^{-3.5}$. For a coarser equilibrium threshold, DolMD's performance begins to degrade. Nevertheless, at $\varepsilon = 10^{-3}$, DolMD still obtains an objective function value of 28.9206, far better than either the IOA or the SO solution. As for SilMD, its performance stabilizes and matches that of SAB once $T \geq 2$. At $T = 1$, it performs slightly worse than  DolMD at  $\varepsilon = 10^{-3}$. As a sign of robustness, even the worst solutions given by DolMD and SilMD never invest a dim on the bridge link.

\subsubsection{Larger networks} 
For larger networks, no exact solutions are known.  To assess the quality of a given solution $\vz$, we scale $l^*(\vz)$ to $(l^*(\vz) - l_{\text{so}}) / l_{\text{so}}$ by using $l_{\text{so}} = l(\hat \vz; \hat \vp)$, i.e., the optimal objective function value of the solution to the SO problem \eqref{eq:so-main}, as a lower bound.   The first two lines of Table \ref{tab:cndp-result} report this relative objective function value for the all-zero solution $\vz = \vzero$ and the benchmark solution given by the DA algorithm. %

We start all algorithms from the initial solution $\vz^0 = \vzero$. For IOA, SAB, DolMD, and SilMD, the equilibrium threshold $\varepsilon$ is set to $10^{-5}$ on Sioux Falls and $10^{-4}$ on the other two. For SilMD, two values of $T$, 10 and 40, are tested. The results are reported in Table \ref{tab:cndp-result} and Figure \ref{fig:cndp}.

\begin{table}[htbp]
  \centering
  \caption{Performance of IOA, SO, SAB, DolMD, and SilMD (with $T = 10, 40$) in solving CNDPs on the Sioux-Falls, Barcelona, and Chicago-Sketch networks. The ``rel. obj." column indicates the relative objective value of the solutions; the ``time" and ``iters" columns record computation time and the number of iterations, respectively; the ``tpi" column reports the average computation time per iteration.}
  \vspace{0.05in}
  \scriptsize
    \begin{tabular}{cccccccccccccccc}
    \toprule
    \multirow{2}[4]{*}{Method} &       & \multicolumn{4}{c}{Sioux Falls} &       & \multicolumn{4}{c}{Barcelona} &       & \multicolumn{4}{c}{Chicago Sketch} \\
\cmidrule{3-6}\cmidrule{8-11}\cmidrule{13-16}          &       & rel. obj. & time (s) & iters & tpi (s) &       & rel. obj. & time (min) & iters & tpi (s) &       & rel. obj. & time (min) & iters & tpi (s) \\
    \midrule
    All-zero &       & 28.63\% &       &       &       &       & 4.33\% &       &       &       &       & 6.92\% &       &       &  \\
    \midrule
    DA &       & \textcolor[rgb]{ 1,  0,  0}{2.83\%} & 1436  & $20000$ & 0.072 &       & 3.00\% & 514   & $20000$  & 1.5   &       & 3.48\% & 2244  & $10000$  & 13 \\
    \midrule
    SAB   &       & \textcolor[rgb]{ 1,  0,  0}{2.83\%} & 117   & 336   & 0.35  &       & \multicolumn{4}{c}{\textcolor[rgb]{ .502,  .502,  .502}{Does not converge}} &       & \multicolumn{4}{c}{\textcolor[rgb]{ .502,  .502,  .502}{Exceeds memory limit}} \\[2pt]
    DolMD &       & \textcolor[rgb]{ 1,  0,  0}{2.83\%} & 19    & 64    & 0.30  &       & \textcolor[rgb]{ 1,  0,  0}{2.62\%} & 7.3  & 259   & 1.7   &       & \textcolor[rgb]{ 1,  0,  0}{3.39\%} & 81    & 474   & 10 \\
    \midrule
    IOA   &       & 3.50\% & 0.73  & 12    & 0.061 &       & 2.99\% & 0.22  & 6     & 2.2   &       & 3.43\% & 1.5   & 7     & 13 \\
    SO    &       & 3.23\% & 0.76  & 43    & 0.018 &       & 2.72\% & 1.8  & 69    & 1.6   &       & 3.41\% & 9.3   & 57    & 9.8 \\
    SilMD-10 &       & 3.23\% & 1.3   & 90    & 0.014 &       & 2.67\% & 1.4  & 297   & 0.28  &       & 3.40\% & 4.4   & 542   & 0.49 \\
    SilMD-40 &       & 2.97\% & 2.3   & 60    & 0.038 &       & 2.65\% & 2.2  & 355   & 0.38  &       & 3.40\% & 17    & 519   & 1.9 \\
    \bottomrule
    \end{tabular}%
  \label{tab:cndp-result}%
\end{table}%

\textbf{DA vs. DolMD.} Figure \ref{fig:cndp} reveals a consistent pattern exhibited by the DA algorithm across all test instances: the relative objective value declines fast in episodes separated by long phases of ``inaction."  For example, in  Chicago-Sketch, little progress was made after the first two minutes until about two hours later. The next 1000 minutes saw steady improvement, followed by a 21-hour period during which the solution remained virtually unchanged.  This behavior is consistent with what one would expect from a meta-heuristic like DA.   Across all three networks, DolMD always performs as well as DA, but with a tiny fraction of the latter's computation time (about one hundredth).  In the two larger networks (Barcelona and Chicago-Sketch), DolMD actually delivers notably better solutions. While we do not know how close these solutions are to global optima, the fact that DolMD beats a well-known meta-heuristic with regularity highlights its ability to achieve high-quality solutions at an affordable computation cost.

\begin{figure}[ht]
\vskip 0.1in
\centering

\begin{subfigure}[b]{0.32\textwidth}
\includegraphics[width=1\columnwidth]{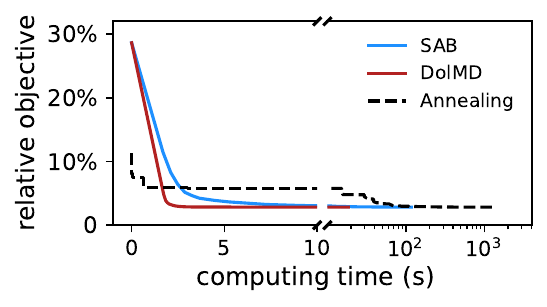}
\captionsetup{size=small}
\caption{Sioux Falls.}
\label{fig:cndp-sf}
\end{subfigure}
\begin{subfigure}[b]{0.32\textwidth}
\includegraphics[width=1\columnwidth]{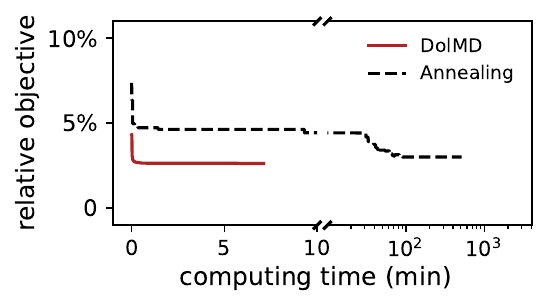}
\captionsetup{size=small}
\caption{Barcelona.}
\label{fig:cndp-bar}
\end{subfigure}
\begin{subfigure}[b]{0.32\textwidth}
\includegraphics[width=1\columnwidth]{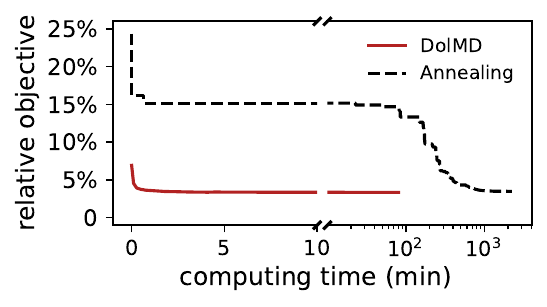}
\captionsetup{size=small}
\caption{Chicago Sketch.}
\label{fig:cndp-cs}
\end{subfigure}

\caption{The convergence curves of SAB (if the convergence is successful), DolMD, and the annealing algorithm in solving CNDPs on the Sioux-Falls, Barcelona, and Chicago-Sketch network.}
\label{fig:cndp}
\end{figure}

\textbf{SAB vs. DolMD.} 
For Sioux Falls, SAB and DolMD reached almost identical solutions that are as good as any solution obtained by the tested algorithms.  SAB required nearly five times as many iterations and six times as much computation time as DolMD. Its computation time per iteration is slightly higher than DolMD --- despite the use of iGP for solving WE --- likely because of the disadvantages inflicted by implicit differentiation (ID).  SAB's much-prolonged convergence is puzzling at first glance. 
The culprit, revealed by a close examination, turns out to be the identification of the maximum set of independent columns (MSIC), a key step in  \citet{yang2007sensitivity}'s algorithm. In our implementation, the original matrix is reduced by a QR decomposition to an upper triangular matrix, whose {non-zero} diagonal elements are then used to choose MSIC. However, this procedure may be numerically unstable since a column associated with a very small diagonal element may or may not belong to MSIC.  Ultimately, it is up to the modeler to decide ``how small can be deemed as zero."  We have tried different threshold values, but none have delivered a robust performance. Interestingly, this intricate problem did not affect SAB's ability to reach a high-quality solution on Sioux-Falls; it merely delayed convergence.

On Barcelona, DolMD converges smoothly, eventually reaching a solution much better than all competitors. In contrast, when the SAB algorithm was applied, the misidentification of MSIC became much more severe, causing the algorithm to diverge after five iterations. We subsequently tried the algorithm proposed by \citet{tobin1988sensitivity}, which does not require identifying MSIC. {Yet the algorithm still diverged (due to numerical errors of a different kind; see Appendix \ref{app:sab} for details) after 50 iterations, which lasted about 100 minutes. The long computation time per iteration indicates that even if the SAB algorithm can converge appropriately, it will take a significantly longer time than DolMD.}
On Chicago-Sketch, SAB became infeasible, as simply storing the linear system \eqref{eq:yang} is beyond the capability of our test environment.  DolMD not only worked but delivered the best solution after taking nearly 500 iterations and 1.5 hours of computation time.  To the best of our knowledge, this is the first time a first-order method has been applied to a CNDP of this scale.

\textbf{IOA, SO, and SilMD.} The performance of the two heuristics, IOA and SO, varies with the network. In Sioux Falls and Barcelona, the gap between them and DolMD is significant. For Chicago Sketch, however, the solution given by DolMD is only marginally better. Indeed, IOA and SO could provide nearly perfect solutions in some CNDP instances \citep[see, e.g., the numerical example in][]{marcotte1986network}.  In all three tests, IOA always runs faster and reaches a worse solution than SO, but either version of SilMD beats both IOA and SO in terms of solution quality. On Barcelona and Chicago Sketch, SilMD at $T = 10$ even outperforms SO in terms of computation time. %

\vspace{0.05in}
\noindent
\textbf{Summary.} DolMD and SAB tend to generate CNDP solutions of similarly high quality. However, DolMD is significantly more scalable and numerically stable.  IOA and SO performed surprisingly well in large CNDP instances, providing high-quality approximate solutions at a computation cost much lower than that required by DolMD. SilMD, on the other hand, seems to promise a more favorable balance between quality and efficiency than both DolMD and the heuristics.

\subsection{Second-best congestion tolling problems (SCTPs)}
\label{sec:exp-sctp}

In an SCTP, the leader is an infrastructure manager whose goal is to minimize the adverse impact of traffic congestion by choosing a vector of link toll, denoted as $\vz \in \sR_+^{|\gA|}$. With toll, the link cost function becomes $u_a(\vx; \vz) = u_{\text{time}, a}(\evx_a) + \lambda \cdot \evz_a$, where $1/\lambda$ is the traveler's value of time, assumed to be a constant across the population for simplicity. The toll can only be levied on a subset of links $\tilde \gA \subseteq \gA$, i.e., $\evz_a = 0$ for all $a \in \gA \setminus \tilde \gA$.  The leader's cost is the total travel time of all travelers, i.e., $l(\vx) = \langle u_{\text{time}}(\vx), \vx \rangle$. 
To evaluate the effectiveness of a tolling scheme, we compare the total travel time it induces with two reference points: $T_{\text{we}} = l(\vx_{\text{we}})$, where $\vx_{\text{we}}$ is the no-toll WE link flow pattern; and $T_{\text{so}} = l(\vx_{\text{so}})$, where $\vx_{\text{so}} = \argmin_{\vx \in \gX} l(\vx)$ is the system-optimal (SO) link flow pattern.
We thus gauge the effectiveness of a toll scheme $\vz$ by the \emph{relative excessive delay} (r.e.d.), computed by $(l^*(\vz) - T_{\text{so}}) / (T_{\text{we}} - T_{\text{so}})$. Clearly, the relative excessive delay must range between zero and one, and the closer to zero, the better.

We test SCTPs on Hearn and the three larger networks. 
In Hearn, we consider two settings: (i) only Links 11 and 12 are tolled; (ii) only Links 11, 12, and 2 are tolled. The Sioux-Falls instance, in which 18 links are tolled, is taken from \citet{lawphongpanich2004mpec}. For Barcelona and Chicago-Sketch,  we pick 20 and 40 links, respectively, for tolling.

\vspace{0.05in}
\noindent
\textbf{Benchmark solutions.} In Hearn, the decision variables are constrained in a low dimensional space so that global optima can be found via a brute-force grid search. { On Sioux-Falls,  the result given by the CP algorithm of \citet{lawphongpanich2004mpec} indicates that the optimal value is between  $72.6238$ (the upper bound)  and $72.1036$ (the lower bound); see Appendix \ref{app:cp} for how the lower bound is derived.} This corresponds to a relative excessive delay ranging between 5.64\% (the lower bound) and 23.87\% (the upper bound). Using the DA algorithm (with 32,000 objective evaluations), we located a solution with an objective value of $72.4216$ or a relative excessive delay of $16.78\%$. Since this solution is much better than the upper bound obtained by \citet{lawphongpanich2004mpec}, we take it as the benchmark. Similarly, for Barcelona and Chicago-Sketch, the DA algorithm, with 20,000 and 10,000 objective evaluations, respectively, is employed to produce a benchmark solution.

\vspace{0.05in}
\noindent
\textbf{Alternative algorithms}. We compare the proposed algorithms with three recent heuristic methods proposed by \citet{harks2015computing}: the marginal cost tolling (MCT) algorithm, the exponential marginal cost difference tolling (EMCDT) algorithm, and the combinatorial tolling (CT) algorithm. We refer the readers to Appendix \ref{app:h-sctp} for implementation details. Note that MCT is slightly modified to improve reliability. Hence, it is referred to as the revised MCT or rMCT hereafter.

\subsubsection{Hearn} 
The test results for Hearn are reported in Table \ref{tab:sctp-hearn}. According to the direct search method, the optimal tolling schemes can reduce the relative excessive delays to $97.6 \%$ and $73.1 \%$ respectively, in the first (Links 11 and 12 are tolled) and the second (Links 2, 11 and 12 are tolled) settings. This result indicates that tolling Link 2 is of high value to congestion reduction. For Setting (i), all three heuristics failed to identify a solution that can improve upon the status quo --- the relative excessive delay goes above one, implying the solution is worse than imposing no tolls. In Setting (ii), where tolling Link 2 is permitted, the heuristics performed much better, yielding solutions resembling the global optima reasonably well.

\begin{table}[ht]
  \centering
  \caption{Comparison of different algorithms in solving SCTP on two Hearn instances. The first column in each sub-table reports the global optima given by direct search.  The second column reports the results of SAB, DolMD and SilMD with an all-zero initial solution. Columns 3 to 5 report the results of the three algorithms with an initial solution given by one of the three heuristics (rMCT, EMCDT, and CT). }
  \label{tab:sctp-hearn}
    \vspace{-5pt}
    
    \begin{subtable}{0.99\textwidth}
  \footnotesize
  \centering
  \captionsetup{size=small}
  \caption{Links 11 and 12 are tolled.}
  \vspace{-2pt}
  \scriptsize
        \begin{tabular}{cccccccccccccccccc}
    \toprule
    Method & r.e.d.  &       & \multicolumn{2}{c}{Method} & r.e.d.  &       & \multicolumn{2}{c}{Method} & r.e.d.  &       & \multicolumn{2}{c}{Method} & rg    &       & \multicolumn{2}{c}{method} & r.e.d. \\
\cmidrule{1-2}\cmidrule{4-6}\cmidrule{8-10}\cmidrule{12-14}\cmidrule{16-18}          &       &       & All-zero &       & 100\% &       & rMCT  &       & 110\% &       & EMCDT &       & 108\% &       & CT    &       & 104\% \\
    Direct & \multirow{2}[0]{*}{\textcolor[rgb]{ 1,  0,  0}{97.6\%}} & \textcolor[rgb]{ 1,  0,  0}{} & \textcolor[rgb]{ .651,  .651,  .651}{(as $\vz^0$)} & \hspace{-0.1in} SAB   & \textcolor[rgb]{ 1,  0,  0}{97.6\%} &       & \textcolor[rgb]{ .651,  .651,  .651}{(as $\vz^0$)} & \hspace{-0.1in} SAB   & \textcolor[rgb]{ 1,  0,  0}{97.6\%} &       & \textcolor[rgb]{ .651,  .651,  .651}{(as $\vz^0$)} & \hspace{-0.1in} SAB   & 103\% &       & \textcolor[rgb]{ .651,  .651,  .651}{(as $\vz^0$)} & \hspace{-0.1in} SAB   & 103\% \\
     Search &       & \textcolor[rgb]{ 1,  0,  0}{} & $\drsh$ & \hspace{-0.1in} DolMD & \textcolor[rgb]{ 1,  0,  0}{97.6\%} &       & $\drsh$ & \hspace{-0.1in} DolMD & \textcolor[rgb]{ 1,  0,  0}{97.6\%} &       & $\drsh$ & \hspace{-0.1in} DolMD & \textcolor[rgb]{ 1,  0,  0}{97.6\%} &       & $\drsh$ & \hspace{-0.1in} DolMD & 103\% \\
          &       &       &       & \hspace{-0.1in} SilMD & \textcolor[rgb]{ 1,  0,  0}{97.6\%} &       &       & \hspace{-0.1in} SilMD & \textcolor[rgb]{ 1,  0,  0}{97.6\%} &       &       & SilMD & \textcolor[rgb]{ 1,  0,  0}{97.6\%} &       &       & \hspace{-0.1in} SilMD & \textcolor[rgb]{ 1,  0,  0}{97.6\%} \\
    \bottomrule
    \end{tabular}%
  \label{tab:5}%
  \end{subtable}
  
  \vspace{0.15in}
    \begin{subtable}{0.99\textwidth}
    \footnotesize
    \centering
    \captionsetup{size=small}
    \caption{Links 2, 11, and 12 are tolled.}
    \vspace{-2pt}
    \scriptsize
        \begin{tabular}{cccccccccccccccccc}
    \toprule
    Method & r.e.d.  &       & \multicolumn{2}{c}{Method} & r.e.d.  &       & \multicolumn{2}{c}{Method} & r.e.d.  &       & \multicolumn{2}{c}{Method} & rg    &       & \multicolumn{2}{c}{method} & r.e.d. \\
\cmidrule{1-2}\cmidrule{4-6}\cmidrule{8-10}\cmidrule{12-14}\cmidrule{16-18}          &       &       & All-zero &       & 100\% &       & rMCT  &       & 77.5\% &       & EMCDT &       & 75.9\% &       & CT    &       & 77.3\% \\
    Direct & \multirow{2}[0]{*}{\textcolor[rgb]{ 1,  0,  0}{73.1\%}} & \textcolor[rgb]{ 1,  0,  0}{} & \textcolor[rgb]{ .651,  .651,  .651}{(as $\vz^0$)} & \hspace{-0.1in} SAB   & 97.6\% &       & \textcolor[rgb]{ .651,  .651,  .651}{(as $\vz^0$)} & \hspace{-0.1in} SAB   & \textcolor[rgb]{ 1,  0,  0}{73.1\%} &       & \textcolor[rgb]{ .651,  .651,  .651}{(as $\vz^0$)} & \hspace{-0.1in} SAB   & \textcolor[rgb]{ 1,  0,  0}{73.1\%} &       & \textcolor[rgb]{ .651,  .651,  .651}{(as $\vz^0$)} & \hspace{-0.1in} SAB   & \textcolor[rgb]{ 1,  0,  0}{73.1\%} \\
    Search &       & \textcolor[rgb]{ 1,  0,  0}{} & $\drsh$ & \hspace{-0.1in} DolMD & 97.6\% &       & $\drsh$ & \hspace{-0.1in} DolMD & \textcolor[rgb]{ 1,  0,  0}{73.1\%} &       & $\drsh$ & \hspace{-0.1in} DolMD & \textcolor[rgb]{ 1,  0,  0}{73.1\%} &       & $\drsh$ & \hspace{-0.1in}  DolMD & \textcolor[rgb]{ 1,  0,  0}{73.1\%} \\
          &       &       &       & \hspace{-0.1in} SilMD & 97.6\% &       &       & \hspace{-0.1in} SilMD & 73.3\% &       &       &  \hspace{-0.1in}SilMD & 73.3\% &       &       & \hspace{-0.1in} SilMD & 73.3\% \\
    \bottomrule
    \end{tabular}%
  \label{tab:4}%
  \end{subtable}
\end{table}%

Table \ref{tab:sctp-hearn} highlights the fact that the performance of local methods, such as the proposed algorithms and SAB, often varies greatly with the initial solution. In Setting (i), when starting from an all-zero initial point, all three local algorithms, SAB, DolMD, and SilMD, easily reached the global optimum.  However, they all struggled in Setting (ii) from that same initial point, being trapped in a local optimum that happens to be the global optimum for Setting (i).

Since the heuristics are capable of breaking the trap of local optima, we next experiment with the idea of feeding their solutions to the local algorithms as initial points (see Columns 3-5 in Table \ref{tab:sctp-hearn}). The results are quite interesting.  In Setting (i), the solution provided by rMCT has no discernible effect. Yet, when using the solution provided by EMCDT, SAB became trapped by that solution and was unable to reach the global optimum.  The same applies to the solution by CT; the only difference is that this time, both SAB and DolMD were stuck.  The impact of the heuristic-inspired initial solutions is much more benign in Setting (ii), however.  No matter which heuristic solution is fed to the local algorithms, the result is universally positive in that a smaller relative excessive delay can be achieved after a local search.  In fact, equipped with these enhanced initial points, both SAB and DolMD always reached global optima.

The observations above suggest that combining local algorithms with suitable heuristics could provide a powerful tool for finding high-quality SCTP solutions.

\subsubsection{Larger networks} For Sioux-Falls, Barcelona, and Chicago Sketch, the DA algorithm is employed to produce benchmark solutions, as reported in the second row of Table \ref{tab:sctp}.  The performance of the three heuristics is then compared against that of the proposed local algorithms (DolMD and SilMD); see rows 4 - 6 in the table.  In each instance, four different initial points --- the all-zero initial solution,  along with the solutions obtained by the three heuristics --- are fed to the local algorithms.   The equilibrium threshold value is always set to $10^{-6}$ for Sioux Falls and $10^{-4}$ for the other two networks. For SilMD, we set $T = 40$ for Sioux Falls and $20$ for the other two.  
\begin{table}[htbp]
  \centering
  \scriptsize
  \caption{Comparison of different algorithms in solving SCTP on the Sioux Falls, Barcelona, Chicago-Sketch instances. The second row reports the performance of the DA algorithm. The third row reports the results of SAB, DolMD, and SilMD with an all-zero initial solution. Rows 4 to 6 report the performance of DolMD and SilMD with an initial solution given by one of the three heuristics (rMCT, EMCDT, and CT) as well as the three heuristics themselves.}
  \vspace{0.05in}
    \begin{tabular}{cccccccrccccccccc}
    \toprule
    \multicolumn{2}{c}{\multirow{2}[4]{*}{Method}} &       & \multicolumn{4}{c}{Sioux Falls} &       & \multicolumn{4}{c}{Barcelona} &       & \multicolumn{4}{c}{Chicago Sketch} \\
\cmidrule{4-7}\cmidrule{9-12}\cmidrule{14-17}    \multicolumn{2}{c}{} &       & r.e.d. & time (s) & iters & tpi (s)   &       & r.e.d. & time (min) & iters & tpi (s)  &       & r.e.d. & time (min) & iters & tpi (s) \\
    \midrule
    \multicolumn{2}{c}{DA} &       & \textcolor[rgb]{ 1,  0,  0}{16.5\%} & 2046  & 20000 & 0.10  &       & {82.8\%} & 865   & 20000 & 2.6   &       & {101.8\%} & 1645  & 10000 & 9.9 \\
    \midrule
    All-zero &       &       & 100\% &       &       &       &       & 100\% &       &       &       &       & 100\% &       &       &  \\
    \multirow{2}[1]{*}{$\drsh$} & \hspace{-0.1in} DolMD &       & 21.3\% & 18    & 56    & 0.33  &       & 74.5\% & 4.2   & 183   & 1.4   &       & \textcolor[rgb]{ 1,  0,  0}{91.8\%} & 23    & 39    & 36  \\
          & \hspace{-0.1in}  SilMD &       & 23.7\% & 1.8   & 56    & 0.032  &       & 76.0\% & 0.7   & 30    & 1.3   &       & 92.3\% & 9.5   & 200   & 2.8  \\
    \midrule
    EMCDT &       &       & 62.8\% & 1.8   & 36    & 0.051  &       & 88.4\% & 1.0   & 58    & 1.0   &       & 156.8\% & 5.4   & 47    & 6.8  \\
    \multirow{2}[1]{*}{$\drsh$} & \hspace{-0.1in} 
 DolMD &       & \textcolor[rgb]{ 1,  0,  0}{16.5\%} & 65    & 37    & 1.8   &       & \textcolor[rgb]{ 1,  0,  0}{73.7\%} & 7.3   & 118   & 3.7   &       & 92.1\% & 21    & 104   & 12  \\
          & \hspace{-0.1in}  SilMD &       & 23.6\% & 2.1   & 63    & 0.033  &       & 75.5\% & 1.0   & 80    & 0.74  &       & 93.6\% & 14    & 280   & 3.1  \\
    \midrule
    CT    &       &       & 20.9\% & 19    & 18    & 1.1   &       & 83.6\% & 7.5   & 507   & 0.89  &       & 97.1\% & 16    & 187   & 5.3  \\
    \multirow{2}[1]{*}{$\drsh$} & \hspace{-0.1in} 
 DolMD &       & \textcolor[rgb]{ 1,  0,  0}{16.5\%} & 25    & 34    & 0.72  &       & 74.4\% & 6.5   & 166   & 2.4   &       & \textcolor[rgb]{ 1,  0,  0}{91.8\%} & 27    & 49    & 33  \\
          & \hspace{-0.1in}  SilMD &       & 17.7\% & 1.6   & 51    & 0.031  &       & 76.1\% & 0.5   & 20    & 1.5   &       & 92.5\% & 7.4   & 120   & 3.7  \\
    \midrule
    rMCT  &       &       & 21.3\% & 15    & 406   & 0.038  &       & 78.7\% & 2.7   & 126   & 1.3   &       & 93.3\% & 14    & 80    & 10  \\
    \multirow{2}[1]{*}{$\drsh$} & \hspace{-0.1in} 
 DolMD &       & \textcolor[rgb]{ 1,  0,  0}{16.5\%} & 22    & 34    & 0.64  &       & \textcolor[rgb]{ 1,  0,  0}{73.7\%} & 11    & 164   & 4.0   &       & \textcolor[rgb]{ 1,  0,  0}{91.8\%} & 17    & 45    & 23  \\
          & \hspace{-0.1in}  SilMD &       & 17.8\% & 1.7   & 37    & 0.045  &       & 75.3\% & 1.3   & 160   & 0.47  &       & 92.7\% & 7.4   & 140   & 3.2  \\
    \bottomrule
    \end{tabular}%
    \vspace{-0.05in}
  \label{tab:sctp}%
\end{table}%

\textbf{DA.} The DA algorithm performs poorly for SCTP. While it was able to find the best solution (corresponding to a relative excessive delay of 16.5\%) for Sioux-Falls, the algorithm struggled to do so for the two larger networks.  On Chicago-Sketch, it failed completely, unable to improve over the no-toll solution after 10,000 iterations and more than 24 hours of computation time.  { %
This difficulty arises likely because the price of anarchy in practical congestion games is often small, producing many local solutions with similar objective values. The presence of a large number of similar solutions might have made it more difficult for standard global search algorithms, such as DA, to escape from sub-optimal local solutions.}

\textbf{EMCDT, CT, and rMCT.} EMCDT is the fastest among the three heuristics but also the crudest. In none of the three test instances could EMCDT reach the vicinity of the best solution identified in the experiments.  At the cost of higher computation time, CT and rMCT both delivered significantly better solutions than EMCDT in all instances.  For Barcelona and Chicago Sketch, rMCT outperformed CT in terms of both solution quality and computation time, though the performance gap is modest. 

\textbf{DolMD and SilMD.} In most test instances, the proposed algorithms delivered solutions with a lower relative excessive delay than the best achieved by any of the three heuristics. The only exception is Sioux Falls, for which both DolMD and SilMD appeared to be trapped in a similar local optimum when starting from the all-zero initial point. As a result, the solutions found by them are slightly worse than those found by CT and rMCT.  When fed with the initial solution provided by the heuristics, however, DolMD reached the best solution in all but { two cases (Barcelona with the initial solution from CT and Chicago Sketch with the initial solution from EMCDT).}  It is likely that these best solutions are, in fact, global optima, though we have no way to verify such a conjecture. { SilMD did not reach the best solutions but still outperformed all three heuristics in nearly all cases. } SilMD is also faster than CT and rMCT. On Chicago-Sketch, it achieved a relative excessive delay of 92.3\% (compared to the best value of 91.8\%) after 9.5 minutes (from the all-zero initial point), compared to 97.1\% achieved by CT (16 minutes) and 93.3\% by rMCT (14 minutes).   Once again, this result highlights SilMD's ability to closely track the performance of DolMD at a much lower computation cost.

}

\section{Conclusions}
\label{sec:conclusion}

{ Differentiable programming enables numerical evaluation of the gradients of complex functions and mappings, once considered impossible or too costly to differentiate, through a computation technique known as automatic differentiation. Our study explores how this new approach can be married with traditional methodologies --- notably bilevel optimization and game theory --- to develop high-performance local search algorithms for large-scale Stackelberg congestion games (SCGs).} Below, we first summarize the contributions and findings before commenting on future research.

{Using imitative logit dynamic (ILD), a classic model in evolutionary game theory, we developed a differentiable program (DiP) formulation of congestion games. Our analysis highlights two key benefits of the new formulation: its provable convergence to Wardrop equilibrium (WE) in general congestion games under mild conditions and its computational superiority when unrolled with AD.  The convergence result for ILD established here requires weaker conditions than those previously reported in the literature, hence a contribution in its own right.  The new SCG algorithms developed in this study, the double-loop mirror descent (DolMD) algorithm and its single-loop variant (SilMD),  were built upon this theoretical foundation.

By explicitly unrolling ILD's path to WE, the DolMD algorithm avoids expensive and unstable matrix operations necessary to carry out implicit differentiation in SAB methods.  As confirmed by the numerical experiments, this strategy bestows considerable and consistent computational advantages. While SAB methods matched DolMD in solution quality for small problems, they fell far behind in larger contests, often unable to converge.  Like other local search algorithms, DolMD can be trapped in poor local solutions. However, when fed with proper initial solutions (which can be provided by simple heuristics), it reliably reaches the best solution achievable by any algorithms included in the experiments. 
SilMD was inspired by the realization that carrying the anticipation of the followers' best response to terms may be a self-imposed computational obstacle.  Instead, the leader may only look ahead along the followers' evolution path for a few steps while updating their decisions in sync with the followers' decisions through a co-evolution process.  
The results from numerical experiments indicate that SilMD closely tracks DolMD in solution quality while requiring substantially less computation time. Compared to existing heuristics, which tend to do well only in certain applications under certain settings, SilMD is capable of providing solutions of decent quality with greater regularity and consistency. %

While both algorithms are general and scalable SCG solvers, they serve distinct purposes.  DolMD is the preferred choice for small problems or when the quality of the solution is the primary concern, whereas SilMD is more suitable when a reasonably good solution needs to be reached quickly. For either algorithm, initializing with solutions obtained by popular heuristics can significantly improve the chance of reaching high-quality local solutions.  

}

This study can be extended in many directions.  Many real-world congestion games do not admit a unique solution, even in the space of link flow.   An example that has gained prominence lately is the problem of managing mixed human-driven and autonomous vehicular traffic, which may introduce asymmetric link interactions. The lack of uniqueness in the follower's best response challenges the notion of anticipation: if the equilibrium is not even unique, how could the leader anticipate it? Is it reasonable to assume the leader can influence equilibrium realization? If so, how? 

The current framework, especially the concept of limited anticipation and co-evolution, hints at the possibility of further weakening or even abandoning the ``equilibrium-centric" point of view, at least in many transportation applications, where the lower-level problem is defined not so much by equilibrium conventionally construed as by continuous evolution of travel behavior.  Our framework would fit a behavior-centric approach perfectly since the ILD can be easily replaced with another evolutionary dynamics that embeds any travel behaviors the modeler considers essential in a given application context. In this regard, the flexibility offered by our framework seems unlimited. The challenge still resides with the issue of convergence.  Would a given evolutionary dynamics converge to a stationary point?  How should we interpret such stationary points? Could we use them as a reference point to evaluate alternative designs in the same way transportation planners had been using WE for such purposes?  These are the questions that future studies need to answer.

\section*{Acknowledgements}

This research is funded by the US National Science Foundation's Civil Infrastructure System (CIS) Program under the award number  CMMI \#2225087 and the Energy, Power, Control, and Networks (EPCN) Program under the award number ECCS \#2048075.

\bibliographystyle{apalike}
\begin{small}
\bibliography{example_paper}
\end{small}

\newpage
\appendix
\numberwithin{algorithm}{section}
\numberwithin{equation}{section}
\numberwithin{figure}{section}
\numberwithin{table}{section}

\section{Proofs in Section \ref{sec:day-to-day} and Further Discussions}
\label{app:proof}

\medskip
\subsection{Proof of Proposition \ref{prop:check-cocoercive}}
\label{app:check-cocoercive}

The following lemma \citep{marcotte1995convergence}provides conditions for checking cocoercivity.
\begin{lemma}
\label{lm:cocoercive}
Given a twice continuously differentiable and monotone function $f: \gY \subseteq \sR^n \to \sR^n$, suppose that $f(\vy)$ is $L$-Lipschitz continuous on $\gA$. If the matrix $\nabla f(\vy)^2 + (\nabla f(\vy)^2)^{\T}$ is positively semi-definite for all $\vy \in \gY$, then $f(\vy)$ is $1/4L$-cocoercive on $\gA$, i.e.,
\begin{equation}
    \langle f(\vy) - f(\vy'), \vy - \vy' \rangle \geq 1/4L \cdot \|f(\vy) - f(\vy') \|_2^2, \quad \text{for all}~\vy, \vy' \in \gY.
\end{equation}
\end{lemma}

We are now ready to prove Proposition \ref{prop:check-cocoercive}. First, as $\gX$ is a compact set, the twice continuously differentiability of $u(\cdot; \vz)$ directly implies that the function is $H_{\vz}$-Lipschitz continuous ($H_{\vz} = \max_{\vx \in \gX} \|u(\vx; \vz)\|_2$) on $\gX$. Hence, according to Lemma \ref{lm:cocoercive}, we claim that $u(\cdot; \vz)$ is $1/4H_{\vz}$-cocoercive on $\gX$. Thus, given any two $\vp, \vp' \in \gP$, setting $\vx = \bar \mLambda \vp$ and $\vx' = \bar \mLambda \vp'$ and letting $d_{\max} = \max_{w \in \gW} \{\evd_w\}$, we then have
\begin{equation}
\begin{split}
    &\langle c(\vp; \vz) - c(\vp'; \vz), \vp - \vp' \rangle \geq \frac{1}{d_{\max}} \cdot
    \langle u(\vx; \vz) - u(\vx'; \vz), \vx - \vx' \rangle \\
    &\qquad \geq \frac{1}{4d_{\max} \cdot H_{\vz}} \cdot \|u(\vx; \vz) - u(\vx'; \vz)\|_2^2 \geq \frac{1}{4d_{\max} \cdot H_{\vz} \cdot \|\mLambda\|_2^2} \cdot \|c(\vp; \vz) - c(\vp'; \vz)\|_2^2.
\end{split}
\end{equation}
Hence,  $c(\vp)$ is  $1/4L_{\vz}$-cocoercive if we set $L_{\vz} = d_{\max} \cdot H_{\vz} \cdot \|\mLambda\|_2^2$. \hfill \QED

\subsection{Proof of Theorem \ref{thm:convergence}}
\label{app:main-proof}

 Based on Proposition \ref{prop:ed-ild}, Lemma \ref{lm:kl-finite}, and a well-known equivalence between a convex program and a VIP \citep{kinderlehrer2000introduction}, we first establish the following lemma.
\begin{lemma}
\label{lm:vi-md}
For any $\vp^t \in \gP$, $\vp^{t + 1} = h(\vp^t; \vz)$ if and only if for all $w \in \gW$,
\begin{equation}
    \langle r \cdot c_w(\vp^t; \vz) + \nabla \phi_w(\vp_w^{t + 1}) - \nabla \phi_w(\vp_w^t), \vp_w - \vp_w^{t + 1}\rangle  \geq 0, \quad \forall \vp_w \in \gQ_w(\vp_w^t).
    \label{eq:fixed-proof}
\end{equation}
\end{lemma}
\begin{proof}
    We refer the readers to Appendix \ref{app:vi-md} for the proof.
\end{proof}
Based on Lemma \ref{lm:vi-md}, we then derive the following lemma, which provides conditions that guarantee a fixed point of ILD to be a WE. Plainly, it indicates that under the assumption that $c(\cdot; \vz)$ is $c_{\vz}$-cocoercive, a fixed point $\hat \vp$ is a WE as long as one can find a WE strategy $\vp^*$ whose support contains $\hat \vp$'s support.  
\begin{lemma}
\label{prop:fixed-point}
{ Suppose that $c(\cdot; \vz)$ is $c_{\vz}$-cocoercive on $\gP$. Given any $\hat \vp \in \gP$ such that $\hat \vp = h(\hat \vp; \vz)$, if  there exits $\vp^* \in \gP^*(\vz)$ such that $\vp^* \in \gQ(\hat \vp)$, then $\hat \vp \in \gP^*(\vz)$.}
\end{lemma}
\begin{proof}
    We refer the readers to Appendix \ref{app:fixed-point} for the proof.
\end{proof}

Then, to analyze the convergence of $\vp^t$, for any $\vp, \vp' \in \gP$, we define $\tilde D_{\phi}(\vp, \vp') = \sum_{w \in \gW} D_{\phi_w}(\vp_w, \vp_w')$. Based on Lemma \ref{lm:vi-md} and the $c_z$-cocoercivity of $c(\cdot; \vz)$, we then prove the following lemma.

\begin{lemma}
    \label{lm:final-lm}
    Suppose that the assumptions in Theorem \ref{thm:convergence} hold. For \textit{all} $\vp^* \in \gP^*(\vz)$, we have
\begin{equation}
     \tilde D_{\phi}(\vp^*, \vp^t) - \tilde D_{\phi}(\vp^*, \vp^{t + 1}) \geq \frac{2 c_{\vz} - r}{4 c_{\vz}} \cdot \|\vp^t - \vp^{t + 1} \|_2^2.
     \label{eq:main-mid}
\end{equation}
\end{lemma}
\begin{proof}
    See Appendix \ref{app:convergence} for the proof.
\end{proof}

Per assumption, there exists $\vp^* \in \gQ(\vp^0)$. For such $\vp^*$, Lemma \ref{lm:kl-finite} implies $\tilde D_{\phi}(\vp^*, \vp^{0}) < \infty$, and Lemma \ref{lm:final-lm} implies the divergence $\tilde D_{\phi}(\vp^*, \vp^{t})$ is monotonically decreasing as long as $r < 2c_{\vz}$.  Thus, the limit of $D_{\phi}(\vp^*, \vp^t)$ exists according to the monotone convergence theorem. Hence, by letting $t\to \infty$ on both sides of Equation \eqref{eq:monotone-proof3}, the squeeze theorem ensures $\| \vp^t - \vp^{t + 1}\|_2 \to 0$.

We should note the existence of $\lim_{t \to \infty} \tilde D_{\phi}(\vp^*, \vp^{t})$ for some $\vp^*$ does not imply the limit must be zero. Nor does it ensure the convergence of $\{\vp_t\}$. 
However, since $\gP$ is compact, the Bolzano-Weierstrass theorem  guarantees the sequence $\{\vp^t\}$ must have a convergent subsequence $\{\vp^{t_j}\}$.
Denote $\hat{\vp}$ as the limit of $\vp^{t_j}$ when $j \to \infty$. Assume  $\|h(\hat \vp; \vz) - \hat \vp\| = \delta$ for some $\delta > 0$.
That is, the limit of the subsequence $\hat \vp$ is not a fixed point of the ILD. 
We proceed to establish a contradiction. Note that
\begin{equation}
\begin{split}
    &\| \vp^{t_j + 1} -  \vp^{t_j}\|_2 =\|h(\vp^{t_j}; \vz) -  \vp^{t_j}\|_2 = \|h(\vp^{t_j}; \vz) - h(\hat{\vp}; \vz) + h(\hat{\vp}; \vz) - \hat{\vp} + \hat{\vp} -  \vp^{t_j}\|_2  \\
	&\quad \geq \|h(\hat{\vp}; \vz) - \hat{\vp}_i\|_2 - \|h(\vp^{t_j}; \vz) - h(\hat{\vp}; \vz) + \hat{\vp} -  \vp^{t_j} \|_2 \geq \delta - \|h(\vp^{t_j}; \vz) - h(\hat{\vp}; \vz) \|_2 - \|\hat{\vp} -  \vp^{t_j} \|_2.
\end{split}
\label{eq:bound-h}
\end{equation}
Letting $j \to \infty$ on both sides of Equation \eqref{eq:bound-h}, the left-hand side then converges to 0. Meanwhile, on the right-hand side, we also have $\|\hat{\vp} - \vp^{t_j} \| \to 0$ and thus $\|h(\vp^{t_j}; \vz) - h(\hat{\vp}; \vz)\| \to 0$ due to the continuity of $h(\vp; \vz)$ with respect to $\vp$. We then have
$0 = \|h(\vp^{t_j}; \vz) -  \vp^{t_j}\|_2
	\geq \delta > 0$, a contradiction. Thus,  $\hat{\vp} = h(\hat{\vp}; \vz)$.

We next claim  $\vp^* \in \gQ(\vp^0)$ implies  $\vp^* \in \gQ(\hat \vp)$.
Otherwise, $D_{\phi}(\vp^*, \hat \vp)$ would be unbounded, which is impossible given  $D_{\phi}(\vp^*, \vp^t)$ is  monotonically decreasing. This, along with the fact that $\hat \vp$ is a fixed point of ILD,  allows us to invoke  Lemma \ref{prop:fixed-point} to show the limit  $\hat \vp$ is indeed a WE. \hfill \QED

\subsubsection{Proof of Lemma \ref{lm:vi-md} }
\label{app:vi-md}

The following lemma \citep{kinderlehrer2000introduction} characterizes the relation between a convex program and a VIP. 
\begin{lemma}
\label{lm:optimization}
Given a closed and convex set $\gY \subseteq \sR^m$ and a continuously differentiable and convex function $f: \gY \to \sR$, consider the optimization problem $\min_{\vy \in \gY} f(\vy)$. Then $\vy^* \in \gY$ is an optimal solution to this convex program if and only if
$\langle\nabla f(\vy^*), \vy - \vy^*\rangle \geq 0$ for all $\vy \in \gY$.
\end{lemma}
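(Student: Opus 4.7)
The plan is to prove the two implications separately, using the standard machinery of convex analysis. The forward direction will exploit the convexity of the feasible set $\sY$ to construct a feasible ray from the optimum, and the backward direction will exploit the convexity of the objective $f$ via its first-order (gradient) inequality.

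For the forward direction ($\vy^*$ optimal $\Rightarrow$ variational inequality), I will fix an arbitrary $\vy \in \sY$ and, using convexity of $\sY$, note that $\vy_t := \vy^* + t(\vy - \vy^*) = (1-t)\vy^* + t\vy \in \sY$ for every $t \in [0,1]$. Define the single-variable function $g(t) = f(\vy_t)$. Since $\vy^*$ is a minimizer of $f$ on $\sY$ and $\vy_t \in \sY$, we have $g(t) \geq g(0)$ for every $t \in [0,1]$, so the one-sided derivative at $t=0$ satisfies $g'(0^+) \geq 0$. By the chain rule and the continuous differentiability of $f$, $g'(0^+) = \langle \nabla f(\vy^*), \vy - \vy^* \rangle$, which yields the desired inequality.

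For the backward direction (variational inequality $\Rightarrow$ $\vy^*$ optimal), I will invoke the first-order characterization of convexity: since $f$ is convex and continuously differentiable on $\sY$, for every $\vy \in \sY$,
\begin{equation*}
f(\vy) \geq f(\vy^*) + \langle \nabla f(\vy^*), \vy - \vy^* \rangle.
\end{equation*}
Combining this with the hypothesis $\langle \nabla f(\vy^*), \vy - \vy^* \rangle \geq 0$ gives $f(\vy) \geq f(\vy^*)$ for all $\vy \in \sY$, so $\vy^*$ is an optimal solution.

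Neither direction should present real obstacles; this is a textbook result, and the proof amounts to little more than differentiating along feasible line segments and applying the gradient inequality of convex functions. The only mildly delicate point is the forward direction, where one must be careful that the restricted function $g$ is differentiable at $t=0$ from the right (which follows from the continuous differentiability of $f$ on an open neighborhood of $\sY$, or by interpreting $\nabla f(\vy^*)$ as a directional derivative if $\vy^*$ lies on the boundary). Once that subtlety is handled, the proof is essentially immediate.
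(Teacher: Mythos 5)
Your proof is correct: the forward direction via the one-sided derivative of $g(t)=f(\vy^*+t(\vy-\vy^*))$ along feasible segments, and the backward direction via the first-order convexity inequality, is exactly the standard argument. The paper does not prove this lemma at all---it cites it directly to \citet{kinderlehrer2000introduction}---so there is no in-paper proof to compare against; your write-up is the textbook proof that reference gives, and your handling of the boundary-point differentiability subtlety is appropriate.
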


Based on Lemma \ref{lm:optimization}, we can then prove Lemma \ref{lm:vi-md}. First, we have $\vp^{t + 1} = h(\vp^t; \vz)$ if and only if $\vp_w^{t + 1}$ solves Equation \eqref{eq:mirror-descent-w} per Proposition \ref{prop:ed-ild}. As the objective function in Equation \eqref{eq:mirror-descent-w}, denoted as $f_w(\vp_w) = r \cdot \langle c_w(\vp^t; \vz), \vp_w \rangle + D_{\phi_w}(\vp_w, \vp_w^t)$, is finite if and only if $\vp_w \in \gQ_w(\vp_w^t)$, we may equivalently minimize $f_w(\vp_w)$ over $\gQ_w(\vp_w^t)$, which is a closed set. Meanwhile, we have $\nabla f_w(\vp_w) = r \cdot c_w(\vp^t; \vz) + \nabla \phi_w(\vp_w) - \nabla \phi_w(\vp_w^t)$ and $\nabla^2 f_w(\vp_w) = \nabla^2 \phi_w(\vp_w)$, which is positively definite as $\phi_w$ is strongly convex on $\gP_w$. Hence, $f_w(\vp_w)$ is also strongly convex on $\gP_w$.  The proof is then concluded by directly applying Lemma \ref{lm:optimization}.
\hfill \QED

\subsubsection{Proof of Lemma \ref{prop:fixed-point}}
\label{app:fixed-point}

    {
    
    By replacing both $\vp^{t}$ and $\vp^{t + 1}$ by $\hat \vp$ in Equation \eqref{eq:fixed-proof} in Lemma \ref{lm:vi-md} and summarizing it over $w \in \gW$, we arrive at the following: $\hat \vp \in \gP$ satisfies $\hat \vp = h(\hat \vp; \vz)$ if and only if \begin{equation}
        \langle c(\hat \vp; \vz), \vp - \hat \vp \rangle  \geq 0, \quad \forall \vp \in \gQ(\hat \vp).
    \end{equation}
    As $\vp^* \in \gQ^*(\hat \vp)$ by assumption, we then have $\langle c(\hat \vp; \vz), \vp^* - \hat \vp \rangle  \geq 0$, which subsequently implies that 
    \begin{equation}
        \langle c(\hat \vp; \vz) - c(\vp^*; \vz), \hat \vp - \vp^*\rangle + \langle c(\vp^*; \vz), \hat \vp - \vp^*\rangle = - \langle c(\hat \vp; \vz), \vp^* - \hat \vp \rangle \leq 0.
    \end{equation}
    Noting that $\langle c(\hat \vp; \vz) - c(\vp^*; \vz), \hat \vp - \vp^*\rangle \geq c_{\vz} \cdot \|c(\hat \vp; \vz) - c(\vp^*; \vz)\|_2^2$ by the $c_{\vz}$-cocoercivity of $c(\cdot; \vz)$, we subsequently obtain
    \begin{equation}
        c_{\vz} \cdot \|c(\hat \vp; \vz) - c(\vp^*; \vz)\|_2^2 + \langle c(\vp^*; \vz), \hat \vp - \vp^*\rangle \leq 0.
    \end{equation}
    As $\langle c(\vp^*; \vz), \hat \vp - \vp^*\rangle \geq 0$ by Proposition \ref{prop:vi-formulation}, we must have $c(\hat \vp; \vz) = c(\vp^*; \vz)$ as well as 
    \begin{equation}
        \langle c(\vp^*; \vz), \hat \vp - \vp^*\rangle = 0.
        \label{eq:contra}
    \end{equation}
    Denoting $b_w^* = \min_{k' \in \gK_w} \{c_k(\hat \vp)\}$, we then prove $\evc_k^* > b_w^* \Rightarrow \hat \evp_k = 0$  for all $w \in \gW$ and $k \in \gK_w$, which shall indicate $\hat \vp$ is a WE by definition. To prove this, suppose that there exist $w' \in \gW$ and $k' \in \gK_{w'}$ such that $c^*_{k'} > b_{w'}^*$ and $\evp_{k'} > 0$. Denoting $\delta = c^*_{k'} - b_{w'}^*$, we will then have
    $
        \langle c(\hat \vp; \vz), \hat \vp \rangle  \geq \sum_{w \in \gW} b_w^* + \delta \cdot \evp_{k'}.
    $
    Noting that $\langle \vc(\vp^*; \vz), \vp^*\rangle = \sum_{w \in \gW} b_w^*$ and $\vc(\vp^*; \vz) = \vc(\hat \vp; \vz)$, we then derive that $\langle c(\vp^*; \vz), \hat \vp - \vp^* \rangle > 0$, which contradicts Equation \eqref{eq:contra}.  Hence, we must have $\hat \vp \in \gP^*(\vz)$, which concludes the proof. \hfill \QED
}

\subsubsection{Proof of Lemma \ref{lm:final-lm}}
\label{app:convergence}

We first provide several lemmas that will be used in the proof. In all of these lemmas, we assume that $\gA$ is a closed convex set. 
\begin{lemma}[\citet{chen1993convergence}]
\label{lm:triangle}
For any Bregman divergence $D_{\phi}: \gY \times \gY \to \sR \cup \{\infty\}$ and any $\vy, \vy', \vy'' \in \gY$, we have
\begin{equation}
	D_{\phi}(\vy, \vy') + D_{\phi}(\vy', \vy'') - D_{\phi}(\vy, \vy'') = \langle \vy - \vy', \nabla \phi(\vy'') - \nabla \phi(\vy') \rangle.
	\label{eq:triangle}
\end{equation}
\end{lemma}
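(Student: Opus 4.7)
The plan is to prove this three-point identity by direct substitution from the definition of Bregman divergence and then collect terms. Since $D_{\phi}(\vx, \vy) = \phi(\vx) - \phi(\vy) - \langle \nabla \phi(\vy), \vx - \vy \rangle$, each of the three quantities on the left-hand side expands into three pieces: a difference of values of $\phi$, and an inner-product ``linearization'' term.

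First I would write out
\begin{align*}
D_{\phi}(\vx, \vx') &= \phi(\vx) - \phi(\vx') - \langle \nabla \phi(\vx'), \vx - \vx' \rangle, \\
D_{\phi}(\vx', \vx'') &= \phi(\vx') - \phi(\vx'') - \langle \nabla \phi(\vx''), \vx' - \vx'' \rangle, \\
D_{\phi}(\vx, \vx'') &= \phi(\vx) - \phi(\vx'') - \langle \nabla \phi(\vx''), \vx - \vx'' \rangle,
\end{align*}
and then form the combination $D_{\phi}(\vx, \vx') + D_{\phi}(\vx', \vx'') - D_{\phi}(\vx, \vx'')$. All six $\phi$-value terms cancel pairwise: the $\pm\phi(\vx)$, $\pm\phi(\vx')$, and $\pm\phi(\vx'')$ contributions each sum to zero. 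What remains is a purely linear expression in the gradients.

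Next I would simplify the surviving inner products. After cancellation we are left with
$-\langle \nabla \phi(\vx'), \vx - \vx' \rangle - \langle \nabla \phi(\vx''), \vx' - \vx'' \rangle + \langle \nabla \phi(\vx''), \vx - \vx'' \rangle$.
Grouping the two $\nabla \phi(\vx'')$ terms uses the identity $(\vx - \vx'') - (\vx' - \vx'') = \vx - \vx'$, producing $\langle \nabla \phi(\vx''), \vx - \vx' \rangle - \langle \nabla \phi(\vx'), \vx - \vx' \rangle = \langle \vx - \vx', \nabla \phi(\vx'') - \nabla \phi(\vx') \rangle$, which is precisely the right-hand side of \eqref{eq:triangle}.

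There is no real obstacle here: the lemma is an algebraic identity that requires only the definition of the Bregman divergence and bookkeeping of signs. The only mild subtlety is that $D_{\phi}$ may take the value $+\infty$ (since $\sX$ need not be contained in the interior of $\operatorname{dom}\phi$), so strictly speaking the identity is asserted whenever all three quantities on the left are finite; in that case $\phi$ is differentiable at $\vx'$ and $\vx''$ and the algebraic manipulation above is valid.
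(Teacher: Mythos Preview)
Your proof is correct; the three-point identity is indeed a straightforward algebraic consequence of the definition of the Bregman divergence, and your bookkeeping is clean. The paper itself does not supply a proof of this lemma but simply cites it from \citet{chen1993convergence}, so there is nothing to compare against beyond noting that your direct-substitution argument is the standard one.
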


\begin{lemma}
\label{lm:bregman-ieq}
If $\phi$ is $\sigma$-strongly convex with respect to $\|\cdot\|$ on $\gY$, then for all $\vy, \vy' \in \gY$, the induced Bregman divergence $D_{\phi}(\vy, \vy')$ satisfies
\begin{equation}
	D_{\phi}(\vy, \vy') \geq \frac{\sigma}{2} \cdot  \|\vy - \vy'\|^2.
	\label{eq:bregman-bound}
\end{equation}
\end{lemma}

\begin{lemma}[\citet{beck2003mirror}]
\label{lm:entropy-norn}
The negative entropy function $\phi(\vy) = \langle \vy, \log(\vy) \rangle$, which  induces the KL divergence, is $1$-strongly convex with respect to  $\ell_1$ norm $\|\cdot\|_1$. 
\end{lemma}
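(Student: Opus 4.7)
The plan is to establish $1$-strong convexity of $\phi(\vx) = \langle \vx, \log \vx \rangle$ with respect to $\|\cdot\|_1$ on the probability simplex by proving the equivalent Bregman-divergence bound $D_\phi(\vy,\vx) \geq \tfrac{1}{2}\|\vy-\vx\|_1^2$, which is classically known as Pinsker's inequality. I would approach it through a second-order (Hessian) argument restricted to the tangent directions of the simplex, rather than through the usual reduction to the binary case via data processing.

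First, I would note that $\phi$ is twice continuously differentiable on the open positive orthant, with $\nabla^2 \phi(\vx) = \diag(1/x_i)$. Invoking the second-order characterization of strong convexity then reduces the claim to showing $\vh^\T \nabla^2 \phi(\vx)\vh \geq \|\vh\|_1^2$ for every $\vx$ in the relative interior of the simplex and every direction $\vh$ tangent to the simplex, i.e. satisfying $\sum_i h_i = 0$ (these are the only relevant perturbations, since both endpoints $\vx,\vy$ belong to the simplex). Concretely, this amounts to the pointwise inequality $\sum_i h_i^2 / x_i \geq \bigl(\sum_i |h_i|\bigr)^2$.

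Second, I would prove this inequality by Cauchy--Schwarz: writing $|h_i| = (|h_i|/\sqrt{x_i}) \cdot \sqrt{x_i}$ and squaring yields $\bigl(\sum_i |h_i|\bigr)^2 \leq \bigl(\sum_i h_i^2 / x_i\bigr)\bigl(\sum_i x_i\bigr) = \sum_i h_i^2 / x_i$, using $\sum_i x_i = 1$ in the last step. Integrating this Hessian bound twice along the straight-line segment from $\vx$ to $\vy$ (which stays in the simplex by convexity, and along which the tangent direction $\vy - \vx$ satisfies $\sum_i (y_i - x_i) = 0$) then delivers $D_\phi(\vy,\vx) \geq \tfrac{1}{2}\|\vy-\vx\|_1^2$. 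Since the Bregman divergence bound in Lemma \ref{lm:bregman-ieq}--style form is well known to be equivalent to strong convexity of the generating function, the result follows with constant $\sigma = 1$.

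The main obstacle is the boundary of the simplex, where some coordinates of $\vx$ vanish and the Hessian blows up. There the inequality $D_\phi(\vy,\vx) \geq \tfrac{1}{2}\|\vy-\vx\|_1^2$ either holds trivially --- when $\vy$ is not covered by $\vx$, the left-hand side is $+\infty$ by Lemma \ref{lm:kl-finite} --- or it reduces to a lower-dimensional instance when $\vy$ shares the same support as $\vx$. In the latter case I would handle it by an approximation argument: perturb $\vx$ and $\vy$ to $\vx + \epsilon \vone$ and $\vy + \epsilon \vone$ (then renormalize), apply the interior-case bound proved above, and take $\epsilon \downarrow 0$, invoking continuity of $\phi$ and $\|\cdot\|_1$ on compact subsets of the simplex with matching support.
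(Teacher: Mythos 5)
Your proof is correct. Note, however, that the paper does not actually prove this lemma: it is imported verbatim from \citet{beck2003mirror} as a known result (it is essentially Pinsker's inequality in Bregman form), so there is no in-paper argument to compare against. Your self-contained derivation is a legitimate alternative to the cited source. Beck and Teboulle establish strong convexity through the gradient-monotonicity characterization, bounding $\langle \nabla\phi(\vx)-\nabla\phi(\vy), \vx-\vy\rangle$ from below by $\|\vx-\vy\|_1^2$ via elementary logarithmic inequalities; you instead use the second-order characterization, and the key step $\sum_i h_i^2/x_i \geq \left(\sum_i |h_i|\right)^2$ via Cauchy--Schwarz with $\sum_i x_i = 1$ is exactly right, as is the integration of the Hessian bound along the segment to obtain $D_\phi(\vy,\vx)\geq \tfrac{1}{2}\|\vy-\vx\|_1^2$ (which is precisely the form in which the lemma is consumed by Lemma \ref{lm:bregman-ieq} and Equation \eqref{eq:term-1}). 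Two small remarks: the tangency restriction $\sum_i h_i = 0$ is not actually needed, since your Cauchy--Schwarz bound holds for arbitrary $\vh$; and in the boundary discussion the relevant case is $\supp(\vy)\subseteq\supp(\vx)$ rather than equal supports, but your perturb-and-renormalize limit argument covers that case as well, since every term of $D_\phi$ involving a coordinate with $y_i=0$ vanishes in the limit. Your approach buys self-containedness and makes transparent exactly where the normalization $\sum_i x_i=1$ enters; the paper's citation buys brevity.
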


\begin{lemma}
\label{lm:l1-l2}
For all $\vy \in \gY$, we have $\|\vy\|_2 \leq \|\vy\|_1$.
\end{lemma}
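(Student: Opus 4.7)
The plan is to reduce this to a short algebraic verification, since it is the standard ``$\ell_2$ is dominated by $\ell_1$'' inequality for finite-dimensional vectors. I would begin by writing $\vx = (x_1, \ldots, x_n)$ and noting that both $\|\vx\|_1 = \sum_{i=1}^n |x_i|$ and $\|\vx\|_2 = (\sum_{i=1}^n x_i^2)^{1/2}$ are nonnegative, so the claim $\|\vx\|_2 \leq \|\vx\|_1$ is equivalent to $\|\vx\|_2^2 \leq \|\vx\|_1^2$.

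The key step is to expand the square of the $\ell_1$ norm:
\begin{equation*}
\|\vx\|_1^2 = \Bigl(\sum_{i=1}^n |x_i|\Bigr)^2 = \sum_{i=1}^n x_i^2 + 2\sum_{1 \leq i < j \leq n} |x_i|\,|x_j| \;\geq\; \sum_{i=1}^n x_i^2 = \|\vx\|_2^2,
\end{equation*}
where the inequality holds because every cross term $|x_i||x_j|$ is nonnegative. Taking square roots on both sides yields the claim.

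I do not anticipate any real obstacle here: the result is a one-line consequence of expanding the square and discarding nonnegative cross terms, and no properties of the particular set $\sX$ beyond $\vx$ being a vector in $\sR^n$ (or a subset of it) are used. Equality occurs iff at most one coordinate of $\vx$ is nonzero, which matches the well-known tightness of the inequality.
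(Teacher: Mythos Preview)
Your proof is correct. The paper's own proof is a single line stating that the inequality ``is a direct corollary of the Cauchy--Schwartz inequality,'' without further detail. Your approach---expanding $\|\vx\|_1^2$ and discarding the nonnegative cross terms---is a more self-contained and arguably more transparent route, since Cauchy--Schwarz applied with the all-ones vector naturally yields the \emph{reverse} comparison $\|\vx\|_1 \leq \sqrt{n}\,\|\vx\|_2$ rather than the one needed here. Both arguments are elementary and of the same length, so the difference is purely stylistic; your version has the advantage of leaving no ambiguity about which instance of which inequality is being invoked.
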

\begin{lemma}[\citet{marcotte1995convergence}]
\label{lm:bound}
For any two vectors $\vy, \vy' \in \gY$, we have
\begin{equation}
	\langle\vy, \vy'\rangle - \|\vy'\|_2^2 \leq \frac{1}{4} \cdot \|\vy\|_2^2.
\end{equation}
\end{lemma}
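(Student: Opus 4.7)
The plan is to derive this inequality by completing the square. Specifically, I would observe that the squared norm $\|\tfrac{1}{2}\vx - \vx'\|_2^2$ is non-negative for any $\vx, \vx' \in \sX$, and then expand it to extract the claimed bound. Expanding gives $\|\tfrac{1}{2}\vx - \vx'\|_2^2 = \tfrac{1}{4}\|\vx\|_2^2 - \langle \vx, \vx'\rangle + \|\vx'\|_2^2$, so rearranging the inequality $\|\tfrac{1}{2}\vx - \vx'\|_2^2 \geq 0$ yields $\langle \vx, \vx'\rangle - \|\vx'\|_2^2 \leq \tfrac{1}{4}\|\vx\|_2^2$, which is exactly the statement.

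An equivalent way to view the argument is through the elementary Young-type inequality $2\langle \va, \vb\rangle \leq \|\va\|_2^2 + \|\vb\|_2^2$, applied to $\va = \tfrac{1}{2}\vx$ and $\vb = \vx'$. Either route produces the tight constant $1/4$. Since the claim is purely algebraic and does not depend on any properties of $\sX$ beyond it being a subset of a Hilbert space, there is no real obstacle to the proof; it is a one-line calculation.

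Although the lemma itself is trivial, it is worth flagging its intended role in the paper so as not to mislabel its difficulty. It supports the convergence argument for ILD (Theorem \ref{thm:convergence}): once cocoercivity has been used to lower-bound $\langle c(\vp^{t+1}; \vz) - c(\vp^t; \vz), \vp^{t+1} - \vp^t \rangle$ by a squared norm, one still needs to absorb a mixed cross term involving the step direction and a cost difference into a single squared norm so that it can be canceled against the Bregman decrement established by Lemma \ref{lm:triangle} and Lemma \ref{lm:bregman-ieq}. Lemma \ref{lm:bound}, with its tight constant $1/4$, is precisely the tool that makes the condition $r < 2 c_{\vz}$ (rather than some weaker step-size ceiling) sufficient for monotone decrease of the divergence.
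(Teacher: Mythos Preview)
Your proof is correct: expanding $\|\tfrac{1}{2}\vx - \vx'\|_2^2 \geq 0$ immediately yields the inequality with the sharp constant $1/4$. The paper does not give its own proof of this lemma at all --- it simply cites \citet{marcotte1995convergence} --- so there is nothing to compare against; your completing-the-square argument is the standard one-line justification.

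One minor remark on your contextual paragraph: in the proof of Theorem~\ref{thm:convergence} the cross term being controlled involves $c(\vp^t;\vz) - c(\vp^*;\vz)$ paired with $\vp^{t+1}-\vp^t$, not $c(\vp^{t+1};\vz)-c(\vp^t;\vz)$ as you wrote. The mechanism you describe --- cocoercivity followed by Lemma~\ref{lm:bound} to produce the $-\tfrac{1}{4c_{\vz}}\|\vp^t-\vp^{t+1}\|_2^2$ term, which then combines with the Bregman lower bound to give the step-size condition $r<2c_{\vz}$ --- is otherwise accurate.
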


Now we are ready to prove Theorem \ref{thm:convergence}.
For all $\vp^* \in \gP^*(\vz)$, Lemma \ref{lm:triangle} first implies that 
\begin{equation}
    D_{\phi_w}(\vp_w^*, \vp_w^{t + 1}) = D_{\phi_w}(\vp_w^*, \vp_w^t) - D_{\phi_w}(\vp_w^{t + 1},  \vp_w^t) + \langle \nabla \phi(\vp_w^{t}) - \nabla \phi(\vp_w^{t + 1}), \vp_w^* - \vp_w^{t + 1}\rangle
\label{eq:two-lemmas}
\end{equation}
Combining Equation \eqref{eq:fixed-proof} in Lemma \ref{lm:vi-md} and Equation \eqref{eq:two-lemmas}, we arrive at 
\begin{equation}
\begin{split}
    D_{\phi_w}( \vp_w^*, \vp_w^t) - D_{\phi_w}(\vp_w^*, \vp_w^{t + 1}) &\geq D_{\phi_w}(\vp_w^{t + 1}, \vp_w^t) + r \cdot \langle c_w(\vp^t; \vz), \vp_w^{t + 1} - \vp_w^*\rangle.
    \label{eq:middle-0}
\end{split}
\end{equation}
Summarizing Equation \eqref{eq:middle-0} for all $w \in \gW$ leads to
\begin{equation}
\begin{split}
    \tilde D_{\phi}( \vp^*, \vp^t) - \tilde D_{\phi}(\vp^*, \vp^{t + 1}) &\geq \tilde D_{\phi}(\vp^t, \vp^{t + 1}) + r \cdot \langle c(\vp^t; \vz), \vp^{t + 1} - \vp^*\rangle \\
    &\geq \tilde D_{\phi}(\vp^t, \vp^{t + 1}) + r \cdot \langle c(\vp^t; \vz) - c(\vp^*; \vz), \vp^{t + 1} - \vp^*\rangle,
    \label{eq:middle-i}
\end{split}
\end{equation}
where the second inequality holds because $\langle c(\vp^*; \vz), \vp^{t + 1} - \vp^* \rangle \geq 0$.
Below we further bound the two terms on the right-hand side of Equation \eqref{eq:middle-i}. First, 
\begin{equation}
    \tilde D_{\phi}(\vp^{t + 1}, \vp^t) \geq \frac{1}{2} \cdot \sum_{w \in \gW} \|\vp_w^t - \vp_w^{t + 1}\|_1^2 \geq \sum_{w \in \gW} \|\vp_w^t - \vp_w^{t + 1}\|_2^2 =  \frac{1}{2} \cdot \|\vp^t - \vp^{t + 1} \|_2^2,
    \label{eq:term-1}
\end{equation}
where the first inequality follows from Lemmas \ref{lm:bregman-ieq} and \ref{lm:entropy-norn}, and the second follows from Lemma \ref{lm:l1-l2}. Second, 
\begin{equation}
\begin{split}
    &\langle c(\vp^t; \vz) - c(\vp^*; \vz), \vp^{t + 1} - \vp^*\rangle =  \langle c(\vp^t; \vz) - c(\vp^*; \vz), \vp^{t + 1} - \vp^t + \vp^t - \vp^*\rangle \\
    &\qquad \geq \langle c(\vp^t; \vz) - c(\vp^*; \vz), \vp^{t + 1} - \vp^t \rangle + c_z \cdot \|c(\vp^t; \vz) - c(\vp^*; \vz)\|_2^2 \geq -\frac{1}{4c_{\vz}} \cdot \|\vp^t - \vp^{t + 1} \|_2^2,
    \label{eq:term-2}
\end{split}
\end{equation}
where the first inequality is guaranteed by  the cocoercivity of $c(\vp; \vz)$ and the second follows from Lemma \ref{lm:bound}. Plugging  Equations \eqref{eq:term-1} and \eqref{eq:term-2} into Equation \eqref{eq:middle-i} results in
\begin{equation}
     D_{\phi}(\vp^*, \vp^t) - D_{\phi}(\vp^*, \vp^{t + 1}) \geq \frac{2 c_{\vz} - r}{4 c_{\vz}} \cdot \|\vp^t - \vp^{t + 1} \|_2^2,
     \label{eq:monotone-proof3}
\end{equation}
which concludes the proof. \hfill \QED

\subsection{Theorem \ref{thm:convergence} and existing results}
\label{app:explanation}

{

{

Although Theorem \ref{thm:convergence} is new, its relation with existing results in literature warrants some remarks. We first note that our convergence analysis applies to congestion games both {with} and {without} a ``potential function,"  i.e., a function whose gradient with respect to route flows always equals route cost \citep{beckmann1956studies}. It is well known the existence of such a function hinges on the symmetry of the Jacobian matrix of the link cost function with respect to link flow \citep{sheffi1985urban}.

For potential congestion games, it is straightforward to verify that the convergence of ILD is equivalent to the convergence of minimizing the potential function by MD. %
To ensure the convergence of MD for solving a convex program, the following results are standard: (i) for convex objective functions, a progressively decreasing step size is necessary; (ii) for strictly or strongly convex objective functions,  a small constant step size is sufficient \citep[see, e.g.,][]{beck2003mirror,krichene2015convergence, doan2018convergence, radhakrishnan2020linear}. However, the potential function in congestion games, even when it exists, is typically convex but not strictly convex with respect to route flow \citep{sheffi1985urban}. Hence, our finding, which asserts the convergence of MD with a constant step size and cocoerciviity (weaker than strong convexity) is new.

For congestion games without a potential function, our result is related to \citet{mertikopoulos2019learning}. The difference is twofold. First, their convergence requires the equilibrium set to satisfy a special ``variational stability" condition, which extends \citet{smith1982evolution}'s notion of ``evolutionary stability." However, there are no readily available results that can check whether the set of WE strategies in a congestion game is variationally stable. In contrast, our result builds on the cocoercivity of the route cost function, which can be easily verified  (see Proposition \ref{prop:check-cocoercive}). Second, \citet{mertikopoulos2019learning} also require the step size of MD to progressively decrease at an appropriate rate.

Our analysis was also inspired by \citet{marcotte1995convergence}, who established the convergence of the Euclidean projection method in congestion games under similar conditions. 
ILD and the Euclidean projection method share a similar algorithmic structure. 
However, it is not straightforward to extend \citet{marcotte1995convergence}'s analysis to ILD because a fixed point of the Euclidean projection method must be a WE, while a fixed point of ILD may not. This crucial difference means that, in the convergence proof, we must check whether the limiting point of ILD is indeed a WE. 
}

}

\section{Multi-Class Extension}
\label{app:extension}
The differentiable bilevel programming approach can be easily extended to handle more general settings of SCGs.  This advantage is highlighted here with a classical extension of the congestion game: the multi-class congestion game \citep{harker1988multiple}.
In a multi-class congestion game, travelers are divided into different classes according to certain characteristics.   
Denote $\gM$ as the set of classes and let $\vd_m$  represent the number of travelers in class $m \in \gM$. Similarly, we define $\vq_m = \mSigma^{\T} \vd_m$.   The route choice of travelers in class $m$ is represented by a vector $\vp_m \in \gP = \{\vp \geq \vzero: \mSigma \vp = \vone\}$. Accordingly,  route flow $\vf_m = \diag(\vq_m) \vp_m$ and  link flow $\vx_m = \mSigma \vf_m$. We then denote the link cost experienced by travelers in class $m$ as $\vu_m = u_m(\vx_{}; \vz)$, where $\vx_{} = (\vx_{m})_{m \in \gM}$, and the route cost $\vc_m = \mLambda^{\T} \vu_m$.  Denoting $\vp = (\vp_m)_{m \in \gM} \in \gP^{|\gM|}$ as the joint route choice of all travelers,  $c_m: \gP^{|\gM|} \to \sR^{|\gK|}$ can be defined as $c_m(\vp; \vz) = \mLambda u_m(\vx; \vz)$,  where $\vx = (\vx_m)_{m \in \gM} = (\bar \mLambda \vp_m)_{m \in \gM}$, where $\bar \mLambda = \mLambda \diag(\vq_m)$.

The above setting allows (i) travelers from different classes to experience a different cost on the same link, and (ii) the cost on a link is affected by flows on other links in an asymmetric manner. %

\begin{example}[Mixed autonomy]
\label{eg:mixed-autonomy}
Assume travelers drive two different types of vehicles: ($|\gM| = 2$): connected and autonomous vehicles (CAVs, $m = 1$) and human-driven vehicles (HDV, $m = 2$), and the effective capacity on a given link vary with the share of CAVs using it. 
\cite{bahrami2020optimal}, for example, suggested the link cost be computed using the following revised  Bureau of Public Road (BPR) function:
\begin{equation}
\begin{split}
u_{\text{time},a}(\evx_{1,a}, \evx_{2, a}) &= \evu_{0, a} \cdot \left(1 + 0.15 \cdot \left(\frac{\evx_{\text{sum},a}}{(1 + \eta \cdot ( \evx_{1, a} / \evx_{\text{sum}, a})^2) \cdot \evv_{0,a}}\right)^4 \right),
\end{split}
\label{eq:revised-bpr}
\end{equation}
where $\evx_{\text{sum},a} = \evx_{1,a} +  \evx_{2, a}$ is the sum of CAV and HDV flows on link $a$,  $\evu_{0, a}$ is the free-flow travel time on link $a$, $\evv_{0,a}$ is the capacity when all vehicles on link $a$ are HDVs, and $\eta$ is a parameter that measures the increase in capacity after all HDVs are replaced by CAVs.  Suppose a class-specific toll is levied on link $a$. That is, CAV and HDV drivers must each pay a different toll to use link $a$, denoted as $\evbeta_{1, a}$ and $\evbeta_{2, a}$, respectively. Thus, the class-specific link cost function reads   $u_{m}(\vx_1, \vx_2) = u_{\text{time}}(\vx_1, \vx_2) + \gamma \cdot \vbeta_m$ ($m = 1, 2$), where $\gamma$ is the time value of money. 
\end{example}

Under the multi-class setting, a WE strategy can be similarly defined as follows.
\begin{definition}
A joint route choice $\vp^* \in \gP^{|\gM|}$ is a WE strategy of a multi-class congestion game if  $c_{m, k}(\vp^*; \vz) > b_{m, w}^* \Rightarrow \evp_{m, k}^{*} = 0$, where $b_{m, w}^* = \min_{k' \in \gK_w} c_{m, k'}(\vp^*; \vz)$, for all $m \in \gM$, $w \in \gW$, and $k \in \gK_w$,
\end{definition}

The VIP formulation can be directly extended to handle the multi-class problem  \citep{dafermos1980traffic}.
\begin{proposition}
\label{prop:mue-vi}
Letting $c(\vp_{}; \vz) = (c_m(\vp_{}; \vz))_{m \in \gM}$, then a joint route choice $\vp^* \in \gP^{|\gM|}$ is a WE strategy if and only if
\begin{equation}
    \langle c(\vp^*; \vz), \vp - \vp^{*} \rangle = \sum_{m \in \gM} \langle c_m(\vp^*; \vz), \vp_m - \vp_m^{*} \rangle \geq \vzero, \quad \forall \vp \in \gP^{|\gM|}.
\end{equation}
\end{proposition}

We next discuss how the proposed framework can be adapted to solve an SCG with a multi-class congestion game at its lower level.  The key is to reformulate the multi-class congestion game as a DiP.  Since the ILD operates on the zeroth-order (i.e., it needs no more information than  travel cost), the extension is straightforward, requiring only to redefine the $h$ function as  $h: \gP^{|\gM|} \times \gP \to \gP^{|\gM|}$, where
\begin{equation}
    h_{m, k}(\vp; \vz) = \frac{\evp_{m, k} \cdot \exp(-r \cdot c_{m, k}(\vp; \vz))}{\displaystyle  \sum_{k' \in \gK_w} \evp_{m, k'} \cdot \exp(-r \cdot c_{m, k'}(\vp; \vz))}, \quad \forall k \in \gK_w, \quad \forall w \in \gW, \quad \forall m \in \gM.
    \label{eq:def-h-mce}
\end{equation}
Would the dynamics defined by $\vp^{t + 1} = h(\vp^t; \vz)$ converge to a WE of the multi-class congestion game? Because WE and multi-class WE share the same form of VIP formulation (see Propositions \ref{prop:vi-formulation} and \ref{prop:mue-vi}),  we believe the answer is likely yes. That is, Theorem \ref{thm:convergence} may be extended to cover the multi-class case.  However, whether the function $c(\vp; \vz) = (c_m(\vp; \vz))_{m \in \gM}$ is cocoercive might be application specific.
We leave a thorough investigation of the convergence issues for future study.

With the new ILD, all that is left to do is reprogram $h(\vp; \vz)$ according to \eqref{eq:def-h-mce} in Algorithm \ref{alg:forward}. The rest will be taken care of by the framework itself, with little additional effort.

{

\section{Algorithms Included in the Numerical Experiments}
\label{app:comparing-alg}

We first provide additional implementation details of the proposed algorithms (Appendix \ref{app:DolMD}) before proceeding to introduce other general SCG algorithms included for comparison (Appendix \ref{app:other-general}).  Application-specific heuristics are presented in Appendix \ref{app:dedicated-heuristics}.

\subsection{DolMD and SilMD with route generation}
\label{app:DolMD}

}

\begin{algorithm}[ht]
   \caption{DolMD implemented with dynamic route generation.}
   \label{alg:stackelberg-rg}
\begin{algorithmic}[1]
{\footnotesize
   \State {\bfseries Input:} $\gK^+ \subseteq \gK$, $\vz^0 \in \gZ$,  step sizes $\rho > 0$ and $r > 0$, threshold values $\varepsilon > 0$ and $\xi > 0$.
   \vspace{1.5pt}
   \State Set $\vp^0 = (\evp_k^0)_{k \in \gK}$, where $\evp_k^0 = 1 / |\gK_w^+|$ if $k \in \gK_w^+$ and $\evp_k^0 = 0$ if $k \in \gK_w \setminus \gK_w^+$ for all $w \in \gW$.
    \vspace{1.5pt}
   \For{$i = 0, 1, \ldots$}
   \vspace{1.5pt}
    \State {\bfseries FP} (running Algorithm \ref{alg:forward} until convergence):
    \vspace{1.5pt}
    \For{$t = 0, 1, \ldots$}
        \State Run $\vp^{t + 1} = h(\vp^t; \vz^i)$. If $\delta(\vp^t; \vz^i) \leq \varepsilon$, break and set $T = t$ and $l^T = l(\bar \mLambda \vp^T; \vz^i)$.
        \vspace{1.5pt}
    \EndFor
    \vspace{1.5pt}
    \State {\bfseries BP}: Calculate $l_{\vz} = \partial l^T / \partial \vz_i$ by unrolling FP via AD (AD tools automatically programmed and executed Algorithm \ref{alg:backward}).
    \vspace{1.5pt}
    \State Set $\vz^{i + 1} = \argmin_{\vz \in \gZ}~\rho \cdot \langle l_{\vz}, \vz - \vz^{i} \rangle + D_{\psi}(\vz, \vz^i)$. 
    \vspace{1.5pt}
    \For{$w \in \gW$}
    \State Find $k^* = \argmin_{k \in \gK_w} c_k(\vp^T; \vz^i)$ through a shortest route algorithm and then set $\gK^+ = \gK^+ \cup \{k^*\}$. 
    \vspace{1.5pt}
    \EndFor
    \vspace{1.5pt}
    \State If $\gK^+$ is updated, reset $\vp^0 = (\evp_k^0)_{k \in \gK}$, where $\evp_k^0 = 1 / |\gK_w^+|$ if $k \in \gK_w^+$ and $\evp_k^0 = 0$ if $k \in \gK_w \setminus \gK_w^+$ for all $w \in \gW$.
    \vspace{1.5pt}
    \State If $\| \vz^{i} - \vz^{i - 1}\|_{\infty} < \xi$ and no routes are added to $\gK^+$, break and set $\vz^* = \vz^i$.
   \EndFor
}
\end{algorithmic}
\end{algorithm}

{ Algorithm \ref{alg:stackelberg-rg} details a version of DolMD in which routes are not enumerated beforehand but generated iteratively on the fly.  The initial $\vp^0$ is set according to the initial $\gK^+$: travelers are split equally to all routes currently present between each OD pair. At the end of each outer iteration, the shortest route between each OD pair is identified and included in $\gK^+$ if it is new. Whenever $\gK^+$ is updated, $\vp^0$ is reset accordingly.}

\begin{algorithm}[ht]
   \caption{SilMD implemented with dynamic route generation}
   \label{alg:cournot-rg}
\begin{algorithmic}[1]
{\footnotesize
   \State {\bfseries Input:} $\gK^+ \subseteq \gK$, $\vz^0 \in \gZ$, step sizes $\rho > 0$ and $r > 0$, threshold values $\varepsilon > 0$ and $\xi > 0$.
       \State Set $\vp^0 = (\evp_k^0)_{k \in \gK}$, where $\evp_k^0 = 1 / |\gK_w^+|$ if $k \in \gK_w^+$ and $\evp_k^0 = 0$ if $k \in \gK_w \setminus \gK_w^+$ for all $w \in \gW$.
    \vspace{1.5pt}
   \For{$i = 0, 1, \ldots$}
    
    \State {\bfseries FP}: Calculate $l^T = g^{(T)}(\vp^i; \vz^i)$.
    \vspace{1.5pt}
    \State {\bfseries BP}: Calculate $l_{\vz} = \partial l^T / \partial \vz^i$ by unrolling FP via AD.
    \vspace{1.5pt}
    \State Set $\vz^{i + 1} = \argmin_{\vz \in \gZ}~\rho \cdot \langle l_{\vz}, \vz - \vz^{i} \rangle + D_{\psi}(\vz, \vz^t)$ and $\vp^{i + 1} = h(\vp^i; \vz^i)$.
    \vspace{1.5pt}
    \For{$w \in \gW$}
    \vspace{1.5pt}
    \State Find $k^* = \argmin_{k \in \gK_w} c_k(\vp^i; \vz^i)$ through a shortest route algorithm. 
    \vspace{1.5pt}
    \State  If $k^* \notin \gK_w^+$, set $\gK^+ = \gK^+ \cup \{k^*\}$ and then initialize $\evp_k^i$ with a small positive value. 
    \vspace{1.5pt}
    \vspace{1.5pt}
    \EndFor
    \vspace{1.5pt}
    \State Set $\vp^i = \vp^i / \mSigma^{\T} \mSigma \vp^i$ (scaling $\vp^i$ so that it satisfies the constraint $\mSigma \vp^i = \vone$).
    \vspace{1.5pt}
    \State If $\delta(\vp^i; \vz^i) \leq \varepsilon$  and $\| \vz^{i} - \vz^{i - 1}\|_{\infty} < \xi$, break and set $\tilde \vz = \vz^i$.
   \EndFor
}
\end{algorithmic}
\end{algorithm}

{   %

Algorithm \ref{alg:cournot-rg} describes how SilMD is implemented with route generation. The main difference concerns how the route choice strategy is initialized when a new route is found. In SimMD, if a new route $i$ is added to the route set, its choice probability should be manually set.  Otherwise, the flow on that route will always remain zero since, unlike DolMD, route choice strategies are not reset every time a route expansion occurs (see Algorithm \ref{alg:stackelberg-rg}). 
 }

{

For both DolMD and SilMD, in our implementation, the parameter $r$ is fixed as a small constant, while $\rho$ is decreased at a harmonic rate. Meanwhile, noting that $\gZ$ is a box constraint in both CNDPs and SCTPs, we choose $D_{\psi}$ as the squared Euclidean distance when implementing both algorithms.

\subsection{Other general SCG algorithms}
\label{app:other-general}

\subsubsection{SAB algorithm} 
\label{app:sab}

Algorithm \ref{alg:sab} presents an SAB algorithm for solving SCGs, which shares a similar overall structure with DolMD. At each iteration, both algorithms also perform the following three tasks: (i) find the lower-level WE; (ii) evaluate the gradient of the leader's cost at WE; and (iii) improve the leader's decision based on the gradient. For Task (i), Algorithm \ref{alg:sab}  enjoys the benefit of a faster WE solver, namely iGP, rather than sticking to ILD as in DolMD. For Task 2, it relies on \citet{yang2007sensitivity}'s sensitivity analysis approach given in Proposition \ref{prop:diff-formula}. As for Task (iii), Algorithm \ref{alg:sab} also leverages the MD method, identical to DolMD.
}

\begin{algorithm}[ht]
   \caption{SAB algorithm for solving Problem \eqref{eq:bilevel}.}
   \label{alg:sab}
\begin{algorithmic}[1]
{\footnotesize
   \State {\bfseries Input:} $\vz^0 \in \gZ$, step size $\rho > 0$, threshold values $\varepsilon > 0$ and $\xi > 0$.
   \vspace{1.5pt}
   \For{$i = 0, 1, \ldots$}
   \vspace{1.5pt}
    \State Solve a $\vp^i \in \gP^*(\vz^i)$ via iGP, terminating which when $\delta(\vp^i; \vz^i) < \varepsilon$. Set $\vx^i = \bar \mLambda \vp^i$.
    \State Calculate $\nabla x^*(\vz^i)$ according to the sensitivity analysis method detailed in Proposition \ref{prop:diff-formula}.
    \vspace{1.5pt}
    \State Set $\vz^{i + 1} = \argmin_{\vz \in \gZ}~\rho \cdot \langle l_{\vz}, \vz - \vz^{i} \rangle + D_{\psi}(\vz, \vz^i)$, where $l_{\vz} =  \nabla_{\vz} l(\vx^i; \vz^i) + \nabla x^*(\vz^i) \cdot \nabla_{\vx} l(\vx^i; \vz^i)$.
    \vspace{1.5pt}
    \State If $\| \vz^{i} - \vz^{i - 1}\|_{\infty} < \xi$, break and set $\vz^* = \vz^i$.
   \EndFor
}
\end{algorithmic}
\end{algorithm}

{

As highlighted in Proposition \ref{prop:diff-formula}, the algorithm requires the selected WE strategy $\bar \vp^*$ satisfy the strict complementary condition, which enforces $\supp(\bar \vp^*) = \cup_{\vp^* \in \gP^*} \supp(\vp^*)$ (``no-route-left-behind").   Our experience suggests that the algorithm tends to work fine when we simply select $\bar \vp^*$ as a non-degenerate extreme point of $\gX^*$,  as suggested in \citet{tobin1988sensitivity}.  Note that this approach would avoid the cost of finding the MSIC for $[\mLambda_+^{\T}, \mSigma_+^{\T}]^{\T}$, though it may become infeasible when all extreme points of $\gX^*$ are degenerate \citep{josefsson2003pitfalls}.

\subsubsection{Cutting plane algorithm}
\label{app:cp}
Underlying the cutting plane (CP) algorithm is the following reformulation of the original problem \eqref{eq:bilevel}, 
\begin{equation}
\begin{split}
    \min_{\vz \in \gZ, \ \vp^* \in \gP}~~&l(\vx^*; \vz), \\
    \text{s.t.}~~&\vx^* = \bar \mLambda \vp^*, \quad \langle u(\vx^*; \vz), \vy - \vx^* \rangle \geq 0, \quad \forall \vy \in \gX,
\end{split} \label{eq:cutting-plane}
\end{equation}
where the lower-level WE is replaced by its link-based VIP formulation. Note that Problem \eqref{eq:cutting-plane} has an infinite number of constraints. \citet{lawphongpanich2004mpec} noted that the VIP in Problem \eqref{eq:cutting-plane} holds as long as
\begin{equation}
    \langle u(\vx^*; \vz), \vy^i - \vx^* \rangle \geq 0, \quad \forall~\text{vertex $\vy^i$ of $\gX$}.
    \label{eq:vertex-cut}
\end{equation}
This equivalence allows Problem \eqref{eq:cutting-plane} to be further transformed into a standard nonlinear program with a finite number of nonlinear constraints. To avoid enumerating all vertices of $\gX$, they further devised a cutting plane scheme that iteratively generates cuts (vertex $\vy^i$ of $\gX$). As described in Algorithm \ref{alg:cp}, eachsub-problem is equivalent to a shortest route problem: its solution $\vy^{i + 1}$ corresponds to the all-or-nothing assignment where link costs equal $u(\vx^i; \vz^i)$.
}

\begin{algorithm}[ht]
   \caption{Cutting plane algorithm for solving Problem \eqref{eq:bilevel}.}
   \label{alg:cp}
\begin{algorithmic}[1]
{\footnotesize
   \State {\bfseries Input:} a vertex $\vy^0$ of $\gX$.
   \vspace{1.5pt}
   \For{$i = 1, 2, \ldots$}
   \vspace{1.5pt}
    \State \textbf{The master problem}: let $(\vz^i, \vp^i)$ be the solution to
    \begin{equation*}
    \begin{split}
        \min_{\vz \in \gZ, \ \vp^* \in \gP}~~&l(\vx^*; \vz), \\
        \text{s.t.}~~&\vx^* = \bar \mLambda \vp^*, \quad \langle u(\vx^*; \vz), \vy^j - \vx^* \rangle \geq 0, \quad j = 0, \ldots, i - 1,
    \end{split} \label{eq:cutting-plane-finite-rg}
    \end{equation*}
    \label{ln:cp-master}
    \vspace{-0.15in}
    \State \textbf{The sub-problem}: solve $\vy^{i + 1} = \argmin_{\vy \in \gX} \langle u(\vx^{i}; \vz^{i}), \vy \rangle$, where $\vx^i = \bar \mLambda \vp^i$.
    \label{ln:cp-sub}
    \vspace{1.5pt}
    \State If $l^*(\vz^i) - l(\vx^i; \vp^i)$ is sufficiently small, break and set $\vz^* = \vz^i$.
   \EndFor
}
\end{algorithmic}
\end{algorithm}

{ When dealing with a toy problem like Braess, the CP algorithm can be implemented without the need to dynamically generate cuts since the vertices of  $\gX$ can be easily enumerated.  However, applying Algorithm \ref{alg:cp} directly to larger networks presents significant challenges for two main reasons. First, the master problem is a highly nonconvex and nonlinear program, making it difficult to solve globally and exactly. Second, the constraint $\vp^* \in \gP = \{\vp \in \sR_+^{|\gK|}: \mSigma \vp = \vone\}$ in the master problem requires route enumeration to create the OD-route incidence matrix $\mSigma$. %

\subsection{Application-specific heuristics}
\label{app:dedicated-heuristics}

\subsubsection{Heuristics for CNDPs}
\label{app:heuristics-cndp}
We implemented two heuristic methods: IOA and SO.}

\begin{algorithm}[ht]
   \caption{IOA algorithm for solving CNDPs.}
   \label{alg:ioa}
\begin{algorithmic}[1]
{\footnotesize
   \State {\bfseries Input:} $\vz^0 \in \gZ$, threshold values $\varepsilon > 0$ and $\xi > 0$.
   \vspace{1.5pt}
   \For{$i = 1, 2, \ldots$}
   \vspace{1.5pt}
    \State \textbf{The assignment step}: solve a $\vp^i \in \gP^*(\vz^{i - 1})$ via iGP, terminating which when $\delta(\vp^i; \vz^{i - 1}) < \varepsilon$; set $\vx^i = \bar \mLambda \vp^i$.
    \vspace{1.5pt}
    \State \textbf{The optimization step}: solve $\vz^{i + 1} = \argmin_{\vz \in \gZ} l(\vx^i; \vz)$.
    \vspace{1.5pt}
    \State If $\| \vz^{i} - \vz^{i - 1}\|_{\infty} < \xi$ converges, break and set $\tilde \vz = \vz^i$.
   \EndFor
}
\end{algorithmic}
\end{algorithm}

{
\noindent
\textbf{IOA.} As described in Algorithm \ref{alg:ioa}, our implementation solves the corresponding WE via the iGP algorithm. The optimization step minimizes $l(\vx^i; \vz)$ over $\vz \in \gZ$, a nonlinear program with simple box constraints that can be handled by standard nonlinear program solvers.}

\noindent
{ \textbf{SO}. The SO algorithm accepts $\tilde \vz$ from the following optimization problem  as an approximate solution
\begin{equation}
    (\hat \vz, \hat \vp) \in \argmin_{\vz \in \gZ, \ \vp \in \gP} l(\vx; \vz), \quad \text{s.t.}~\vx = \bar \mLambda \vp.
\label{eq:so}
\end{equation}
On large networks, we find solving \eqref{eq:so} with commercial convex program solvers is not efficient. This is because specifying the constraint $\vp^* \in \gP = \{\vp \in \sR_+^{|\gK|}: \mSigma \vp = \vone\}$ requires route enumeration. To improve efficiency, we employed a coordinate descent algorithm (see Algorithm \ref{alg:so}), which iterates between two steps.
\begin{itemize}
    \item \textbf{The assignment step}: given $\vz^{i - 1} \in \gZ$, find $\vp^i \in \gP$ that minimizes $l(\vx; \vz^{i - 1})$, where $\vx = \bar \mLambda \vp^i$. To see how this sub-problem can be transformed into a standard traffic assignment problem, we note that
    $
        l(\vx; \vz^{i - 1}) = \langle \vx, u(\vx; \vz^i) \rangle + \beta \cdot m(\vz^i),
    $
    which indicates that the sub-problem is equivalent to an SO traffic assignment problem (SO-TAP) that aims to find $\vp \in \gP$ to minimize the total travel time $\langle \vx, u(\vx; \vz^i) \rangle$. The resulting SO-TAP may be further transformed into an equivalent WE traffic assignment problem (WE-TAP) by adding marginal costs to the link cost functions \citep{sheffi1985urban},  which can be readily and efficiently solved via the iGP algorithm. Let us denote the solution set to this WE-TAP as $\gP_{\text{so}}^*(\vz^i)$.
    \item \textbf{The optimization step}: given $\vx^i = \bar \mLambda \vp^i$, find $\vz^i \in \gZ$ that minimizes $l(\vx^i; \vz)$, identical to the optimization step in the IOA algorithm, which can be solved via off-the-shelf solvers.
\end{itemize}

}
\begin{algorithm}[ht]
   \caption{SO algorithm for solving CNDPs.}
   \label{alg:so}
\begin{algorithmic}[1]
{\footnotesize
   \State {\bfseries Input:} $\vz^0 \in \gZ$, threshold values $\varepsilon > 0$ and $\xi > 0$.
   \vspace{1.5pt}
   \For{$i = 1, 2, \ldots$}
   \vspace{1.5pt}
   \State \textbf{The assignment step}: solve a $\vp^i \in \gP_{\text{so}}^*(\vz^{i - 1})$ via iGP, terminating which when $\delta(\vp^i; \vz^{i - 1}) < \varepsilon$; set $\vx^i = \bar \mLambda \vp^i$.
   \vspace{1.5pt}
    \State \textbf{The optimization step}: solve $\vz^{i =} \in \argmin_{\vz \in \gZ} l(\vx^i; \vz)$.
     \vspace{1.5pt}
    \State If $\| \vz^{i} - \vz^{i - 1}\|_{\infty} < \xi$, break and set $\tilde \vz = \vz^i$.
   \EndFor
}
\end{algorithmic}
\end{algorithm}

{

\noindent
\subsubsection{Heuristics for SCTPs}
\label{app:h-sctp}

For SCTPs, we focus on three heuristics proposed by \citet{harks2015computing}, i.e., MCT, EMCDT, and CT. As mentioned in Section \ref{sec:heuristics}, all three algorithms require first obtaining the system-optimal (SO) link flow pattern $\vx_{\text{so}} = \argmin_{\vx \in \gX}~\langle \vx, u_{\text{time}}(\vx)\rangle.$

\noindent
\textbf{rMCT}.  Algorithm \ref{alg:mct} presents a modified version of MCT, referred to as rMCT in the paper. The only difference between MCT and rMCT lies in Line \ref{ln:decrease}: in MCT, if the condition $\evx_a^i  \leq \evx_{\text{so},a}$ is met, it sets
$
    \evz_a^{i + 1} = \max\{0,  \evz_a^{i} - |\evz_a^{i} - \evz_a^{i - 1}|\},
$
i.e., $\evz_a^{i}$ is decreased by $|\evz_a^{i} - \evz_a^{i - 1}|$, the absolute difference between the objective values in consecutive iterations. The problem is that 
in the first iteration ($i = 1$), this value does not exist since there is no iteration 0.  
In our implementation, instead of reducing $z_a^i$ by the difference, we simply reduce it by a constant that diminishes gradually. We found that rMCT converges smoothly and delivers better solutions than the original version.}

\begin{algorithm}[ht]
   \caption{The revised MCT (rMCT) algorithm for solving SCTPs.}
   \label{alg:mct}
\begin{algorithmic}[1]
{\footnotesize
   \State {\bfseries Input:} $\vz^1 = (\evz_a^1)_{a \in \gA}$ such that $\evz_a^1 = u_{\text{time}, a}'(\evx_{\text{so}, a}) \cdot \evx_{\text{so},a}$ for all $a \in \tilde \gA$ and $\evz_a^1 = 0$ for all $a \in \gA \setminus \tilde \gA$.
   \vspace{1.5pt}
   \For{$i = 1, 2, \ldots$}
   \vspace{1.5pt}
    \State Solve a $\vp^i \in \gP^*(\vz^i)$ via iGP, terminating which when $\delta(\vp^i; \vz^i) < \varepsilon$. Set $\vx^i = \bar \mLambda \vp^i$.
    \State For all $a \in \tilde \gA$, set
    \begin{align*}
        \evz_a^{i + 1} = 
        \begin{cases}
        \evz_a^{i} + 0.9^{i - 1} \cdot u_{\text{time}, a}'(\evx_{a}^i) \cdot \evx_{a}^i, \quad \text{if}~\evx_a^i  > \evx_{\text{so},a} \\
        \max\{0,  \evz_a^{i} - 0.9^{i - 1} \cdot \delta\}, \quad \text{if}~\evx_a^i  \leq \evx_{\text{so},a}.        \end{cases}
    \end{align*}
    \label{ln:decrease}
    \State If $\| \vz^{i} - \vz^{i - 1}\|_{\infty} < \xi$, break and set $\tilde \vz = \vz^i$.
   \EndFor
}
\end{algorithmic}
\end{algorithm}

{
\vspace{0.05in}
\noindent
\textbf{EMCDT and CT.}
The implementation of EMCDT and CT is given  in Algorithms \ref{alg:emcdt} and \ref{alg:ct}, respectively. The reader is referred to \citet{harks2015computing} for the rationale behind these heuristics.

}

\begin{figure}[ht]
\vspace{-0.1in}
\begin{algorithm}[H]
   \caption{The EMCDT algorithm for solving SCTPs.}
   \label{alg:emcdt}
\begin{algorithmic}[1]
{\footnotesize
   \State {\bfseries Input:} $\delta > 0$, $\vz^1 = (\evz_a^1)_{a \in \gA}$ such that $\evz_a^1 = \max\{\delta, u_{\text{time}, a}'(\evx_{\text{so}, a}) \cdot \evx_{\text{so},a}\}$ for all $a \in \tilde \gA$ and $\evz_a^1 = 0$ for all $a \in \gA \setminus \tilde \gA$.
   \vspace{1.5pt}
   \For{$i = 1, 2, \ldots$}
   \vspace{1.5pt}
    \State Solve a $\vp^i \in \gP^*(\vz^i)$ via iGP, terminating which when $\delta(\vp^i; \vz^i) < \varepsilon$. Set $\vx^i = \bar \mLambda \vp^i$.
    \vspace{1.5pt}
   \State Set $\alpha = \max_{a \in \tilde \gA} u_{\text{time}, a}'(\evx_{a}^i) \cdot \evx_{a}^i$. 
   \vspace{1.5pt}
    \State For all $a \in \tilde \gA$, set $\evz_a^{i + 1} = \evz_a^{i} \cdot \exp\big(\frac{0.9^{i - 1}}{\max \{1,  \alpha\}} \cdot (u_{\text{time}, a}'(\evx_{a}^i) \cdot \evx_{a}^i - u_{\text{time}, a}'(\evx_{\text{so}, a}) \cdot \evx_{\text{so},a})\big)$.
    \vspace{1.5pt}
    \State If $\vz^i$ converges, break and set $\tilde \vz = \vz^i$.
   \EndFor
}
\end{algorithmic}
\end{algorithm}
\begin{algorithm}[H]
   \caption{The CT algorithm for solving SCTPs.}
   \label{alg:ct}
\begin{algorithmic}[1]
{\footnotesize
   \State {\bfseries Input:} $\delta > 0$, $\tilde \vz = \vzero$.
   \vspace{1.5pt}
   \While{$\tilde \gA \neq \emptyset$}
   \vspace{1.5pt}
    \State Solve a $\vp^i \in \gP^*(\vz^i)$ via iGP, terminating which when $\delta(\vp^i; \vz^i) < \varepsilon$. Set $\vx^i = \bar \mLambda \vp^i$.
    \vspace{1.5pt}
    \State Find $a' = \argmax_{a \in \tilde \gA} u_{\text{time}, a}'(\evx_{a}^i) \cdot \evx_{a}^i$. If $\evx_{a'} > \evx_{\text{so}, a'}$, set $\evz_{a'} = \evz_{a'} + \delta$; otherwise, remove $a'$ from $\tilde \gA$.
   \EndWhile
}
\end{algorithmic}
\end{algorithm}
\end{figure}

\end{document}